\newtheorem{lemma}{Lemma}
\newtheorem{definition}{Definition}
\newtheorem{theorem}{Theorem}
\newtheorem{proposition}{Proposition}
\newcommand{\tw}{{\mathbf{tw}}}
\newcommand{\sat}{\text{\normalfont sat}}
\newcommand{\nksat}{$(\nu(F)+k)$-{\sc SAT}}
\newtheorem{krule}{Reduction Rule}
\newtheorem{brule}{Branching Rule}
\begin{document}

\title{Fixed-parameter tractability of satisfying beyond the number of variables\footnote{A preliminary version of this paper appeared in SAT 2012, Lect. Notes Comput. Sci. 7317 (2012), 341--354.}}

\author{Robert Crowston\thanks{Royal Holloway, University of London, Egham,
Surrey, UK} \and Gregory Gutin\footnotemark[2] \and Mark Jones\footnotemark[2] \and Venkatesh Raman\thanks{The Institute of Mathematical Sciences,
Chennai 600 113, India} \and Saket Saurabh\footnotemark[3] \and Anders Yeo\thanks{University of Johannesburg,
Auckland Park, 2006 South Africa}}

%\author{Robert Crowston\inst{1} \and Gregory Gutin\inst{1} \and Mark Jones\inst{1} \and Venkatesh Raman\inst{2} \and Saket Saurabh\inst{2} \and Anders %Yeo\inst{3}}

%\institute{Royal Holloway, University of London,
%Egham, Surrey TW20 0EX, UK
%%\\ \email{robert|gutin|markj@cs.rhul.ac.uk}
%\and The Institute of Mathematical Sciences,
%Chennai 600 113, India
%%\\ \email{vraman|saket@imsc.res.in}
%\and University of Johannesburg,
%Auckland Park, 2006 South Africa
%%\\ \email{anders.yeo.work@gmail.com}
%}

\date{}
\maketitle

\begin{abstract}
We consider a CNF formula $F$ as a multiset of clauses: $F=\{c_1,\ldots, c_m\}$. The set of variables of $F$ will be denoted by $V(F)$.
Let $B_F$ denote the bipartite graph with partite sets $V(F)$ and $F$ and with an edge between $v \in V(F)$ and $c \in F$ if $v \in c$ or $\bar{v} \in c$. The matching number $\nu(F)$ of $F$ is the size of a maximum matching in $B_F$. In our main result, we prove that the following parameterization of {\sc MaxSat} (denoted by $(\nu(F)+k)$-\textsc{SAT}) is fixed-parameter tractable: Given a formula $F$, decide whether we can satisfy at least $\nu(F)+k$ clauses in $F$, where $k$ is the parameter.

A formula $F$ is called variable-matched if $\nu(F)=|V(F)|.$ Let $\delta(F)=|F|-|V(F)|$ and $\delta^*(F)=\max_{F'\subseteq F} \delta(F').$
Our main result implies fixed-parameter tractability of {\sc MaxSat} parameterized by $\delta(F)$ for
variable-matched formulas $F$; this complements related results of Kullmann (2000) and Szeider (2004) for {\sc MaxSat} parameterized by $\delta^*(F)$.

%A formula $F$ is minimal unsatisfiable if it is unsatisfiable but $F\setminus c$ is satisfiable for every clause $c\in F$. For every minimal %unsatisfiable formula $F$, it is well-known that $\delta(F)=\delta^*(F).$ Thus, Szeider's result implies that recognition of minimal unsatisfiable %formulas $F$ with $\delta(F)\le k$ is fixed-parameter tractable. Since every minimal unsatisfiable formula is variable-matched, our main result leads %to the same conclusion.

To obtain our main result, we reduce $(\nu(F)+k)$-\textsc{SAT} into the following parameterization of the {\sc Hitting Set} problem (denoted by $(m-k)$-{\sc Hitting Set}): given a collection $\cal C$ of $m$ subsets of a ground set $U$ of $n$ elements, decide whether there is $X\subseteq U$ such that $C\cap X\neq \emptyset$ for each $C\in \cal C$ and $|X|\le m-k,$ where $k$ is the parameter. Gutin, Jones and Yeo (2011) proved that $(m-k)$-{\sc Hitting Set} is fixed-parameter tractable by obtaining an exponential kernel for the problem. We obtain two algorithms for $(m-k)$-{\sc Hitting Set}:
a deterministic algorithm of runtime $O((2e)^{2k+O(\log^2 k)} (m+n)^{O(1)})$ and a randomized algorithm of expected runtime $O(8^{k+O(\sqrt{k})} (m+n)^{O(1)})$. Our deterministic algorithm improves an algorithm that follows from the kernelization result of Gutin, Jones and Yeo (2011).
\end{abstract}

% Independently, Kullmann~[\,{\em An Application of Matroid Theory to the SAT Problem}, CCC 2000]
% and Fleischner and Szeider~[\,{\em Polynomial-Time Recognition of Minimal Unsatisfiable Formulas with Fixed Clause-Variable Difference}, ECCC
% minimal unsatisfiable formulas with $n$ variables and $n+k$ clauses can be recognized in time  .
%
% It was shown recently that minimal unsatisfiable formulas with n variables and n+k clauses can be recognized in time.
\section{Introduction}

In this paper we study a parameterization of {\sc MaxSat}. We consider a CNF formula $F$ as a multiset of clauses: $F=\{c_1,\ldots, c_m\}$. (We allow repetition of clauses.) We assume that no clause contains both a variable and its negation, and no clause is empty. The set of variables of $F$ will be denoted by $V(F)$,
and for a clause $c$, $V(c)=V(\{c\}).$
A {\em truth assignment} is a function $\tau : V(F) \rightarrow \{ \textsc{true, false} \}$.
A truth assignment $\tau$ \emph{satisfies} a clause $C$ if there exists $x \in V(F)$ such that $x \in C$ and
$\tau(x)=$ \textsc{true}, or $\bar{x} \in C$ and $\tau(x)=$  \textsc{false}.
We will denote the number of clauses in $F$ satisfied by $\tau$ as ${\rm sat}_\tau(F)$
and the maximum value of ${\rm sat}_\tau(F)$, over all $\tau$, as ${\rm sat}(F)$.

Let $B_F$ denote the bipartite graph with partite sets $V(F)$ and $F$ with an edge between $v \in V(F)$ and $c \in F$ if $v \in V(c)$.
The {\em matching number} $\nu(F)$ of $F$ is the size of a maximum matching in $B_F$.
Clearly, ${\rm sat}(F) \ge \nu(F)$ and this lower bound for ${\rm sat}(F)$ is tight as there are formulas $F$ for which ${\rm sat}(F) = \nu(F).$

In this paper we study the following parameterized problem, where the parameterization is above a tight lower bound.

\begin{center}
\fbox{~\begin{minipage}{0.9\textwidth}
$(\nu(F)+k)$-\textsc{SAT}

\emph{Instance}: A CNF formula $F$ and a positive integer $\alpha$.

%\smallskip

\emph{Parameter}: $k=\alpha -\nu(F)$.

%\smallskip

\emph{Question}: Is $\sat(F) \ge \alpha $?
\end{minipage}~}
\end{center}

A natural and well-studied parameter in most optimization problems is the size of the solution. In particular, for {\sc MaxSat}, the standard parameterized problem is whether $\sat(F)\ge k$ for a CNF formula $F$. Using a simple observation that $\sat(F)\ge m/2$ for every CNF formula $F$ on $m$ clauses, Mahajan and Raman \cite{MR99} showed that this problem is fixed-parameter tractable. The tight bound $\sat(F)\ge m/2$ on $\sat(F)$ means that the problem is interesting only when $k > m/2$, i.e., when the values of $k$ are relatively large. To remedy this situation, Mahajan and Raman introduced, and showed fixed-parameter tractable, a more natural parameterized problem: whether the given CNF formula has an assignment satisfying at least $m/2+k$ clauses.
Since this pioneering paper \cite{MR99}, researchers have studied numerous problems parameterized above tight bounds including a few such parameterizations of {\sc MaxSat} \cite{AloGutKimSzeYeo11,CroGutJonYeo12,GutJonYeo11}, all stated in or inspired by Mahajan {\em et al.} \cite{MahajanRamanSikdar09}.
Like the parameterizations in \cite{AloGutKimSzeYeo11,CroGutJonYeo12,GutJonYeo11}, $(\nu(F)+k)$-\textsc{SAT} will be proved fixed-parameter tractable, but unlike them, $(\nu(F)+k)$-\textsc{SAT} will be shown to have no polynomial-size kernel unless coNP$\subseteq$NP/poly, which is highly unlikely  \cite{BDFH09}.

In our main result, we show that $(\nu(F)+k)$-\textsc{SAT} is fixed-parameter tractable by obtaining an algorithm with
running time
% $29.557^{k} (n+m)^{O(1)}$,
%$(2e)^{2k + O((\log  k)^2)} (m+n)^{O(1)}$,
 $O((2e)^{2k + O(\log^2 k)}(n+m)^{O(1)})$,
where $e$ is the base of the natural logarithm. (We provide basic definitions on parameterized algorithms and complexity, including kernelization, in the next section.) We also develop a randomized algorithm for $(\nu(F)+k)$-\textsc{SAT} of expected runtime $O(8^{k+O(\sqrt{k})} (m+n)^{O(1)})$.

The {\em deficiency} $\delta(F)$ of a formula $F$ is $|F|-|V(F)|$; the {\em maximum deficiency} $\delta^*(F)=\max_{F'\subseteq F} \delta(F').$
A formula $F$ is called {\em variable-matched} if $\nu(F)=|V(F)|.$ Our main result implies fixed-parameter tractability of {\sc MaxSat} parameterized by  $\delta(F)$ for variable-matched formulas $F$.

There are two related results: Kullmann \cite{Kullmann00} obtained an $O(n^{O(\delta^*(F))})$-time algorithm for solving {\sc MaxSat} for formulas $F$ with $n$ variables and Szeider \cite{Szeider04} gave an $O(f(\delta^*(F))n^4)$-time algorithm for the problem, where $f$ is a function depending on $\delta^*(F)$ only.
%proved fixed-parameter tractability of the satisfiability problem parameterized by $\delta^*(F)$.
Note that we cannot just drop the condition of being variable-matched from our result and expect a similar algorithm: it is not hard to see that the satisfiability problem remains NP-complete for formulas $F$ with $\delta(F)=0$.

A formula $F$ is {\em minimal unsatisfiable} if it is unsatisfiable but $F\setminus c$ is satisfiable for every clause $c\in F$.
Papadimitriou and Wolfe~\cite{PapadimitriouW88} showed that recognition of minimal unsatisfiable CNF formulas is complete for the complexity class\footnote{$D^P$ is the class of problems that can be considered as the difference of two NP-problems; clearly $D^P$  contains all NP
and all co-NP problems} $D^P$.
Kleine B\"{u}ning~\cite{Buning00} conjectured that for a fixed integer $k$, it can be decided in polynomial time whether a formula $F$
with $\delta(F)\leq k$ is minimal unsatisfiable. Independently,
Kullmann~\cite{Kullmann00} and  Fleischner and Szeider~\cite{ECCC-TR00-049} (see also \cite{FleischnerKS02}) resolved this conjecture by showing that
minimal unsatisfiable formulas with $n$ variables and $n+k$ clauses can be recognized
in $n^{O(k)}$ time. Later,  Szeider~\cite{Szeider04} showed that the problem is fixed-parameter tractable by obtaining an algorithm of running time $O(2^kn^4)$.
Note that Szeider's results follow from his results mentioned in the previous paragraph and the well-known fact that $\delta^*(F)=\delta(F)$ holds for every minimal unsatisfiable formula $F$. Since every minimal unsatisfiable formula is variable-matched \cite{AhaLin86}, our main result also implies
fixed-parameter tractability of recognizing minimal unsatisfiable formula with $n$ variables and $n+k$ clauses, parameterized by $k.$

%minimal unsatisfiable formula recognition.

%To prove our main result we obtain an $(2e)^k (m+n)^{O(1)}$-time algorithm for the following parameterization of the Hitting Set problem: given a collection %$\cal C$ of $m$ subsets of a ground set $U$, decide whether there is $X\subseteq U$ such that $C\cap X\neq \emptyset$ for each $C\in \cal C$ and $|X|\le m-k,$ %where $k$ is the parameter. This improves an algorithm that follows from a kernelization result of Gutin, Jones and Yeo (Theor. Comput. Sci., vol. 412, 2011).

%Our algorithm also improves (for variable-matched formulas) an $n^{O(k)}$-time algorithm of
%Kullmann~\cite{Kullmann00} which finds a maximum-size satisfiable
%subformula in a CNF formula with $n$ variables and $n+k$ clauses.
%Indeed, given a variable-matched formula $F$ we can find a maximum-size satisfiable subformula by applying our algorithm for
%$\alpha=n+1$ to the least integer $n+q$ such that  $(|M|+k)$-\textsc{SAT} returns {\sc yes} for $n+q-1$ and {\sc no} for $n+q$. Thus, in at most
%$(2e)^{2q+1} n^{O(1)}$ time we can find a maximum sized satisfiable subformula of $F$ and $q\leq k$.

To obtain our main result, we introduce some reduction rules and branching steps and reduce the problem to a
parameterized version of {\sc Hitting Set}, namely, {\sc $(m-k)$-Hitting Set} defined below. Let $H$ be a hypergraph. A set $S\subseteq V(H)$ is called a {\em hitting set} if $e\cap S \neq \emptyset$ for all $e\in E(H)$.

\begin{center}
\fbox{~\begin{minipage}{.9\textwidth}
$(m-k)$-\textsc{Hitting Set}

\emph{Instance}: A  hypergraph $H$ ($n=|V(H)|,\ m=|E(H)|)$ and a positive  integer $k$.

%\smallskip

\emph{Parameter}: $k$.

%\smallskip

\emph{Question}: Does there exist a hitting set $S\subseteq V(H)$ of size $m-k$?
\end{minipage}~}
\end{center}

Gutin {\em et al.}~\cite{GutinJY11} showed that $(m-k)$-\textsc{Hitting Set} is fixed-parameter tractable by obtaining a kernel for the problem. The kernel result immediately implies a $2^{O(k^2)} (m+n)^{O(1)}$-time algorithm for the problem.
Here we obtain a faster algorithm for this problem that runs in
% $29.557^k (m+n)^{O(1)}$
%$(2e)^{2k + O((\log  k)^2)} (m+n)^{O(1)}$
$O((2e)^{2k + O(\log^2 k)} (m+n)^{O(1)})$
time using the color-coding technique. This happens to be the dominating step for solving the $(\nu(F)+k)$-\textsc{SAT} problem.
We also obtain a randomized algorithm for $(m-k)$-\textsc{Hitting Set} of expected runtime $O(8^{k+O(\sqrt{k})} (m+n)^{O(1)})$.
To obtain the randomized algorithm, we reduce $(m-k)$-\textsc{Hitting Set} into a special case of the {\sc Subgraph Isomorphism} problem and use
a recent randomized algorithm of Fomin {\em et al.} \cite{FominLRSR12} for {\sc Subgraph Isomorphism}.

It was shown in \cite{GutinJY11} that the
$(m-k)$-\textsc{Hitting Set} problem cannot have a kernel whose size is polynomial in $k$ unless NP $\subseteq$ coNP/poly.
In this paper, we give a parameter preserving reduction from this problem to the $(\nu(F)+k)$-\textsc{SAT} problem, thereby showing that
$(\nu(F)+k)$-\textsc{SAT} problem has no polynomial-size kernel unless NP $\subseteq$ coNP/poly.

\paragraph{\bf Organization of the rest of the paper.} In Section \ref{sec:prelim}, we provide additional terminology and notation and some preliminary results. In Section \ref{sec:preproc}, we give a sequence of polynomial time preprocessing rules on the given input of $(\nu(F)+k)$-\textsc{SAT} and justify their correctness. In Section \ref{sec:branch}, we give two simple branching rules and reduce the resulting input to a
$(m-k)$-\textsc{Hitting Set} problem instance. Section \ref{sec:m-khs} gives an improved fixed-parameter algorithm for $(m-k)$-\textsc{Hitting Set}
using color coding. There we also obtain a faster randomized algorithm for $(m-k)$-\textsc{Hitting Set}.
Section \ref{sec:complete} summarizes the entire algorithm for the $(\nu(F)+k)$-\textsc{SAT} problem, shows its correctness and analyzes its running time. Section \ref{sec:nokern} proves the hardness of kernelization result. Section \ref{sec: concl} concludes with some remarks.

\section{Additional Terminology, Notation and Preliminaries}
\label{sec:prelim}

\noindent{\bf Graphs and Hypergraphs.} %For a graph $G$, $\nu(G)$ will denote the size of a maximum matching in $G$.
For a subset $X$ of vertices of a graph $G$, $N_G(X)$ denotes the set of all neighbors of vertices in $X$. When $G$ is clear from the context, we write $N(X)$ instead of $N_G(X)$. A matching {\em saturates} all end-vertices of its edges.
For a bipartite graph $G=(V_1, V_2;E)$, the classical Hall's matching theorem states that $G$ has a matching that saturates every vertex of $V_1$ if and only if $|N(X)| \geq |X|$ for every subset $X$ of $V_1$.
The next lemma follows from Hall's matching theorem: add $d$ vertices to $V_2$, each adjacent to every vertex in $V_1$.

\begin{lemma}\label{lem:dless}
Let $G=(V_1,V_2;E)$ be a bipartite graph, and suppose that
for all subsets $X\subseteq V_1$, $|N(X)|\geq |X|-d$ for some $d\geq 0$. Then $\nu(G)\ge |V_1|-d.$
\end{lemma}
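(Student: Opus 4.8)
The plan is to reduce the statement to the classical Hall matching theorem by a padding construction, exactly as the hint preceding the lemma suggests. First I would construct an auxiliary bipartite graph $G' = (V_1, V_2'; E')$ from $G$ by adding a set $D$ of $d$ new vertices to the side $V_2$, so that $V_2' = V_2 \cup D$, and joining every vertex of $D$ to every vertex of $V_1$; the original edges $E$ are kept, so $E' = E \cup (V_1 \times D)$. The goal is to show $G'$ has a matching saturating all of $V_1$, and then to ``subtract off'' the at most $d$ edges of the matching that use vertices of $D$.

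The key step is to verify Hall's condition in $G'$. Take any $X \subseteq V_1$. If $X = \emptyset$ the condition is trivial, so assume $X \neq \emptyset$. Then every vertex of $D$ is a neighbor of $X$ in $G'$, so $N_{G'}(X) = N_G(X) \cup D$, and since $N_G(X)$ and $D$ are disjoint we get $|N_{G'}(X)| = |N_G(X)| + d \ge (|X| - d) + d = |X|$, using the hypothesis. (When $d=0$ the hypothesis already gives $|N_G(X)| \ge |X|$ directly, and the argument still goes through with $D = \emptyset$.) Hence $G'$ satisfies Hall's condition for $V_1$, so by Hall's theorem there is a matching $M'$ in $G'$ saturating every vertex of $V_1$; in particular $|M'| = |V_1|$.

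Finally I would restrict $M'$ back to $G$. Let $M = M' \cap E$, i.e. the edges of $M'$ that do not touch $D$. Since $|D| = d$, at most $d$ edges of $M'$ are incident to a vertex of $D$, so $|M| \ge |M'| - d = |V_1| - d$. As $M \subseteq E$ and $M'$ is a matching, $M$ is a matching in $G$, giving $\nu(G) \ge |M| \ge |V_1| - d$, as required. I do not expect any real obstacle here; the only point requiring a little care is the edge case $X = \emptyset$ (and, relatedly, the case $d = 0$), which is why the computation $N_{G'}(X) = N_G(X) \cup D$ is stated only for nonempty $X$.
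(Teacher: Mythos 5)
Your proof is correct and follows exactly the construction the paper itself indicates (adding $d$ new vertices to $V_2$ adjacent to all of $V_1$, applying Hall's theorem, and discarding the at most $d$ matching edges that meet the new vertices); you have simply written out the details the paper leaves implicit.
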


We say that a bipartite graph $G=(A,B;E)$ is {\em $q$-expanding}
if for all $A'\subseteq A$, $|N_G(A')|\geq |A'|+q$. Given a matching $M$, an {\em alternating path} is a path in
which the edges belong alternatively to $M$ and not to $M$.

A {\em hypergraph} $H = (V(H), {\cal F})$ consists of a nonempty set $V(H)$ of {\em vertices} and a family $\cal F$ of nonempty subsets of
$V$ called {\em edges} of $H$ (${\cal F}$ is often denoted $E(H)$). Note that $\cal F$ may have {\em parallel} edges, i.e., copies of the same subset
of $V(H).$ For any vertex $v \in V(H)$, and any ${\cal E} \subseteq {\cal F}$, ${\cal E}[v]$ is the set of edges in
${\cal E}$ containing $v$, $N[v]$ is the set of all vertices contained in edges of ${\cal F}[v]$, and
the {\em degree} of $v$ is $d(v) = |{\cal F}[v]|$.  For a subset $T$ of vertices, ${\cal F}[T]=\bigcup_{v\in T}{\cal F}[v].$

\medskip

%\noindent{\bf $q$-expansion}
\noindent{\bf CNF formulas.}
For a subset $X$ of the variables of CNF formula $F$, $F_X$ denotes the subset of $F$ consisting of all clauses $c$ such that $V(c)\cap X\neq \emptyset.$ A formula $F$ is called {\em $q$-expanding} if $|X|+q\le |F_X|$ for each $X\subseteq V(F)$. Note that, by Hall's matching theorem, a formula is variable-matched if and only if it is 0-expanding. Clearly, a formula $F$ is $q$-expanding if and only if $B_F$ is $q$-expanding.

For $x\in V(F)$, $n(x)$ and $n(\bar{x})$ denote the number of clauses containing $x$ and the number of clauses containing $\bar{x},$ respectively.

A function $\pi :\ U \rightarrow \{ \textsc{true, false} \}$, where $U$ is a subset of $V(F)$, is called a {\em partial truth assignment}.
A partial truth assignment $\pi : U \rightarrow \{ \textsc{true, false} \}$ is an {\em autarky} if $\pi$ satisfies all clauses of $F_U$.
We have the following:

\begin{lemma}[\cite{CroGutJonYeo12}] \label{lem:aut}
 Let $\pi : U \rightarrow \{ \textsc{true, false} \}$ be an autarky for a CNF formula $F$ and let $\gamma$ be any truth assignment on
 $V(F)\setminus U$. Then for the combined assignment $\tau := \pi\cup \gamma$,
 it holds that
 ${\rm sat}_{\tau}(F)=|F_U|+{\rm sat}_{\gamma}(F\setminus F_U)$.
 Clearly, $\tau$ can be constructed in polynomial time given $\pi$ and $\gamma$.
\end{lemma}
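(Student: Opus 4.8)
The plan is to split the clause multiset $F$ into the two parts $F_U$ and $F\setminus F_U$ and argue about each separately. This is the natural decomposition, since by definition every clause of $F_U$ contains a variable of $U$, while every clause of $F\setminus F_U$ contains only variables of $V(F)\setminus U$; moreover $F_U$ and $F\setminus F_U$ partition $F$ as multisets, so the counts of satisfied clauses in the two parts will add up with no overlap and no omission.

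First I would handle $F_U$. By the definition of an autarky, $\pi$ satisfies every clause of $F_U$. Since $\tau=\pi\cup\gamma$ agrees with $\pi$ on $U$, and a clause is satisfied as soon as one of its literals is satisfied, $\tau$ also satisfies every clause of $F_U$. Hence exactly $|F_U|$ clauses of $F_U$ are satisfied by $\tau$.

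Next I would handle $F\setminus F_U$. Every clause $c$ in this part satisfies $V(c)\cap U=\emptyset$, i.e.\ $V(c)\subseteq V(F)\setminus U$, so whether $\tau$ satisfies $c$ is determined entirely by the restriction of $\tau$ to $V(F)\setminus U$, which is precisely $\gamma$. Therefore $\tau$ satisfies exactly the same sub-multiset of $F\setminus F_U$ as $\gamma$ does, and the number of clauses of $F\setminus F_U$ satisfied by $\tau$ equals ${\rm sat}_\gamma(F\setminus F_U)$. Adding the two counts yields ${\rm sat}_\tau(F)=|F_U|+{\rm sat}_\gamma(F\setminus F_U)$. The last assertion of the lemma is immediate: $\tau$ is the function whose graph is the union of the graphs of $\pi$ and $\gamma$, which can be written down in linear time.

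There is essentially no real obstacle here; the only points requiring a little care are the multiset bookkeeping (clauses may be repeated, so "number of satisfied clauses" must be counted with multiplicity) and the observation that $F_U$ and $F\setminus F_U$ form a genuine partition of the multiset $F$, which is what lets the two partial counts combine into ${\rm sat}_\tau(F)$.
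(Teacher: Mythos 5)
Your proof is correct and complete: the decomposition of $F$ into $F_U$ and $F\setminus F_U$, with the autarky property handling the first part and the disjointness of variables handling the second, is exactly the standard argument for this fact. The paper itself states the lemma without proof (citing \cite{CroGutJonYeo12}), and your argument is precisely the one intended there.
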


Autarkies were first introduced in \cite{MS1985}; they
are the subject of much study, see, e.g., \cite{FleischnerKS02,Kul03,Szeider04}, and see \cite{BunKul09} for an overview.
%\cite{FleischnerKS02}, \cite{Kul03}, \cite{Szeider04}, and see \cite{BunKul09} for an overview.

\paragraph{\bf Treewidth.}  A {\em tree decomposition} of an (undirected) graph $G$ is a pair
$(U,T)$ where $T$ is a tree whose vertices we will call {\em
nodes} and $U=(\{U_{i} \mid i\in V(T)\})$ is a collection of
subsets of $V(G)$ such that
\begin{enumerate}
\item $\bigcup_{i \in V(T)} U_{i} = V(G)$,

\item for each edge $vw \in E(G)$, there is an $i\in V(T)$
such that $v,w\in U_{i}$, and

\item for each $v\in V(G)$ the set  $\{ i:\ v \in U_{i}
\}$ of nodes forms a subtree of $T$.
\end{enumerate}
The $U_i$'s are called {\em bags}.
The {\em width} of a tree decomposition $(\{ U_{i}:\ i \in V(T) \},
T)$ equals $\max_{i \in V(T)} \{|U_{i}| - 1\}$. The {\em treewidth} of
a graph $G$ is the minimum width over all tree decompositions of $G$.
We use notation $\tw(G)$ to denote the treewidth of a graph $G$.

\paragraph{\bf Parameterized Complexity.} A \emph{parameterized problem} is a subset $L\subseteq \Sigma^* \times
\mathbb{N}$ over a finite alphabet $\Sigma$. The unparameterized version of a parameterized problem $L$ is the language
$L^c = \{x \# 1^k | (x, k) \in L\}$. The problem $L$ is
\emph{fixed-parameter tractable} if the membership of an instance
$(x,k)$ in $\Sigma^* \times \mathbb{N}$ can be decided in time
$f(k)|x|^{O(1)},$ where $f$ is a function of the
{\em parameter} $k$ only~\cite{DowneyFellows99,FlumGrohe06,Niedermeier06}.
Given a parameterized problem $L$,
a \emph{kernelization of $L$} is a polynomial-time
algorithm that maps an instance $(x,k)$ to an instance $(x',k')$ (the
\emph{kernel}) such that (i)~$(x,k)\in L$ if and only if
$(x',k')\in L$, (ii)~ $k'\leq g(k)$, and (iii)~$|x'|\leq g(k)$ for some
function $g$. We call $g(k)$ the {\em size} of the kernel.
It is well-known \cite{DowneyFellows99,FlumGrohe06} that a decidable parameterized problem $L$ is fixed-parameter
tractable if and only if it has a kernel. Polynomial-size kernels are of
main interest, due to applications \cite{DowneyFellows99,FlumGrohe06,Niedermeier06}, but unfortunately not all fixed-parameter problems
have such kernels unless  coNP$\subseteq$NP/poly, see, e.g., \cite{BDFH09,BTY09,DLS09}.

\medskip

For a positive integer $q$, let $[q]=\{1,\ldots,q\}$.

\section{Preprocessing Rules}
\label{sec:preproc}

In this section we give preprocessing rules and their correctness.

Let $F$ be the given CNF formula on $n$ variables and $m$ clauses with a maximum matching $M$ on $B_F$,
the variable-clause bipartite graph corresponding to $F$. Let $\alpha$ be a given integer and recall that our goal is to
check whether $\sat(F) \ge \alpha$.
For each preprocessing rule below, we let $(F',\alpha')$ be the instance resulting by the application of the rule on $(F,\alpha)$.
We say that a rule is {\em valid} if $(F,\alpha)$ is a {\sc Yes} instance if and only if $(F',\alpha')$ a {\sc Yes} instance.

\begin{krule} \label{pureliteral}
Let $x$ be a variable such that $n(x)=0$ (respectively $n(\bar{x})=0$). Set $x=\textsc{false}$ ($x=\textsc{true}$) and remove all the clauses that contain $\bar{x}$ ($x$).
Reduce $\alpha$ by $n(\bar{x})$ (respectively $n(x)$).
\end{krule}

The proof of the following lemma is immediate.

\begin{lemma}
\label{lem:cor1}
If $n(x)=0$ (respectively $n(\bar{x})=0$) then $\sat(F) = \sat(F')+n(\bar{x})$
(respectively $\sat(F) = \sat(F')+n(x)$), and so Rule \ref{pureliteral} is valid.
%Reduction rule \ref{pureliteral} is valid.
\end{lemma}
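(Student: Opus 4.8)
The statement is essentially a one-variable instance of the autarky principle (Lemma~\ref{lem:aut}), so the plan is to invoke that lemma and then handle the one direction it does not give us directly. Assume $n(x)=0$ (the case $n(\bar x)=0$ is symmetric). First I would record the two structural facts on which everything rests: since no clause contains the literal $x$, the set $F_{\{x\}}$ of clauses meeting the variable $x$ consists precisely of the clauses containing $\bar x$, so $|F_{\{x\}}|=n(\bar x)$ and $F'=F\setminus F_{\{x\}}$ is a formula on $V(F)\setminus\{x\}$. Moreover the partial assignment $\pi:\{x\}\to\{\textsc{false}\}$ satisfies every clause of $F_{\{x\}}$ (each such clause contains $\bar x$), hence $\pi$ is an autarky for $F$.

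For the inequality $\sat(F)\ge \sat(F')+n(\bar x)$: take an optimal truth assignment $\gamma$ for $F'$ and apply Lemma~\ref{lem:aut} with this autarky $\pi$. The lemma gives, for $\tau:=\pi\cup\gamma$, that $\sat_\tau(F)=|F_{\{x\}}|+\sat_\gamma(F\setminus F_{\{x\}})=n(\bar x)+\sat(F')$, and since $\sat_\tau(F)\le \sat(F)$ this direction is done.

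For the reverse inequality $\sat(F)\le \sat(F')+n(\bar x)$: take an optimal assignment $\tau$ for $F$. If $\tau(x)=\textsc{true}$, flip it to \textsc{false}; because no clause contains the literal $x$, this flip cannot falsify any previously satisfied clause, so $\sat_\tau(F)$ does not decrease and we may assume $\tau(x)=\textsc{false}$. Writing $\gamma:=\tau\restriction_{V(F)\setminus\{x\}}$, so that $\tau=\pi\cup\gamma$, Lemma~\ref{lem:aut} again yields $\sat(F)=\sat_\tau(F)=n(\bar x)+\sat_\gamma(F')\le n(\bar x)+\sat(F')$. Combining the two inequalities gives $\sat(F)=\sat(F')+n(\bar x)$, and since $\alpha'=\alpha-n(\bar x)$ we get $\sat(F)\ge\alpha \iff \sat(F')\ge\alpha'$, i.e.\ Rule~\ref{pureliteral} is valid.

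There is no real obstacle here: the only point needing a word of care is that the autarky lemma only directly delivers the ``$\ge$'' direction, so one must separately argue that an optimal assignment for $F$ can be taken to extend $\pi$ — which is immediate because the literal $x$ occurs nowhere. I would simply present the argument above; alternatively one could give a completely self-contained proof by the same two-sided counting of satisfied clauses without citing Lemma~\ref{lem:aut}, but reusing the lemma is shorter.
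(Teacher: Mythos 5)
Your proof is correct. The paper itself gives no argument here --- it simply declares the proof of Lemma~\ref{lem:cor1} immediate --- so there is nothing to diverge from; the intended ``immediate'' argument is presumably the elementary two-sided count you mention at the end (flipping $x$ to \textsc{false} costs nothing since the literal $x$ occurs in no clause, and conversely any assignment to $F'$ extends by $x=\textsc{false}$ to pick up exactly the $n(\bar x)$ removed clauses). Your route through the autarky lemma is a clean formalization of the same idea and mirrors how the paper handles Reduction Rule~\ref{k1} via Lemma~\ref{lem:aut}; both directions, including the observation that the autarky lemma alone only yields the lower bound on $\sat(F)$, are handled correctly.
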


\begin{krule}
\label{twooccurrence}
Let $n(x)=n(\bar{x})=1$ and let $c'$ and $c''$ be the two
clauses containing $x$ and $\bar{x}$, respectively. Let $c^* = (c' - x) \cup
(c'' - \bar{x})$ and let $F'$ be obtained from $F$ be deleting $c'$ and $c''$ and adding the clause $c^*$.
Reduce $\alpha$ by $1$.
%Now continue the algorithm using $I'$ and with an unchanged $k$.
\end{krule}

\begin{lemma}
\label{lem:cor2}
For $F$ and $F'$ in Reduction Rule \ref{twooccurrence}, $\sat(F) = \sat(F')+1$, and so Rule \ref{twooccurrence} is valid.
\end{lemma}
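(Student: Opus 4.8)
The plan is to show the two inequalities $\sat(F)\le\sat(F')+1$ and $\sat(F)\ge\sat(F')+1$ separately, exploiting the fact that $x$ occurs exactly once positively (in $c'$) and exactly once negatively (in $c''$), so any truth assignment ``uses'' $x$ to satisfy at most one of $c',c''$. First I would fix notation: write $c'=x\vee A$ and $c''=\bar x\vee B$ where $A=c'-x$ and $B=c''-\bar x$ are (possibly empty) disjunctions over $V(F)\setminus\{x\}$, and $c^*=A\vee B$. Note that since no clause contains a variable and its negation, and after possibly invoking Rule \ref{pureliteral} we may assume $c^*$ is nonempty and contains no complementary pair — actually I should be slightly careful here: $c^*$ could in principle contain both $y$ and $\bar y$ for some $y\in V(A)\cap V(B)$, or could be empty if $A=B=\emptyset$; I would handle these by noting that if $c^*$ is a tautology it is always satisfied (and the counting still works), and if $A=B=\emptyset$ then $c',c''$ were unit clauses on $x$, exactly one is satisfiable, and $F'$ simply has one fewer clause, so $\sat(F')=\sat(F)-1$ trivially. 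In the main case $c^*$ is a genuine nonempty clause.

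For $\sat(F)\le\sat(F')+1$: take an optimal assignment $\tau$ for $F$; let $\gamma$ be its restriction to $V(F)\setminus\{x\}$, which is a valid assignment for $F'$. Every clause of $F\setminus\{c',c''\}=F'\setminus\{c^*\}$ is satisfied by $\tau$ iff it is satisfied by $\gamma$ (it contains no $x$). Among $c',c''$, at most one more fact needs accounting: if $\tau$ satisfies $c^*$ under $\gamma$ then $\gamma$ picks up that clause too, so $\sat_\gamma(F')\ge \sat_\tau(F\setminus\{c',c''\})+[\gamma\text{ sat }c^*]$. The key combinatorial claim is that $\sat_\tau(\{c',c''\})\le [\gamma\text{ satisfies }c^*]+1$: indeed, if $\gamma$ satisfies neither $A$ nor $B$, then whatever value $\tau$ gives $x$, it satisfies at most one of $c',c''$ (via the literal on $x$), giving $\le 1\le 0+1$; if $\gamma$ satisfies $A$ or $B$ then it satisfies $c^*$, and trivially $\sat_\tau(\{c',c''\})\le 2 = 1+1$. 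Combining, $\sat(F)=\sat_\tau(F)\le \sat_\gamma(F')+1\le\sat(F')+1$.

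For the reverse inequality $\sat(F)\ge\sat(F')+1$: take an optimal assignment $\gamma$ for $F'$ and extend it to $\tau$ on $V(F)$ by choosing the value of $x$ cleverly. All clauses of $F'\setminus\{c^*\}$ behave identically in $F$. If $\gamma$ satisfies $c^*$, then it satisfies $A$ or $B$; say it satisfies $A$ — then set $x=\textsc{true}$, so $\tau$ satisfies $c'$ (via $A$, or via $x$) and also satisfies $c''$ via $x=\textsc{true}$? No — $c''$ needs $\bar x$ true or $B$ true; since $x=\textsc{true}$, $c''$ is satisfied iff $\gamma$ satisfies $B$. Instead I would argue: if $\gamma$ satisfies $A$, set $x=\textsc{false}$, so $c''$ is satisfied via $\bar x$, and $c'$ is satisfied via $A$; both get satisfied, gaining $2$ over the one clause $c^*$. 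Symmetrically if $\gamma$ satisfies $B$, set $x=\textsc{true}$. Either way $\sat_\tau(F)\ge \sat_\gamma(F'\setminus\{c^*\})+2 = (\sat(F')-1)+2=\sat(F')+1$. If $\gamma$ does not satisfy $c^*$, it satisfies neither $A$ nor $B$; set $x=\textsc{true}$, satisfying $c'$ via $x$ (and failing $c''$), so $\sat_\tau(F)\ge\sat_\gamma(F'\setminus\{c^*\})+1=\sat(F')+1$ since in this case $\sat_\gamma(F')=\sat_\gamma(F'\setminus\{c^*\})$. This proves $\sat(F)\ge\sat(F')+1$. Together with the first part, $\sat(F)=\sat(F')+1$, and since $\alpha'=\alpha-1$ we get $\sat(F)\ge\alpha \iff \sat(F')\ge\alpha'$, so the rule is valid.

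The main obstacle is purely bookkeeping: making sure the degenerate cases ($A$ or $B$ empty, $c^*$ a tautology, or $c^*$ coinciding with an existing clause so that $F'$ is a multiset with the expected multiplicity) are all consistent with the clean count, and being careful that ``$\gamma$ satisfies $c^*$'' is correctly related to ``$\gamma$ satisfies $A$ or $\gamma$ satisfies $B$'' given that $V(A),V(B)\subseteq V(F)\setminus\{x\}$. None of this is deep, but it is where an incautious write-up would slip.
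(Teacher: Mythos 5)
Your proof is correct and follows essentially the same route as the paper's: both directions are established by the same case analysis on whether the restriction/extension satisfies $c^*=A\vee B$, with the same choice of $x$ (false if $A$ is satisfied, true if $B$ is) in the extension step. Your extra attention to the degenerate cases ($A=B=\emptyset$, tautological $c^*$) goes slightly beyond what the paper writes out, but does not change the argument.
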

\begin{proof}
Consider any assignment for $F$. If it satisfies both $c'$ and $c''$, then the same assignment will satisfy $c^*$. So when restricted to variables of $F'$, it will satisfy at least $sat(F)-1$ clauses of $F'$.
Thus $sat(F') \geq sat(F) -1$ which is equivalent to $sat(F) \leq sat(F') +1$.  Similarly if an assignment $\gamma$ to $F'$ satisfies $c^*$ then at least one of $c',c''$ is satisfied by $\gamma$. Therefore by setting $x$ true if $\gamma$ satisfies $c''$ and false otherwise, we can extend $\gamma$ to an assignment on
$F$ that satisfies both of $c',c''$. On the other hand, if $c^*$ is not satisfied by $\gamma$ then neither $c'$ nor $c''$ is satisfied by $\gamma$,
and any extension of $\gamma$ will satisfy exactly one of $c',c''$. Therefore in either case  $sat(F) \geq sat(F') +1$. We conclude that $\sat(F) = \sat(F')+1$, as required.
% which completes the proof.
 \end{proof}

Our next reduction rule is based on the following lemma proved in Fleischner {\em et al.} \cite[Lemma 10]{FleischnerKS02},
Kullmann \cite[Lemma 7.7]{Kul03}  and Szeider \cite[Lemma 9]{Szeider04}.

\begin{lemma}\label{lem:red}
Let $F$ be a CNF formula. Given a maximum matching in $B_F$, in time $O(|F|)$ we can find an autarky $\pi: U \rightarrow \{ \textsc{true, false} \}$ such that
$F \setminus F_U$ is 1-expanding.
\end{lemma}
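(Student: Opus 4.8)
The plan is to build the autarky greedily, peeling off one "deficient" set of variables at a time until the remaining formula is 1-expanding. First I would observe that if $F$ is already 1-expanding we are done: take $U=\emptyset$ and $\pi$ the empty assignment, which is vacuously an autarky and $F\setminus F_\emptyset = F$. Otherwise, by definition of $q$-expanding there is a nonempty $X\subseteq V(F)$ with $|F_X|\le |X|$ (i.e.\ $X$ violates the $1$-expansion condition; note $|F_X|\ge |X|$ need not even hold, so a priori $|F_X|\le |X|$). I would pick such an $X$ with $|X|$ maximal, or more conveniently work with the bipartite graph $B_F$ and use the max-matching $M$ we are given: the set of variables reachable from unmatched clauses by $M$-alternating paths (a Gallai–Edmonds / König-type argument) isolates a vertex set $U$ on the variable side such that $N_{B_F}(U)=F_U$ has $|F_U|\le |U|$ and, crucially, the sub-bipartite-graph induced on $U\cup F_U$ has a matching saturating $F_U$. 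That matching saturating $F_U$ is exactly what lets us satisfy every clause of $F_U$: set each variable of $U$ so as to satisfy its matched clause (and the unmatched variables of $U$ arbitrarily). This defines $\pi:U\to\{\textsc{true},\textsc{false}\}$ satisfying all of $F_U$, i.e.\ an autarky.

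Next I would show that after removing $U$ and $F_U$, the residual formula $F':=F\setminus F_U$ on variable set $V(F)\setminus U$ is "more expanding" than before, so that iterating terminates. The key point: for any $Y\subseteq V(F)\setminus U$, the clauses of $F'$ meeting $Y$ are exactly the clauses of $F$ meeting $Y$ but not $U$; and because $U$ was chosen as the alternating-reachable set, no clause outside $F_U$ has a neighbour in $U$ that could have been "stolen", so $|F'_Y| = |F_{Y\cup U}| - |F_U| \ge |Y\cup U| - |U| = |Y|$ when $F$ was $0$-expanding on the relevant sets — giving that $F'$ is at least $0$-expanding, i.e.\ variable-matched. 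To push from $0$-expanding to $1$-expanding I would repeat the whole procedure on $F'$: if $F'$ is not yet $1$-expanding, extract another violating set, enlarge $U$ and $\pi$ accordingly, and recurse. Each round strictly decreases $|V(F)\setminus U|$, so the process halts in at most $|V(F)|$ rounds with a final $U$ and autarky $\pi$ for which $F\setminus F_U$ is $1$-expanding. One has to check that the union of the autarkies built over the rounds is still an autarky on the accumulated $U$ — this follows because once a clause enters some $F_{U_i}$ it is satisfied by the partial assignment fixed at that stage and never "un-satisfied" later, since later stages only assign fresh variables.

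Finally, the running-time claim $O(|F|)$ rests on doing all of this in one pass rather than literally re-running matching from scratch each round: given the maximum matching $M$ in $B_F$, a single alternating-BFS/DFS from all $M$-unsaturated clause-vertices simultaneously identifies the full set of variables that must go into $U$, and the same search yields, for each such variable, the matched clause it should satisfy; reading off $\pi$ and deleting $F_U$ is then linear. So the algorithmic content is "one König-style alternating search on $B_F$", and the only subtlety is arguing that the set it returns is simultaneously (a) saturable on the clause side, hence autarkic, and (b) exactly the obstruction to $1$-expansion, hence its removal leaves a $1$-expanding formula. I expect the main obstacle to be this second part — getting the bookkeeping right so that removing $F_U$ improves expansion by the full "$+1$" uniformly over all residual variable subsets, rather than merely restoring $0$-expansion; handling the "$+1$" is precisely why one cannot just quote Hall's theorem once and stop, and it is where the cited proofs of Fleischner–Szeider, Kullmann, and Szeider do their real work.
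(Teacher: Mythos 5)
The paper itself does not prove this lemma; it imports it from Fleischner--Kullmann--Szeider, Kullmann and Szeider, so the comparison is with the standard argument in those sources. Your high-level plan --- one K\"onig-style alternating search in $B_F$ relative to the given maximum matching $M$, then an autarky read off a matching that saturates $F_U$ into $U$ --- is the right one, but you have the set $U$ backwards, and as written this is fatal. You take $U$ to be the variables \emph{reachable} from the $M$-exposed clauses by alternating paths. For that set the counting goes the wrong way: every reachable variable is matched into a reachable clause, and the reachable clauses additionally include all exposed ones, so $|F_U|\ge |U|+(\text{number of exposed clauses})$ rather than $|F_U|\le |U|$; in particular the exposed clauses lie in $F_U$ and cannot all be saturated by a matching into $U$, so no autarky on this $U$ exists in general. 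A minimal counterexample is $F=\{(x)\}$ with its perfect matching: there are no exposed clauses, your $U$ is empty, and $F\setminus F_\emptyset=F$ is not $1$-expanding. The correct choice is the complement: let $R_A$ and $R_B$ be the variables and clauses reachable from exposed clauses, and set $U=V(F)\setminus R_A$. Then every clause of $F_U$ is matched by $M$ to a variable of $U$ (an exposed clause, or a clause matched into $R_A$, would be reachable and hence have all its variables in $R_A$), which yields both $|F_U|\le|U|$ and the autarky.

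The second gap is the one you flag yourself at the end: you only establish $0$-expansion of the residual formula and then propose to iterate, but you give no $O(|F|)$ --- or even clearly linear-per-round --- way to find a violating set for $1$-expansion inside a $0$-expanding formula, and the iteration is in any case unnecessary. With the correct $U$, the residual formula $F\setminus F_U$ is exactly $R_B$ on variable set $R_A$, and $1$-expansion comes out of the same search in one step: if some nonempty $Y\subseteq R_A$ had $|N(Y)\cap R_B|=|Y|$, then $M$ would match $Y$ perfectly onto $N(Y)\cap R_B$; but each $y\in Y$ ends an $M$-alternating path starting at an exposed clause, all of whose vertices stay in $R_A\cup R_B$, and the first vertex of that path lying in $Y\cup(N(Y)\cap R_B)$ gives a contradiction --- it is either a clause of $N(Y)\cap R_B$ entered by a matching edge from a variable outside $Y$, or a variable of $Y$ entered from a clause of $R_B$ that contains it and hence already belongs to $N(Y)\cap R_B$. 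So a single alternating BFS/DFS from all exposed clauses simultaneously produces the autarky and certifies $1$-expansion of the remainder in time $O(|F|)$; no second round, and no separate ``violating set'' subroutine, is needed.
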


\begin{krule}
\label{k1}
Find an autarky $\pi: U \rightarrow \{ \textsc{true, false} \}$ such that
$F \setminus F_U$ is 1-expanding. Set $F'=F \setminus F_U$ and reduce $\alpha$ by $|F_U|.$
%then remove all variables in $S$ and all clauses in  $N_B(S)$ from $I$ and reduce  $\alpha$ by $|N_{B}(S)|$.
%(that is, continue the algorithm with $I-S$).
%Reduce $\alpha$ by $|S|$
%The value of $k$ remains the same.
\end{krule}

\noindent
The next lemma follows from Lemma \ref{lem:aut}.

\begin{lemma}
\label{lem:cor3}
For $F$ and $F'$ in Reduction Rule \ref{k1}, $\sat(F) = \sat(F')+|F_U|$ and so Rule \ref{k1} is valid.
\end{lemma}

% \begin{lemma}
%  Reduction rule \ref{k1} is valid.
% \end{lemma}
% \begin{proof}
% We first show that $I-S$ is matched.
% Since $|N_{B}(S)|=|S|$ and $S$ is covered by $M$, every clause in $N_{B}(S)$ is matched with a variable in $S$ by $M$. Therefore $M$ matches every variables in $V(I) \backslash S$ to a clause in $C(I) \backslash N_{B}(S)$. Therefore $M$ induces a matching on $B_{I-S}$ that covers all of $V(I-S)$, and so $I-S$ is matched.
%
%
% It remains to show that $\sat(I) = |V(I)|+k$ if and only if $\sat(I-S)=|V(I-S)|+k$, for any $k$. Since $|V(I)|=|V(I-S)|+|S|$, it will suffice to show that $\sat(I)=\sat(I-S)+|S|$.
%
%
%
% Observe that since $I$ is matched and $|N_{B}(S)|=|S|$, there is a matching between $S$ and $N_{B}(S)$ that covers all the clauses of $N_{B}(S)$.
% % Therefore by (BLAH BLAH SOMETHING I'VE DONE BEFORE), there is an assignment to the variables of $S$ that satisfies all the clauses in $N_{B}(S)$.
% It therefore follows that for any assignment that any solution to $I-S$ can be extended to a solution for $I$ that satisfies all the variables of $S$, and so $\sat(I) \ge \sat(I-S) + |S|$. Clearly $\sat(I) \le \sat(I-S)+|S|$ as any assignment to $I$ satisfies at most $|S|$ of the clauses in $S$ and at most $\sat(I-S)$ of the clauses in $(I - S)$. Therefore $\sat(I)=\sat(I-S)+|S|$. Since $|V(I)|=|V(I-S)|+|S|$, it follows that $\sat(I) = |V(I)|+k$ if and only if $\sat(I-S)=|V(I-S)|+k$, for any $k$.
% \end{proof}

After exhaustive application of Rule \ref{k1}, we may assume that the resulting formula is $1$-expanding.
For the next reduction rule, we need the following results.

\begin{theorem}[Szeider \cite{Szeider04}]
\label{n_plus_1}
Given a variable-matched formula $F$, with $|F|=|V(F)|+1$, we can decide whether $F$ is satisfiable in
time $O(|V(F)|^3)$.
%We can in polynomial time decide if a matched instance $I$, with $|C(I)|=|V(I)|+1$, is satisfiable.
\end{theorem}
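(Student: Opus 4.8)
The plan is to use the preprocessing machinery already developed to reduce to a maximally structured instance and then exploit the rigidity of deficiency-$1$ formulas.

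First I would apply Reduction Rules~\ref{pureliteral}, \ref{twooccurrence} and \ref{k1} exhaustively; by Lemmas~\ref{lem:cor1}, \ref{lem:cor2}, \ref{lem:cor3} this preserves satisfiability and runs in polynomial time, and it can only decrease $|F|$ and $|V(F)|$. After Rule~\ref{k1} the formula is $1$-expanding, and I claim it is still variable-matched with $|F|=|V(F)|+1$. Indeed, since the input is variable-matched with deficiency $1$, a short matching argument gives $\delta^*(F)=1$: if some $F''\subsetneq F$ had $|F''|\ge |V(F'')|+2$, then restricting a matching of $B_F$ that saturates $V(F)$ to $V(F)\setminus V(F'')$ would match those variables to distinct clauses of $F\setminus F''$, forcing $|F\setminus F''|\ge |V(F)\setminus V(F'')|$ and hence $|F|\ge |V(F)|+2$, a contradiction. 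Thus every clause-subformula produced along the way has deficiency at most $1$, and $1$-expansion (take $X=V(F')$) forces it to equal $1$ and to be variable-matched. So we may assume $F$ is $1$-expanding and variable-matched with $|F|=|V(F)|+1$, and (after Rules~\ref{pureliteral}, \ref{twooccurrence}) every variable satisfies $n(x)\ge 1$, $n(\bar x)\ge 1$, $n(x)+n(\bar x)\ge 3$.

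Second, such an $F$ cannot be $2$-expanding, since $X=V(F)$ would give $|F|\ge |V(F)|+2$; hence there is a nonempty $X\subseteq V(F)$ with $|F_X|\le |X|+1$, and by $1$-expansion $|F_X|=|X|+1$. Take such an $X$ of minimum size: this exposes a "deficiency-$1$ core" $F_X$ sitting on top of the remainder $F\setminus F_X$, which is variable-matched of deficiency $0$ on its own variables (again by restricting the matching). From here the goal is to either read off (un)satisfiability directly or carry out a satisfiability-preserving step that strictly decreases the number of variables; the natural candidate is Davis--Putnam resolution on a variable occurring only once with one of its signs --- resolving its unique clause against the $p$ clauses of the other sign removes $p+1$ clauses and introduces at most $p$ resolvents, so deficiency does not increase --- after which one re-runs the preprocessing and recurses. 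Performed $O(|V(F)|)$ times, each round dominated by a bipartite-matching recomputation on $B_F$, this yields the claimed $O(|V(F)|^3)$ bound.

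The main obstacle is the structural heart of the second step: proving that a $1$-expanding variable-matched formula of deficiency $1$ always admits such a bounded, satisfiability-preserving reduction, or equivalently pinning down the structure of its minimal unsatisfiable subformulas (each of which has deficiency exactly $1$, since $\delta^*(F)=1$) well enough that unsatisfiability can be detected without branching. This is precisely where the theory of autarkies and the Davydov--Davydova--Kleine B\"uning / Kullmann description of deficiency-$1$ minimal unsatisfiable formulas is needed; the remainder is routine deficiency and running-time bookkeeping.
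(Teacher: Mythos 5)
This statement is not proved in the paper at all: it is imported verbatim from Szeider \cite{Szeider04}, so there is no in-paper argument to compare against. Judged on its own terms, your sketch starts in the right spirit (autarky/expansion reductions, the observation that $\delta^*(F)=1$, isolating a set $X$ with $|F_X|=|X|+1$ -- this is indeed the flavour of Szeider's machinery), and the deficiency bookkeeping in your first paragraph is essentially correct. But the proof has a genuine gap exactly where you flag ``the structural heart'': you never actually decide satisfiability of the reduced instance, and the one concrete mechanism you propose does not always exist.

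Specifically, you want to apply Davis--Putnam elimination on a variable occurring only once with one of its signs. After Rules \ref{pureliteral} and \ref{twooccurrence} you only know $n(x)\ge 1$, $n(\bar x)\ge 1$ and $n(x)+n(\bar x)\ge 3$ for every variable; nothing prevents every variable from satisfying $n(x)\ge 2$ and $n(\bar x)\ge 2$. For example, with $V=\{x_1,x_2,x_3\}$ and clauses $(x_1\vee x_2\vee x_3)$, $(x_1\vee \bar x_2\vee \bar x_3)$, $(\bar x_1\vee x_2\vee \bar x_3)$, $(\bar x_1\vee \bar x_2\vee x_3)$, the formula is $1$-expanding, variable-matched with $|F|=|V(F)|+1$, survives all your reductions, and has no variable with a literal of multiplicity one -- so your recursion never fires. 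The characterization of deficiency-$1$ minimal unsatisfiable formulas (every MU(1) formula with at least two clauses has a variable with $n(x)=n(\bar x)=1$) applies to \emph{minimal unsatisfiable} formulas, not to an arbitrary $1$-expanding deficiency-$1$ formula whose satisfiability is precisely what you are trying to determine; invoking it here is circular unless you first show how to locate or rule out such a subformula. That missing step is the entire content of Szeider's theorem (his argument branches on a variable and shows the maximum deficiency strictly drops in both branches of a $1$-expanding formula, reducing to the trivially satisfiable matched case $\delta^*\le 0$). The running-time claim is also not substantiated: with clauses of unbounded width, $n$ rounds of matching recomputation is not obviously $O(|V(F)|^3)$. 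As it stands the proposal is a plausible reduction to the hard core of the theorem, not a proof of it.
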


Consider a bipartite graph $G=(A,B;E)$. Recall that a formula $F$ is $q$-expanding if and only if
$B_F$ is $q$-expanding. From a bipartite graph $G = (A, B; E)$, $x\in A$ and $q\geq 1$, we obtain a
bipartite graph $G_{qx}$, by adding new vertices $x_1,\ldots,x_q$ to $A$ and adding edges such that new vertices have exactly
the same neighborhood as $x$, that is, $G_{qx}=(A\cup \{x_1,\ldots,x_q\}, B; E\cup \{(x_i,y):\ (x,y)\in E \})$. The following result is well known.

\begin{lemma}{\rm \cite[Theorem 1.3.6]{LovaszP86}}\label{lem:qexp}
Let $G=(A,B; E)$ be a $0$-expanding bipartite graph. Then  $G$ is $q$-expanding if and only if $G_{qx}$
is $0$-expanding for all $x\in A$.
\end{lemma}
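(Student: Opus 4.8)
I would prove both implications directly from the definition of expansion, with the only point of care being that for $q\ge 1$ the condition ``$|N(A')|\ge |A'|+q$'' has to be read over \emph{nonempty} subsets $A'\subseteq A$ (on the empty set it is never satisfiable); this is harmless since $G$ is already assumed $0$-expanding, and I would record this convention at the start. Fix $x\in A$ and write $C=\{x_1,\dots,x_q\}$ for the clones added in $G_{qx}$, so that in $G_{qx}$ each vertex of $C$ has neighborhood exactly $N_G(x)$ and the edges among the old vertices are unchanged. The whole proof is a bookkeeping translation of neighborhoods in $G_{qx}$ into neighborhoods in $G$.

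\textbf{Forward direction ($\Rightarrow$).} Assuming $G$ is $q$-expanding, I would show each $G_{qx}$ is $0$-expanding by taking an arbitrary nonempty $A''\subseteq A\cup C$ and ``collapsing'' the clones. If $A''\cap C=\emptyset$ then $A''\subseteq A$ and $|N_{G_{qx}}(A'')|=|N_G(A'')|\ge |A''|+q\ge |A''|$. If $A''\cap C\ne\emptyset$, then because every clone has neighborhood $N_G(x)$ one gets $N_{G_{qx}}(A'')=N_G(S\cup\{x\})$ where $S=A''\cap A$; writing $j=|A''\cap C|$ (so $1\le j\le q$ and $|A''|=|S|+j$) the $q$-expansion of $G$ yields $|N_{G_{qx}}(A'')|=|N_G(S\cup\{x\})|\ge |S\cup\{x\}|+q\ge |S|+j=|A''|$. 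Hence $G_{qx}$ is $0$-expanding.

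\textbf{Backward direction ($\Leftarrow$).} Assuming $G_{qx}$ is $0$-expanding for every $x\in A$, I would check the $q$-expansion of $G$ on an arbitrary nonempty $A'\subseteq A$ by choosing any $x\in A'$, passing to $G_{qx}$, and applying its $0$-expansion to the set $A''=A'\cup C$. Since $x\in A'$ we have $N_G(x)\subseteq N_G(A')$, so the clones contribute nothing new: $N_{G_{qx}}(A'')=N_G(A')$, while $|A''|=|A'|+q$. Therefore $|N_G(A')|=|N_{G_{qx}}(A'')|\ge |A''|=|A'|+q$, which is exactly $q$-expansion of $G$.

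\textbf{Main obstacle.} There is no genuine difficulty here; the one place that needs attention is the forward direction when $A''$ contains some but not all of $x_1,\dots,x_q$, where the estimate $|S\cup\{x\}|+q\ge |S|+|A''\cap C|$ is what closes the gap. It is also worth observing that the backward direction uses $0$-expansion of $G_{qx}$ for only one suitably chosen $x$ (any element of $A'$), and that the standing hypothesis ``$G$ is $0$-expanding'' is not actually invoked in either implication — it is retained only so that the notion ``$q$-expanding'' is meaningful and the nonempty-subset convention is consistent.
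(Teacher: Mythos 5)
Your proof is correct: both directions are handled properly, including the delicate subcase in the forward direction where $A''$ contains only some of the clones, and your observation that the expansion condition must be read over nonempty subsets is exactly the convention needed for the statement to be non-vacuous. Note that the paper itself offers no proof of this lemma --- it is quoted verbatim from Lov\'asz and Plummer \cite[Theorem 1.3.6]{LovaszP86} --- so there is nothing to compare against; your direct neighborhood-bookkeeping argument is the standard one and serves as a valid self-contained verification of the cited result.
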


\begin{lemma}
\label{2exp}
Let $G=(A,B; E)$ be a 1-expanding bipartite graph. In polynomial time, we can check whether $G$ is 2-expanding, and if it is not, find a set $S\subseteq A$
such that $|N_G(S)|=|S|+1.$
\end{lemma}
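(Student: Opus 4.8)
The statement to prove is Lemma \ref{2exp}: given a $1$-expanding bipartite graph $G=(A,B;E)$, decide in polynomial time whether $G$ is $2$-expanding, and if not produce a witness set $S\subseteq A$ with $|N_G(S)|=|S|+1$. The natural approach is to invoke Lemma \ref{lem:qexp} with $q=2$: since $G$ is $1$-expanding it is in particular $0$-expanding, so $G$ is $2$-expanding if and only if $G_{2x}$ is $0$-expanding for every $x\in A$. For a single $x\in A$, testing whether $G_{2x}$ is $0$-expanding is just a maximum-matching computation (by Hall's theorem, $G_{2x}$ is $0$-expanding iff it has a matching saturating $A\cup\{x_1,x_2\}$), which is polynomial; iterating over all $x\in A$ is still polynomial. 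So the decision part is immediate.

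The work is in extracting the witness set $S$ when $G$ is \emph{not} $2$-expanding. First I would fix an $x\in A$ for which $G_{2x}$ fails to be $0$-expanding; by Hall's theorem there is a set $A'\subseteq A\cup\{x_1,x_2\}$ with $|N_{G_{2x}}(A')|<|A'|$. Since $x,x_1,x_2$ all have the same neighbourhood, we may assume $A'$ contains either all of $x,x_1,x_2$ or none of them; and since $G$ is $0$-expanding, $A'$ must contain all three (otherwise $A'\subseteq A$ and $|N_G(A')|\ge|A'|$). Write $A'=\{x,x_1,x_2\}\cup S'$ with $S'\subseteq A\setminus\{x\}$, and set $S=\{x\}\cup S'$. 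Then $N_{G_{2x}}(A')=N_G(S)$, while $|A'|=|S|+2$, so $|N_G(S)|\le|S|+1$. On the other hand $G$ is $1$-expanding, so $|N_G(S)|\ge|S|+1$; hence $|N_G(S)|=|S|+1$ exactly, as required. The point that must be handled with a little care is the "canonicalization" step — arguing that a violating set in $G_{2x}$ can be taken to contain all of $x,x_1,x_2$ — but this is the standard fact that a minimal Hall violator is closed under "twin" vertices, and it is routine.

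There is a cleaner, self-contained alternative that avoids quoting Lemma \ref{lem:qexp}: for each pair of distinct vertices $x,y\in A$ (or each single $x$, taken with multiplicity two) and each vertex $b\in N_G(x)$, delete $b$ from $B$ and test whether the resulting graph still admits a matching saturating $A$. More directly: $G$ is $2$-expanding iff for every $x\in A$ and every two-element subset $\{b_1,b_2\}\subseteq N_G(x)$ — wait, the transparent version is simply the reduction above, so I would present it via $G_{2x}$. Either way the algorithm is: build $G_{2x}$, compute a maximum matching, and if it does not saturate $A\cup\{x_1,x_2\}$ then recover a minimal deficient set by the usual alternating-path / König-style analysis from the matching (the set of vertices of $A\cup\{x_1,x_2\}$ reachable from unsaturated vertices by alternating paths), then canonicalize it to contain $x,x_1,x_2$ and read off $S$.

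I do not expect a serious obstacle; the only mild subtlety is making the witness have equality $|N_G(S)|=|S|+1$ rather than just $|N_G(S)|\le|S|+1$, and that is forced for free by the hypothesis that $G$ is already $1$-expanding. The whole proof is a short application of Hall's theorem plus the twin-closure observation, combined with a textbook algorithm (bipartite matching and alternating-path reachability) for producing a certificate of Hall-violation.
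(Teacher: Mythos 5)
Your proposal is correct and follows essentially the same route as the paper: reduce $2$-expansion of $G$ to $0$-expansion of the twin-augmented graphs $G_{2x}$ via Lemma \ref{lem:qexp}, test each by a maximum-matching computation, and when one fails extract a Hall-deficient set $T$ and strip off the twins to get $S$, with the equality $|N_G(S)|=|S|+1$ forced by $1$-expansion. The only cosmetic difference is that the paper invokes Szeider's Lemma 3(4) to produce the deficient set and deduces $y_1,y_2\in T$ directly from $1$-expansion, whereas you construct it by the standard alternating-path reachability argument and use twin-closure plus $0$-expansion; these are interchangeable.
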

\begin{proof}
Let $x\in A$. By Hall's Matching Theorem, $G_{2x}$ is $0$-expanding if and only if $\nu(G_{2x})=|A|+2.$ Since we can check the last condition in polynomial time, by Lemma \ref{lem:qexp} we can decide whether $G$ is 2-expanding in polynomial time. So, assume that $G$ is not 2-expanding and we know this because $G_{2y}$ is not $0$-expanding for some $y\in A.$ By Lemma 3(4) in \cite{Szeider04}, in polynomial time, we can find a set $T\subseteq A\cup \{y_1,y_2\}$ such that
$|N_{G_{2y}}(T)|<|T|.$ Since $G$ is 1-expanding, $y_1,y_2\in T$ and $|N_{G_{2y}}(T)|=|T|-1.$ Hence, $|S|+1=|N_G(S)|,$ where $S=T\setminus \{y_1,y_2\}$.
\end{proof}

\noindent
For a formula $F$ and a set $S\subseteq V(F)$, $F[S]$ denotes the formula obtained from $F_S$ by deleting all variables not in $S$.

\begin{krule} \label{k2}
Let $F$ be a 1-expanding formula and let $B=B_F$.  Using Lemma \ref{2exp}, check whether $F$ is 2-expanding. If it is then do not change $F$,
otherwise find a set $S \subseteq V(F)$ with $|N_{B}(S)|=|S|+1$. Let $M$ be a matching that saturates $S$ in $B[S\cup N_{B}(S)]$ (that exists as $B[S\cup N_{B}(S)]$ is $1$-expanding). Use Theorem \ref{n_plus_1} to decide whether $F[S]$ is satisfiable,
and proceed as follows.
\begin{description}
\item[\mbox{$F[S]$} is satisfiable:] Obtain a new formula $F'$ by removing all clauses in  $N_B(S)$ from $F$. Reduce
$\alpha$ by $|N_{B}(S)|$.

%
% Let $I' = I - S$. If $I'$ is matched then decrease $k$ by one and continue the algorithm with $I'$.
% If $I'$ is not matched, then let $S' \subseteq V(I')$ have $|N_{B_{I'}}(S')| < |S'|$ and note that $|N_B (S \cup S')| \leq |S \cup S'|$.
% As $I$ was matched we have $|N_B (S \cup S')| = |S \cup S'|$ and we use Reduction Rule \ref{k1}.

\item[\mbox{$F[S]$} is not satisfiable:]
%Let $Y=N_B(S)$ and
%Let $Y$ be the set of all clauses in $N_B(S)$ that can be reached from $y$ with an $M$-alternating path in
%$B[S\cup N_B(S)]$. (We will argue later that $Y=N_B(S)$.)
Let $c'$ be the clause obtained by deleting all variables in $S$ from  $\cup_{c'' \in N_B(S)} c''$. That is,
a literal $l$ belongs to $c'$ if and only if it belongs to some clause in $N_B(S)$ and the variable corresponding to $l$
is not in $S$.
Obtain a new formula $F'$ by removing  all clauses in  $N_B(S)$ from $F$  and
adding $c'$. Reduce $\alpha$ by $|S|$.

%\item[\mbox{$F[S]$} is not satisfiable:]  Let $y$ denote the clause in $N_B(S)$ that is not matched to a variable in $S$ by $M$.
%Let $Y$ be the set of all clauses in $N_B(S)$ that can be reached from $y$ with an $M$-alternating path in
%$B[S\cup N_B(S)]$. (We will argue later that $Y=N_B(S)$.)
%Let $c'$ be the clause obtained by deleting all variables in $S$ from  $\cup_{c'' \in Y} c''$. That is,
%a literal $l$ belongs to $c'$ if and only if it belongs to some clause in $Y$ and the variable corresponding to $l$
%is not in $S$.
%Obtain a new formula $F'$ by removing  all clauses in  $N_B(S)$ from $F$  and
%adding $c'$. Reduce $\alpha$ by $|S|$.

%Now add the clause $c'$ to $I'$ in order to obtain the instance $I''$. We now continue the algorithm using $I''$ (which we will show is  matched)
%and leaving $k$ unchanged.
% be the maximal set of clauses in $N_B(S)$, such that if we remove a vertex in $Y$ from $B$ then there still exists a matching saturating $S$.
% Note that we can find $Y$ in polynomial time as it is exactly the vertices that
% can be reached from $y$ with an $M$-augmenting path (when considering
% the bipartite graph reached from $y$ with an $M$-augmenting pat).
\end{description}
\end{krule}
\begin{lemma}
\label{lem:cor4}
For $F$, $F'$ and $S$ introduced in Rule \ref{k2},
if \mbox{$F[S]$} is satisfiable $\sat(F) = \sat(F')+|N_{B}(S)|$, otherwise $\sat(F) = \sat(F')+|S|$ and thus Rule \ref{k2} is valid.
%Reduction rule \ref{k2} is valid.
\end{lemma}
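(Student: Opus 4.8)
The plan is to analyze the two cases of Rule \ref{k2} separately, in each case bounding $\sat(F)$ from above and from below in terms of $\sat(F')$. Throughout, write $B = B_F$, $N = N_B(S)$, and recall that $F_S$ is exactly the set of clauses meeting $S$, so $N$ is precisely this clause set; also note that $F'$ differs from $F$ only in what happens on the clauses of $N$. A key structural fact I would establish first is that, because $M$ saturates $S$ inside $B[S\cup N]$ and $|N| = |S|+1$, the variables of $S$ only occur in clauses of $N$, and conversely every clause of $N$ contains at least one variable of $S$. This lets me freely split any truth assignment $\tau$ for $F$ as $\tau = \tau_S \cup \tau_{\bar S}$ with $\tau_S$ on $S$ and $\tau_{\bar S}$ on $V(F)\setminus S$, observing that $\tau_{\bar S}$ alone determines which clauses outside $N$ are satisfied, and that $\tau_S$ together with $\tau_{\bar S}$ determines the clauses of $N$ that are satisfied.

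For the satisfiable case, $F' = F \setminus N$ and $\alpha' = \alpha - |N|$. For the upper bound $\sat(F) \le \sat(F') + |N|$: given an optimal $\tau$ for $F$, its restriction to $V(F')$ satisfies at least $\sat_\tau(F) - |N|$ clauses of $F'$ (we lose at most the $|N|$ clauses of $N$). For the lower bound $\sat(F) \ge \sat(F') + |N|$: take an optimal assignment $\gamma$ for $F'$; I want to extend it to $S$ so that all of $N$ becomes satisfied. This is exactly where satisfiability of $F[S]$ is used — a satisfying assignment for $F[S]$ need not satisfy all of $N$ by itself (the clauses of $N$ have extra literals outside $S$), so the cleanest route is to invoke Lemma \ref{lem:aut}: a satisfying assignment of $F[S]$, viewed as a partial assignment on $S$, is an autarky for $F$ (it satisfies every clause meeting $S$, i.e. every clause of $F_S = N$), hence combining it with $\gamma$ gives $\sat_\tau(F) = |N| + \sat_\gamma(F \setminus N) = |N| + \sat(F')$. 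Combining the two bounds gives equality.

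For the unsatisfiable case, $F'$ is obtained by deleting $N$ and adding the single clause $c'$ (the "projection" of $N$ off $S$), and $\alpha' = \alpha - |S|$. The claim is $\sat(F) = \sat(F') + |S|$. The heart of the argument is: for \emph{any} assignment $\tau_S$ on $S$, exactly $|N| - 1 = |S|$ clauses of $N$ are satisfiable in the best case — more precisely, I will show that for a suitable completion one can always satisfy $|S|$ clauses of $N$ together with $c'$ if $c'$ is satisfied, and that one can never satisfy all $|N|$ clauses of $N$ simultaneously using only variables of $S$ (that is the content of $F[S]$ being unsatisfiable) but one \emph{can} always satisfy $|S| = |N|-1$ of them. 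Concretely: (a) Lower bound. Given optimal $\gamma'$ for $F'$; if $\gamma'$ satisfies $c'$, it does so via some literal $\ell$ whose variable lies outside $S$; that same literal satisfies at least one clause of $N$ (the one $c'$ came from through $\ell$), and then I extend to $S$ to satisfy the remaining $|S|$ clauses of $N$ — here I need a combinatorial lemma that in a $1$-expanding bipartite graph with $|N| = |S|+1$ one can always satisfy all but one prescribed clause; if $\gamma'$ does not satisfy $c'$, then no literal outside $S$ helps any clause of $N$, so again I must hit $|N|-1 = |S|$ of them using $S$-variables alone, and unsatisfiability of $F[S]$ guarantees I cannot do better but expandingness guarantees I can do this well. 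Either way $\sat(F) \ge \sat_{\gamma'}(F\setminus N) + |S| = (\sat(F') - [c' \text{ satisfied}]) + [\text{clause of }N\text{ via }\ell] + (|S|-[\cdots]) $, which bookkeeps to $\sat(F') + |S|$. (b) Upper bound. Given optimal $\tau$ for $F$; its restriction to $V(F')$ keeps all clauses outside $N$, and satisfies $c'$ whenever $\tau$ satisfied some clause of $N$ via a literal outside $S$; since $\tau$ satisfies at most $|N|-1 = |S|$ clauses of $N$ (because if it satisfied all of them, the $S$-part alone could not — unsatisfiability of $F[S]$ — so at least one is satisfied only via an outside literal, which then satisfies $c'$... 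I need to chase this carefully), one gets $\sat_\tau(F) \le \sat(F') + |S|$.

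The main obstacle I anticipate is the combinatorial claim underlying the unsatisfiable case: that a $1$-expanding bipartite graph $B[S \cup N]$ with $|N| = |S| + 1$ admits, for an appropriate choice of which clause to "sacrifice," a truth assignment on $S$ satisfying all the other $|S|$ clauses of $N$. The right tool is the matching $M$ saturating $S$ together with an alternating-path / augmenting argument (in the spirit of the expansion lemmas already cited, e.g. Lemma \ref{lem:qexp} and Lemma \ref{2exp}): using $M$, assign each variable of $S$ the value that satisfies its matched clause; the unmatched clause of $N$ is the one potentially sacrificed, and $1$-expansion prevents cascading failures. Getting the case split on "does $\gamma'$ / $\tau$ satisfy $c'$" to mesh cleanly with this matching argument — and making sure the projection clause $c'$ correctly captures "some clause of $N$ satisfied from outside $S$" in both directions — is the delicate bookkeeping that will take the most care; everything else is the same restrict-and-extend pattern already used in the proof of Lemma \ref{lem:cor2}.
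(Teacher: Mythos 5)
Your proposal is correct and takes essentially the same route as the paper's proof: Case~1 is the autarky argument via Lemma~\ref{lem:aut}, and Case~2 combines an upper bound by restriction (with $c'$ recording that some clause of $N_B(S)$ is satisfied from outside $S$) with a lower bound that extends an optimal assignment of $F'$ over $S$ via a matching, split on whether $c'$ is satisfied. The two steps you flag as delicate do close: your ``combinatorial lemma'' is immediate from Hall's theorem, since $1$-expansion gives $|N_B(X)\setminus\{c''\}|\ge |X|$ for every $X\subseteq S$ and every prescribed $c''\in N_B(S)$, so $S$ can be matched onto all of $N_B(S)\setminus\{c''\}$ (the paper obtains the same fact via $M$-alternating paths from the unmatched clause of $N_B(S)$); and the upper-bound chase is exactly the dichotomy you hint at --- either $\tau$ misses some clause of $N_B(S)$, or it satisfies all of them and then unsatisfiability of $F[S]$ forces some clause to be satisfied by a literal outside $S$, hence $c'$ to be satisfied --- so in either case the restriction of $\tau$ to $V(F')$ satisfies at least $\sat_\tau(F)-|S|$ clauses of $F'$ (note your unqualified claim that $\tau$ satisfies at most $|N_B(S)|-1$ clauses of $N_B(S)$ is false as stated and must be replaced by this dichotomy).
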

% \begin{lemma}
%  Reduction rule \ref{k2} is valid.
% \end{lemma}
\begin{proof} We consider two cases.

% {\bf Write the proof to reflect the current statement. IN particular, we don't need to worry about matching size etc}
\noindent{\bf Case 1: $F[S]$ is satisfiable.} Observe that there is an autarky on $S$ and thus by Lemma \ref{lem:aut}, $\sat(F) = \sat(F')+|N_{B}(S)|$.

%Since $F[S]$ is satisfiable, any truth assignment for $F'$ can be extended to an assignment for $F$ that satisfies every clause in $N_{B}(S)$. Therefore $\sat(F) \geq \sat(F')+|N_B(S)|$. The inequality $\sat(F) \leq \sat(F') + |N_B(S)|$ follows from the fact that any truth assignment to $F$ when restricted to $F'$ will satisfy at least $\sat(F) - |N_B(S)|$ clauses.

\noindent{\bf Case 2: $F[S]$ is not satisfiable.}
Let $F'' = F' \setminus c'$. As any optimal truth assignment to $F$ will satisfy at least $\sat(F)- |N_B(S)|$ clauses of $F''$, it follows that $\sat (F) \leq \sat(F'') + |N_B(S)| \leq \sat(F') + |N_B(S)|$.

Let $y$ denote the clause in $N_B(S)$ that is not matched to a variable in $S$ by $M$. Let $S'$ be the set of variables, and $Z$ the set of clauses, that can be
reached from $y$ with an $M$-alternating path in $B[S \cup N_B(S)].$
We  argue now that $Z=N_B(S)$.
Since $Z$ is made up of clauses that are reachable in $B[S\cup N_{B}(S)]$ by an $M$-alternating path from the single unmatched clause $y$,
$|Z| = |S'|+1$. It follows that $|N_{B}(S)\backslash Z| = |S \backslash S'|$, and $M$ matches every clause in $N_{B}(S)\backslash Z$ with a variable in $S \backslash S'$. Furthermore,
$N_B (S \backslash S') \cap Z = \emptyset $ as otherwise the matching partners of some elements of $S \backslash S'$ would have been reachable by an $M$-alternating path from $y$, contradicting the definition of $N_B(S)$ and $S'$. Thus $S\setminus S'$ has an autarky such that $F \setminus F_{S\setminus S'}$ is $1$-expanding which would have been detected by Rule \ref{k1}, hence $S\setminus S' = \emptyset$ and so $S=S'$. That is, all clauses in $N_B(S)$ are reachable from the unmatched clause $y$ by an $M$-alternating path. We have now shown that $Z=N_B(S)$, as desired.

Suppose that there exists an assignment $\gamma$ to $F'$, that satisfies $\sat(F')$ clauses of $F'$ that also satisfies $c'$.
% We show that $\gamma$ can be extended (by assigning to the the variables in $S$) to an assignment for $F$ that satisfies an additional $|S|$ clauses (and no more), from which it follows that $\sat(F)=\sat(F')+|S|$.
%  and we're done.
Then there exists a clause $c'' \in N_{B}(S)$ that is satisfied by $\gamma$. As $c''$ is reachable from $y$ by an $M$-alternating path, we can modify $M$ to include $y$ and exclude $c''$, by taking the symmetric difference of the matching and the $M$-alternating path from $y$ to $c''$. This will give a matching saturating $S$ and $N_{B}(S)\setminus c''$, and we use this matching to extend the assignment $\gamma$ to one which satisfies all of $N_{B}(S) \backslash c''$. We therefore have satisfied all the clauses of $N_{B}(S)$.  Therefore since $c'$ is satisfied in $F'$ but does not appear in $F,$ we have satisfied  extra $|N_{B}(S)|-1 = |S|$ clauses.
% Thus we have $\sat(F) \geq \sat(F')+|S|$.
%  $\sat(F) \geq \sat(F') + |N_B(S)| - 1 \geq \sat(F'') + |N_B(S)|$. Thus
% $\sat(F) = \sat(F'') + |N_B(S)| = \sat(F') + |S|$.
%
Suppose on the other hand that every assignment $\gamma$ for $F'$ that satisfies $\sat(F')$ clauses does not satisfy $c'$.
We can use the matching on $B[S\cup N_{B}(S)]$ to satisfy $|N_{B}(S)|-1$ clauses in $N_{B}(S)$, which would give us an additional $|S|$ clauses in $N_{B}(S)$.
Thus $\sat(F) \geq \sat(F') + |S|$.

 As $|N_B(S)|=|S|+1$, it suffices to show that $\sat(F) < \sat(F') + |N_B(S)|$. Suppose that there exists an assignment $\gamma$ to $F$ that satisfies $\sat(F') + |N_B(S)|$ clauses, then it must satisfy all the clauses of $N_B(S)$
and $\sat(F')$ clauses of $F''$.
 As $F[S]$ is not satisfiable, variables in $S$ alone can not satisfy all of $N_B(S)$. Hence there exists a clause $c'' \in N_B(S)$ such that there is a variable $v \in V(c'')\setminus S$ that satisfies $c''$.  But then $v \in V(c')$ and hence $c'$ would be satisfiable by $\gamma$, a contradiction as $\gamma$ satisfies $\sat(F')$ clauses of $F''$.
%\end{description}
\end{proof}

\section{Branching Rules and Reduction to $(m-k)$-\textsc{Hitting Set}}
\label{sec:branch}
Our algorithm first applies Reduction
Rules~\ref{pureliteral}, \ref{twooccurrence}, \ref{k1} and  \ref{k2} exhaustively on $(F,\alpha)$.
Then it applies two branching rules we describe below,
in the following order.

Branching on a variable $x$ means that the algorithm constructs two instances of the problem,
one by substituting $x=\textsc{true}$ and simplifying the instance and the other by substituting $x=\textsc{false}$ and
simplifying the instance. Branching on $x$ or $y$ being false means that the algorithm constructs two instances of the problem,
one by substituting $x=\textsc{false}$ and simplifying the instance and the other by substituting $y=\textsc{false}$ and
simplifying the instance.
Simplifying an instance is done as follows. For any clause $c$, if $c$ contains a literal $z$ with $z=\textsc{true}$, remove $c$ and reduce $\alpha$ by $1$. If $c$ contains a literal $z$ with $z=\textsc{false}$ and $c$ contains other literals, remove $z$ from $c$. If $c$ consists of the single literal $z=\textsc{false}$, remove $c$.
%  Finally remove $x$ from the set of variables, where $x$ is any variable that has been set to {\sc true} or {\sc false}.

A branching rule is correct if the instance on which it is applied is a
{\sc Yes}-instance if and only if the simplified instance of (at least) one of the branches is a {\sc Yes}-instance.

% Description of two branching rules is given in
% Figure~\ref{fig:branchingrules}.

% These branching rules are applied in this order.

\begin{brule}
\label{bran1}
If  $n(x) \geq 2$ and $n(\bar{x}) \geq 2$ then we branch on $x$.
\end{brule}

% After branching according to Branching Rule \ref{bran1},
Before attempting to apply Branching Rule \ref{bran2},
we apply the following rearranging step:
% \begin{description}
% \item[Rearranging Step.]
For all variables $x$ such that $n(\bar{x})=1$, swap literals $x$ and $\bar{x}$ in all clauses. Clearly, this will not change ${\rm sat}(F)$.
% \end{description}
 Observe that now for every variable $n(x)=1$ and $n(\bar{x})\geq 2$.

\begin{brule}
\label{bran2}
If there is a clause $c$ such that positive literals $x,y\in c$
then we
%As $n(x)=1$ and $n(y)=1$ we note that $c$ is the only clause
%containig these literals.
%Therefore there exists an optimal solution with $x$ or $y$ being false (if they are both true just change one of them
%to false).
branch on $x$ being false or $y$ being false.
\end{brule}

Branching Rule \ref{bran1} is exhaustive and thus its correctness also
follows. When we reach Branching Rule \ref{bran2}
for every variable $n(x)=1$ and $n(\bar{x})\geq 2$. As $n(x)=1$ and $n(y)=1$ we note that $c$ is
the only clause containing these literals. Therefore there exists an optimal solution with $x$ or $y$ being
false (if they are both true just change one of them to false). Thus, we have the following:

\begin{lemma}
\label{lem:cor5}
Branching Rules \ref{bran1} and \ref{bran2} are correct.
\end{lemma}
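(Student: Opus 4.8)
The plan is to establish correctness of the two branching rules separately, since they are applied in different regimes and rely on different observations.

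For Branching Rule~\ref{bran1}, the argument is essentially immediate: branching on a variable $x$ means we form two instances, one with $x=\textsc{true}$ and one with $x=\textsc{false}$, and simplify each. Since any truth assignment $\tau$ to $F$ either sets $x=\textsc{true}$ or $x=\textsc{false}$, the optimum $\sat(F)$ is realized in one of the two branches; conversely, any assignment to a simplified branch extends (by setting $x$ to the branch value) to an assignment of $F$ satisfying the corresponding number of clauses. The only thing to check is that the simplification step is bookkeeping-faithful: when a clause $c$ contains a literal forced \textsc{true} we remove $c$ and decrement $\alpha$ by $1$; when $c$ contains a literal forced \textsc{false} but has other literals we delete that literal; when $c$ is a singleton forced \textsc{false} we delete $c$ without changing $\alpha$. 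One verifies that in each case $\sat$ of the simplified formula plus the total decrement to $\alpha$ equals $\sat$ of the pre-simplification formula under the same partial assignment, so a branch is a \textsc{Yes}-instance iff there is an assignment extending the branch's partial assignment satisfying $\ge\alpha$ clauses of $F$. Hence Rule~\ref{bran1} is correct, and since its hypothesis $n(x)\ge 2$ and $n(\bar x)\ge 2$ is simply negated when the rule no longer applies, it is exhaustive.

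For Branching Rule~\ref{bran2}, the key structural fact is the state of the formula when the rule is reached: Rule~\ref{bran1} is no longer applicable and the rearranging step has been performed, so every variable has $n(x)=1$ and $n(\bar x)\ge 2$ (the case $n(x)=n(\bar x)=1$ having been eliminated by Reduction Rule~\ref{twooccurrence}, and $n(x)=0$ by Rule~\ref{pureliteral}). Given a clause $c$ containing positive literals $x$ and $y$, the fact that $n(x)=n(y)=1$ means $c$ is the unique clause in which $x$ (resp.\ $y$) appears as a positive literal, and since after rearranging no variable appears negated and positively with the negated occurrence count being $1$... more precisely, $x$ appears positively only in $c$ and negatively in $n(\bar x)$ other clauses; similarly for $y$. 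The claim is that there is an optimal assignment with $x=\textsc{false}$ or $y=\textsc{false}$: take any optimal $\tau$; if $\tau(x)=\tau(y)=\textsc{true}$, then $c$ is already satisfied by $x$, so flipping $y$ to \textsc{false} cannot destroy satisfaction of $c$, and since $y$ occurs positively \emph{only} in $c$, flipping $y$ to \textsc{false} can only help the clauses containing $\bar y$ — so $\sat$ does not decrease. Thus we may assume $\tau(x)=\textsc{false}$ or $\tau(y)=\textsc{false}$, which means the optimum is attained in one of the two branches; the converse direction (a solution to a branch lifts back to $F$) is again just the extension-plus-simplification bookkeeping used for Rule~\ref{bran1}.

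The main obstacle is not any single deep step but rather being careful about \emph{which} preprocessing and rearranging invariants hold at the moment each branching rule fires, and making sure the "simplify the instance" operation is correctly accounted for in $\alpha$ in all three of its cases; this is where a sloppy argument could go wrong. Concretely, I would first state and use the invariant "$n(x)=1$ and $n(\bar x)\ge 2$ for all $x$" as a consequence of exhaustive application of Rules~\ref{pureliteral}--\ref{k2}, Branching Rule~\ref{bran1}, and the rearranging step, and only then run the flipping argument for Rule~\ref{bran2}. Since Branching Rule~\ref{bran1} is literally the negation of its own applicability condition it is trivially exhaustive, and Rule~\ref{bran2}'s correctness is the flipping argument above, so combining the two yields Lemma~\ref{lem:cor5}.
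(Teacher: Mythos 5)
Your proposal is correct and follows essentially the same route as the paper: Branching Rule~\ref{bran1} is correct because it is exhaustive over the two values of $x$, and Branching Rule~\ref{bran2} is justified by the flipping argument — since $n(x)=n(y)=1$, the clause $c$ is the only one containing the positive literals $x$ and $y$, so if an optimal assignment sets both to \textsc{true} one of them can be flipped to \textsc{false} without decreasing the number of satisfied clauses. Your additional care about the bookkeeping of the simplification step and the invariants in force when each rule fires is sound but not a departure from the paper's argument.
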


%
% \begin{figure}[t]
% \begin{center}
% \begin{boxedminipage}{1\textwidth}
% \noindent
% {\bf Branching Rules.}\\
% These branching rules are applied in this order.
%
% \begin{brule}
% \label{bran1}
% If  $n(x) \geq 2$ and $n(\bar{x}) \geq 2$ then we branch on $x$.
% \end{brule}
% \begin{description}
% \item[Rearranging Step.] For all variables $x$ such that $n(\bar{x})=1$, swap literals $x$ and $\bar{x}$ in all clauses.
% \end{description}
%
% {\bf Comment:} Observe now that for every variable $n(x)=1$ and $n(\bar{x})\geq 2$.
% \begin{brule}
% \label{bran2}
% If there is a clause $c$ such that positive literals $x,y\in c$.
% %As $n(x)=1$ and $n(y)=1$ we note that $c$ is the only clause
% %containig these literals.
% %Therefore there exists an optimal solution with $x$ or $y$ being false (if they are both true just change one of them
% %to false).
% Branch on $x$ being false and $y$ being false.
% \end{brule}
% %{\bf Comment:} Observe as $n(x)=1$ and $n(y)=1$, $c$ is the only clause containing these as positive literals.
% % \begin{brule}
% % \label{bran}
% % hdjdj
% % \end{brule}
%
% \end{boxedminipage}
% \end{center}
% \caption{\bf \em Branching Rules applied in the algorithm.}
% \label{fig:branchingrules}
% \end{figure}
%

Let $(F,\alpha)$ be the given instance on which Reduction Rules~\ref{pureliteral}, \ref{twooccurrence}, \ref{k1} and  \ref{k2}, and
Branching Rules~\ref{bran1} and \ref{bran2} do not apply. Observe that for such an instance $F$
the following holds:
\begin{enumerate}
\setlength{\itemsep}{-2pt}
 \item For every variable $x$, $n(x)=1$ and $n(\bar{x})\geq 2$.
\item Every clause contains at most one positive literal.
\end{enumerate}
We call a formula $F$ satisfying the above properties {\em special}.
In what follows we describe an algorithm for our problem on special instances. Let $c(x)$ denote the {\em unique} clause
containing positive literal $x$. We can obtain a matching  saturating $V(F)$ in
$B_F$ by taking the edge connecting the variable $x$ and the clause $c(x)$.  We denote the resulting matching by $M_u$.

%(that is, the matching $M$
%connects any variable to the unique clause containing the positive literal of the variable).
We first describe a transformation that will be helpful in reducing our problem to $(m-k)$-\textsc{Hitting Set}.
Given a formula $F$ we obtain a new formula $F'$ by changing the clauses of $F$
as follows. If there exists some $c(x)$ such that $|c(x)| \geq 2$, do the following.
Let $c' = c(x) - x$ (that is, $c'$ contain the same literals as $c(x)$ except for $x$)
and add $c'$ to all clauses containing the literal $\bar{x}$. Furthermore remove $c'$ from $c(x)$ (which results
in $c(x)=(x)$ and therefore $|c(x)|=1$).

Next we prove the validity of the above transformation.
\begin{lemma}
\label{lem:transformation}
 Let $F'$ be the formula obtained by applying the transformation described above on $F$.
  Then $\sat(F')=\sat(F)$ and   $\nu(B_F)=\nu(B_{F'})$.
\end{lemma}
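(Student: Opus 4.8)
The plan is to show the two claims separately, both by exhibiting explicit correspondences between assignments, respectively matchings, of $F$ and $F'$. For the matching-number claim, observe that the transformation changes neither the variable set $V(F)=V(F')$ nor, crucially, the set of \emph{pairs} $\{x : x\in V(c) \text{ for some clause } c\}$: a variable $z$ occurs (positively or negatively) in exactly the same clauses of $F'$ as in $F$, with the single exception that $x$ now occurs only in $c(x)$; but $x$ was already matched to $c(x)$ by $M_u$ in $F$, and this edge survives. More precisely, every clause of $F'$ has the same set of underlying variables as the corresponding clause of $F$ \emph{except} that clauses containing $\bar x$ gain the variables of $c'=c(x)-x$, and $c(x)$ loses those variables. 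Since a clause only gains variables or loses $x$ (which has degree already accounted for), $B_{F'}$ differs from $B_F$ only by: deleting edges between $x$-neighbourhood clauses and $V(c')$... wait — this needs care; I would instead argue directly that $M_u$ is still a perfect matching of $V(F')$ into the clauses of $F'$ (the edge $x c(x)$ and every edge $z c(z)$ remain valid, since $c(z)$ still contains $z$), so $\nu(B_{F'})\ge |V(F')| = |V(F)| = \nu(B_F)$; and conversely $\nu(B_{F'})\le |V(F')|=|V(F)|$ trivially, giving equality.

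For $\sat(F')=\sat(F)$, I would show both inequalities by transforming assignments. Given any assignment $\tau$ of $F$, keep it as an assignment of $F'$ (same variable set). I claim the same number of clauses is satisfied. The only clauses that change are $c(x)$ (which becomes $(x)$) and the clauses containing $\bar x$ (which absorb $c'$). Split into two cases on $\tau(x)$. If $\tau(x)=\textsc{true}$, then in $F$ the clause $c(x)$ is satisfied (by $x$); in $F'$, $c(x)=(x)$ is still satisfied, and every clause $d$ with $\bar x\in d$ is in $F$ \emph{not} satisfied via $\bar x$, and in $F'$ has the extra literals of $c'$ — so $d$ could only gain satisfaction. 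Hmm, so this direction only gives $\sat(F') \ge \sat_\tau(F)$ pointwise but possibly strictly more; that is fine for one inequality but I must be careful it does not overshoot the other. The cleaner route: handle $\tau(x)=\textsc{false}$ too. If $\tau(x)=\textsc{false}$: in $F$, $c(x)$ is satisfied iff some other literal of $c'$ is satisfied by $\tau$, i.e. iff $\tau$ satisfies $c'$; in $F'$, $c(x)=(x)$ is \emph{not} satisfied. Meanwhile each clause $d\ni\bar x$: in $F$ it is satisfied (contains $\bar x$, $\tau(x)=\textsc{false}$), and in $F'$ it still contains $\bar x$ so still satisfied. So when $\tau(x)=\textsc{false}$, $F'$ loses $c(x)$ relative to $F$ exactly when $\tau\models c'$, and gains nothing — this goes the wrong way. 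The resolution is that one should \emph{not} keep $\tau$ fixed but flip $x$ to the value that is best for $F'$: given $\tau$ optimal for $F$, if $\tau(x)=\textsc{false}$ and $\tau\models c'$, observe that in $F$ flipping $x$ to $\textsc{true}$ keeps $c(x)$ satisfied (now by $x$) and the clauses $d\ni\bar x$ were only being satisfied via $\bar x$ perhaps — so this could lose clauses in $F$, but we only need it for $F'$. So: define $\tau'$ on $V(F')$ by $\tau'(x)=\textsc{true}$ if $\tau\models c'$ and $\tau'(x)=\tau(x)$ otherwise, $\tau'=\tau$ elsewhere; then check case by case that $\sat_{\tau'}(F')\ge \sat_\tau(F)$. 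For the reverse inequality $\sat(F)\ge\sat(F')$, do the symmetric thing: given $\tau'$ optimal for $F'$, produce $\tau$ for $F$ by possibly re-choosing $\tau(x)$: if some clause $d\ni\bar x$ is satisfied in $F'$ by one of the absorbed literals of $c'$, then setting $\tau(x)=\textsc{true}$ satisfies $c(x)$ in $F$ and each such $d$ loses at worst the $\bar x$ option but is still satisfied by that same $c'$-literal, and clauses $d\ni\bar x$ \emph{not} helped by $c'$-literals keep $\bar x$... but then $\tau(x)=\textsc{true}$ kills $\bar x$ there — so instead keep $\tau(x)=\tau'(x)$ unless it is strictly beneficial. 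The bookkeeping is finite and local to $x$ and its neighbourhood.

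Concretely I would organize the $\sat$-equality as: (i) $\sat(F')\le\sat(F)$ via the map $\tau'\mapsto\tau$ where $\tau$ agrees with $\tau'$ off $\{x\}$ and $\tau(x)$ is chosen to be $\textsc{true}$ if $c(x)=(x)$ was the only "new" gain, else $\tau'(x)$, verifying that every $F'$-clause satisfied by $\tau'$ has a satisfied $F$-clause "charged" to it injectively; (ii) $\sat(F)\le\sat(F')$ symmetrically. Each verification is a short case analysis on the value assigned to $x$ and on whether the relevant clause is $c(x)$, a clause containing $\bar x$, or an untouched clause. The main obstacle, as the scratch work above shows, is precisely getting the $x$-value adjustment right so that neither direction overshoots — i.e. proving that the transformation "moves $c'$ from $c(x)$ onto the $\bar x$-clauses" is value-preserving rather than merely value-nondecreasing, which forces one to allow re-optimizing the single variable $x$ on both sides. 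Everything else (the matching claim, and the clauses untouched by the transformation) is immediate.
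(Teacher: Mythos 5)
Your matching argument is fine: $M_u$ still saturates $V(F')$ in $B_{F'}$ (each $c(z)$ still contains $z$, and $c(x)=(x)$ still contains $x$), so $\nu(B_{F'})=|V(F')|=|V(F)|=\nu(B_F)$. Your forward direction for the $\sat$ claim is also correct: when $\tau\not\models c'$ nothing changes (a falsified subclause is added to the $\bar x$-clauses and removed from $c(x)$, affecting no satisfaction), and when $\tau\models c'$ setting $\tau'(x)=\textsc{true}$ works because every clause $d\cup c'$ is already satisfied via $c'$ (which does not involve $x$) while $(x)$ becomes satisfied. This is essentially the paper's argument, which organizes the whole proof around the single case split ``is $c'$ true under the assignment?''.

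The genuine gap is in the reverse inequality $\sat(F)\ge\sat(F')$. The adjustment you propose there --- setting $\tau(x)=\textsc{true}$ when some $\bar x$-clause of $F'$ is satisfied by an absorbed $c'$-literal --- is the wrong flip, as you yourself notice mid-argument (a clause $d\ni\bar x$ of $F$ does \emph{not} contain the literals of $c'$, so it cannot be ``still satisfied by that same $c'$-literal''); and your fallback, ``keep $\tau(x)=\tau'(x)$ unless it is strictly beneficial,'' together with the rule in your final paragraph (``\textsc{true} if $c(x)=(x)$ was the only new gain''), is not a well-defined rule and is still incorrect when $\tau'(x)=\textsc{true}$ and $\tau'\models c'$. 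The correct mirror of your forward step is: if $\tau'\models c'$, set $\tau(x)=\textsc{false}$. Then in $F$ the clause $c(x)=(x)\cup c'$ is satisfied via $c'$, every clause containing $\bar x$ is satisfied via $\bar x$, and all other clauses are untouched, so $\sat_\tau(F)\ge\sat_{\tau'}(F')$; if $\tau'\not\models c'$ no adjustment is needed, exactly as in the forward direction. With that one correction your proof closes and coincides with the paper's.
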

\begin{proof}
We note that the matching  $M_u$ remains a matching in $B_{F'}$
and thus $\nu(B_F)=\nu(B_{F'})$.
Let $\gamma$ be any truth assignment to the variables in $F$ (and $F')$ and note
that if $c'$ is false under $\gamma$ then $F$ and $F'$ satisfy exactly the same clauses under $\gamma$ (as we add and subtract something
false to the clauses). So assume that $c'$ is true under $\gamma$.

If $\gamma$ maximizes the number of satisfied clauses in $F$ then clearly we may assume that $x$ is false (as $c(x)$ is true due to $c'$).
Now let $\gamma'$ be equal to $\gamma$ except the value of $x$ has been flipped to true. Note that exactly the same clauses are satisfied
in $F$ and $F'$ by $\gamma$ and $\gamma'$, respectively.
Analogously, if an assignment maximizes the number of satisfied clauses in $F'$ we may assume that $x$ is true and
by changing it to false we satisfy equally many clauses in $F$. Hence,  $\sat(F')=\sat(F)$.
%Therefore $F$ and $F'$ are equivalent.
\end{proof}

Given a special instance $(F,\alpha)$ we apply the above transformation repeatedly until no longer possible and obtain an instance
$(F',\alpha)$ such that $\sat(F')=\sat(F)$, $\nu(B_F)=\nu(B_{F'})$ and $|c(x)|=1$ for all $x \in V(F')$.
We call such an instance $(F',\alpha)$ {\em transformed special}. Observe that, it takes polynomial time, to obtain the transformed special
instance from a given special instance.

For simplicity of presentation we denote the transformed special
instance by $(F,\alpha)$. Let $C^*$ denote all clauses that are not matched by
$M_u$ (and therefore only contain negated literals). We associate a hypergraph $H^*$ with the transformed special instance.
Let $H^*$ be the hypergraph with
vertex set $V(F)$ and edge set $E^* = \{ V(c) ~|~ c \in C^* \}$.

% Let $H$ be a hypergraph. A set $S\subseteq V(H)$ is called a hitting set if for all $e\in E(H)$, $V(e)\cap S \neq \emptyset$.
% The problem on the hypergraph  that we make use of  for our algorithm is as follows.
% \begin{center}
% \fbox{~\begin{minipage}{0.9\textwidth}
% $(m-k)$-\textsc{Hitting Set}
%
% \emph{Instance}: A  hypergraph $H$ with $|V(H)|=n, |E(H)|=m$  and a positive integer $\alpha$.
%
% \smallskip
%
% \emph{Parameter}: $k$.
%
% \smallskip
%
% \emph{Question}: Does there exist a hitting set $S\subseteq V(H)$ of size $m-k$?
% \end{minipage}~}
% \end{center}
We now show the following equivalence between $(\nu(F)+k)$-\textsc{SAT} on transformed special instances and  $(m-k)$-\textsc{Hitting Set}.

\begin{lemma}
\label{lem:newtransformation}
Let $(F,\alpha)$ be the transformed special instance and $H^*$ be the hypergraph associated with it.
Then $\sat(F)\geq \alpha$ if and only if there is a hitting set in $H^*$ of size at most $|E(H^*)|-k$, where $k=\alpha-\nu(F)$.
\end{lemma}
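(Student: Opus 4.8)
The plan is to exploit the rigid structure of a transformed special instance to set up a correspondence between truth assignments of $F$ and vertex subsets of $H^*$ (taken complementarily). Recall that in such an instance every variable $x$ has $n(x)=1$ with its unique positive occurrence being the singleton clause $c(x)=(x)$, and every clause not saturated by $M_u$ lies in $C^*$ and consists only of negated literals. Hence, as multisets, $F$ is the disjoint union of the $n:=|V(F)|$ pairwise-distinct singleton clauses $\{(x):x\in V(F)\}$ (each occurring exactly once, since $n(x)=1$) and the $|C^*|$ clauses of $C^*$; so $m=|F|=n+|C^*|$, the matching $M_u$ saturates $V(F)$ hence $\nu(F)=n$, and $|E(H^*)|=|C^*|=m-n$. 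Consequently $|E(H^*)|-k=(m-n)-(\alpha-n)=m-\alpha$, and it suffices to prove: $\sat(F)\ge\alpha$ if and only if $H^*$ has a hitting set of size at most $m-\alpha$.

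First I would record the key identity. For a truth assignment $\tau$ put $X=\{x:\tau(x)=\textsc{true}\}$ and $\bar X=V(F)\setminus X$. The singleton clause $(x)$ is satisfied by $\tau$ iff $x\in X$, so exactly $|X|$ of them are satisfied; and a clause $c\in C^*$, all of whose literals are negated and whose variable set $V(c)$ is the corresponding edge of $H^*$, is satisfied iff $\tau$ sets some variable of $V(c)$ to \textsc{false}, i.e.\ iff $\bar X\cap V(c)\ne\emptyset$, i.e.\ iff $\bar X$ hits that edge. Therefore $\sat_\tau(F)=|X|+|C^*|-d(\bar X)$, where $d(\bar X)$ denotes the number of edges of $H^*$ \emph{not} hit by $\bar X$.

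For the ``if'' direction, given a hitting set $S\subseteq V(H^*)$ with $|S|\le m-\alpha$, take $\tau$ with $\bar X=S$; then $d(\bar X)=0$, so $\sat_\tau(F)=|X|+|C^*|=(n-|S|)+(m-n)=m-|S|\ge\alpha$, hence $\sat(F)\ge\alpha$. For the ``only if'' direction, take $\tau$ with $\sat_\tau(F)\ge\alpha$, let $X,\bar X$ be as above, and let $D$ be the set of clauses $c\in C^*$ with $\bar X\cap V(c)=\emptyset$; then $V(c)\subseteq X$ for each such $c$, and $|D|=d(\bar X)$. For each $c\in D$ pick a vertex $v_c\in V(c)$, set $Y=\{v_c:c\in D\}$ (so $|Y|\le|D|$), and put $S=\bar X\cup Y$. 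Every edge of $H^*$ is hit by $S$: if its clause is not in $D$ it was already hit by $\bar X$, and if it is in $D$ it contains $v_c\in Y$. So $S$ is a hitting set with $|S|\le|\bar X|+|D|$. From the identity, $\alpha\le\sat_\tau(F)=|X|+(m-n)-|D|$, so $|\bar X|=n-|X|\le m-\alpha-|D|$, whence $|S|\le m-\alpha$, as required.

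The argument is essentially routine once the identity is in place; the only step needing genuine care is the ``only if'' repair, since a satisfying assignment need not yield a hitting set on the nose — I charge one extra vertex to each $C^*$-clause whose complementary set fails to hit it, and then check, via the counting identity, that this surcharge is exactly absorbed by the slack $\sat_\tau(F)-\alpha$. The remaining work is the bookkeeping that identifies $\nu(F)=n$, $|E(H^*)|=m-n$ and $|E(H^*)|-k=m-\alpha$ from the defining properties of a transformed special instance.
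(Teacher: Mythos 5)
Your proof is correct and follows essentially the same route as the paper: both arguments rest on the observation that in a transformed special instance the positive occurrences are exactly the $n$ singleton clauses, so an assignment whose false-set hits every edge of $H^*$ satisfies exactly $|F|-|\bar X|$ clauses, giving $\sat(F)=|F|-\tau(H^*)$. The only cosmetic difference is that the paper first normalizes an optimal assignment so that all of $C^*$ is satisfied (by flipping a true variable of an unsatisfied $C^*$-clause to false), whereas you keep the assignment fixed and instead charge one extra hitting-set vertex to each missed edge, absorbing the surcharge via the slack in your counting identity; the two devices are interchangeable.
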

\begin{proof}
We start with a simple observation about an assignment satisfying the maximum number of clauses of $F$.
There exists an optimal truth assignment to $F$, such that all clauses in $C^*$ are true.
Assume that this is not the case and let $\gamma$ be an optimal truth assignment satisfying as many clauses
from $C^*$ as possible and assume that $c \in C^*$ is not satisfied. Let $\bar{x} \in c$ be an arbitrary literal and note that
$\gamma(x)=\textsc{true}$. However, changing $x$ to false does not decrease the number
of satisfied clauses in $F$
%(as $n(x)=1$)
and increases the number of satisfied clauses in $C^*$.

%Our problem is now equivalent to deciding if there is a set hitting set
%in $H^*$ of size at most $|E(H^*)|-k$.
Now we show that  $\sat(F)\geq \alpha$ if and only if there is a hitting set in $H^*$ of size at most $|E(H^*)|-k$.
%The reason for this
%is the following.
Assume that $\gamma$ is an optimal truth assignment to $F$, such that all clauses in $C^*$ are true. Let $U \subseteq V(F)$
be all variables that are false in $\gamma$ and note that $U$ is a hitting set in $H^*$. Analogously if $U'$ is a hitting set in $H^*$
then by letting all variables in $U'$ be false and all other variables in $V(F)$ be true we get a truth assignment that satisfies
$|F|-|U'|$ clauses in $F$.
Therefore if $\tau(H^*)$ is the size of a minimum hitting set in $H^*$ we have $\sat(F) = |F|-\tau(H^*)$.
Hence, $\sat(F) = |F| - \tau(H^*) = |V(F)|+|C^*| - \tau(H^*)$ and thus
$\sat(F)\geq \alpha$ if and only if $|C^*|-\tau(H^*) \geq k$, which is equivalent to $\tau(H^*) \leq |E(H^*)|-k$.
\end{proof}

Therefore our problem is fixed-parameter tractable on transformed special instances, by the next theorem that follows from the kernelization result in \cite{GutinJY11}.
%\begin{theorem} [\cite{GutinJY11}]
%\label{thm:setcover}
%There exists an algorithm, running in time $2^{O(k^2)}+(|V(H)+E(H)|)^{O(1)}$, that decides whether a hypergraph $H$
%has a hitting set of size $|E(H)|-k$.
%\end{theorem}
\begin{theorem}
\label{thm:setcover}
There exists an algorithm for  $(m-k)$-\textsc{Hitting Set} running in time $2^{O(k^2)}+O((n+m)^{O(1)})$.
\end{theorem}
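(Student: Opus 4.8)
The plan is to derive the stated running time directly from the kernelization result of Gutin, Jones and Yeo \cite{GutinJY11} for $(m-k)$-\textsc{Hitting Set}, combined with any brute-force solver for the problem on the kernel. First I would recall what their kernelization gives: a polynomial-time algorithm that, on input a hypergraph $H$ with $m$ edges and $n$ vertices together with the parameter $k$, produces an equivalent instance $(H',k')$ with $k'\le k$ and whose size — in particular the number of vertices $n'$ and the number of edges $m'$ — is bounded by some function $g(k)=2^{O(k)}$ (an exponential kernel, as stated in the abstract). The reduction runs in time $(n+m)^{O(1)}$ and the equivalence "$(H,k)$ is a \textsc{Yes}-instance iff $(H',k')$ is" is exactly property (i) of a kernelization.

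The second step is to solve the kernelized instance by exhaustive search. On the reduced instance we must decide whether $H'$ has a hitting set of size at most $m'-k'$; equivalently, whether there is a set of at most $k'$ edges of $H'$ whose removal leaves an instance with a hitting set of size $m'-k'$ in the remaining... more simply, one can just try all subsets $X\subseteq V(H')$ of the ground set (there are $2^{n'}$ of them) and check in polynomial time whether $X$ is a hitting set of the required size. Since $n'=g(k)=2^{O(k)}$, this brute force costs $2^{n'}\cdot (n'+m')^{O(1)} = 2^{2^{O(k)}}$ time, which is too weak. So instead I would observe, as \cite{GutinJY11} do, that after reduction every hitting set question of the form "$\tau(H')\le m'-k'$" can be answered by branching: pick an edge $e$ of $H'$; in any hitting set at least one vertex of $e$ is chosen, and at least $k'$ edges are "doubly hit" or can be shrunk — more directly, one branches on which of the at most $m'$ edges are \emph{not} uniquely responsible for a chosen vertex. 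The cleanest route is simply to note $n' = 2^{O(k)}$ forces $2^{n'}$, which is $2^{O(k^2)}$ only if $n'=O(k^2)$; hence what is really needed is that the kernel of \cite{GutinJY11} has $O(k^2)$ \emph{many vertices} (or that the number of edges is $O(k^2)$ and each has size $O(k)$), so that a $2^{O(k^2)}$-time enumeration over all subsets of the ground set of the kernel suffices. With that size bound, enumerating all subsets of $V(H')$ takes $2^{|V(H')|}\,(n'+m')^{O(1)} = 2^{O(k^2)}$ time, and adding the polynomial-time kernelization gives the claimed bound $2^{O(k^2)} + (n+m)^{O(1)}$.

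The main obstacle is therefore bookkeeping the precise size guarantee of the Gutin–Jones–Yeo kernel: the theorem's additive $2^{O(k^2)}$ term is tight only if the kernel has $O(k^2)$ vertices (or an $O(k^2)$ bound on the relevant search space), so I must cite \cite{GutinJY11} for exactly that quantitative statement rather than the weaker "exponential kernel" phrasing used loosely in the abstract, and then verify that brute-forcing the kernel — trying every subset of its vertex set as a candidate hitting set and testing the size constraint $|X|\le m'-k'$ — runs within $2^{O(k^2)}$. Everything else (correctness of the kernelization, polynomiality of the reduction, correctness of brute force) is immediate from the cited result and the definition of hitting set, so no further work is needed beyond assembling these two ingredients.
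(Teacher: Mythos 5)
Your overall strategy (kernelize via \cite{GutinJY11}, then solve the kernel by exhaustive search) is the intended one, but the way you close the argument has a genuine gap. You correctly observe that brute-forcing over \emph{all} subsets of the kernel's ground set only yields $2^{O(k^2)}$ if the kernel has $O(k^2)$ vertices, and you then propose to resolve this by citing \cite{GutinJY11} for exactly that quantitative bound (or for a kernel with $O(k^2)$ edges each of size $O(k)$). That statement is not available to cite: the kernel of \cite{GutinJY11} is an \emph{exponential} kernel, with $2^{O(k)}$ vertices and edges --- this is not loose phrasing in the abstract but the actual result, and Section \ref{sec:nokern} (via \cite[Theorem 3]{GutinJY11}) shows that no polynomial-size kernel exists unless coNP $\subseteq$ NP/poly, which rules out the second variant you suggest outright. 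So the fact your plan hinges on cannot be supplied, and enumerating all $2^{2^{O(k)}}$ subsets of the actual kernel is, as you yourself note, far too slow.

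The missing ingredient is Lemma \ref{miniDominatingSet} (also from \cite{GutinJY11}): $H$ has a hitting set of size at most $m-k$ if and only if it has a $k$-\emph{mini}-hitting set, i.e.\ a certificate $S_{\textsc{mini}}$ of size at most $k$ with $|{\cal F}[S_{\textsc{mini}}]|\ge |S_{\textsc{mini}}|+k$. On the kernelized instance with $n'=2^{O(k)}$ vertices one therefore enumerates only the subsets of size at most $k$, of which there are at most $(n'+1)^{k}=2^{O(k^2)}$, checking each candidate in time polynomial in the kernel size; a successful candidate is then lifted to an actual hitting set of size at most $m-k$ by the second part of Lemma \ref{miniDominatingSet}. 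Together with the polynomial-time kernelization this gives the claimed $2^{O(k^2)}+O((n+m)^{O(1)})$ bound. In short: the bounded-size object to enumerate is the mini-hitting set, not the hitting set (nor an arbitrary vertex subset), and that is precisely what lets an exponential kernel yield a $2^{O(k^2)}$ algorithm.
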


In the next section we give faster algorithms for  $(\nu(F)+k)$-\textsc{SAT} on transformed special instances  by giving faster algorithms for
$(m-k)$-\textsc{Hitting Set}.
%running in time
% $ 29.557^{k}(m+n)^{O(1)}$.
%$ O((2e)^k k^{O(\log k)} (m+n)^{O(1)})=O((2e)^{2k + O((\log  k)^2)} (m+n)^{O(1)})$.

\section{Algorithms for $(m-k)$-\textsc{Hitting Set}}
\label{sec:m-khs}
To obtain faster algorithms for $(m-k)$-\textsc{Hitting Set}, we utilize the following concept of $k$-\emph{mini-hitting set} introduced in~\cite{GutinJY11}.
\begin{definition}
Let $H=(V,{\cal F})$ be a hypergraph and $k$ be a nonnegative integer. A $k$-\emph{mini-hitting set} is a set $S_{\textsc{mini}} \subseteq V$ such that $|S_{\textsc{mini}}| \le k$ and
$|{\cal F}[S_{\textsc{mini}}]| \ge |S_{\textsc{mini}}| + k$.
\end{definition}
%The following lemma indicates why the concept of mini-hitting set is of interest.

\begin{lemma}[\cite{GutinJY11}]
\label{miniDominatingSet}
A hypergraph $H$ has a hitting set of size at most $m-k$ if and only if it has a $k$-mini-hitting set. Moreover,
given a $k$-mini-hitting set $S_{\textsc{mini}}$, we can construct a hitting set $S$ with $|S| \le m-k$ such that $S_{\textsc{mini}} \subseteq S$ in polynomial time.
\end{lemma}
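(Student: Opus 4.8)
The plan is to prove the two directions of the equivalence separately, and then extract the polynomial-time construction from the ``harder'' direction. For the easy direction, suppose $H$ has a $k$-mini-hitting set $S_{\textsc{mini}}$, so $|{\cal F}[S_{\textsc{mini}}]| \ge |S_{\textsc{mini}}| + k$. The idea is to start with $S_{\textsc{mini}}$ and greedily extend it to a hitting set by throwing in, for each edge not yet hit, one vertex of that edge. Let $t = |{\cal F}[S_{\textsc{mini}}]|$ be the number of edges already hit by $S_{\textsc{mini}}$; there are $m - t$ edges left, so after this greedy completion we obtain a hitting set $S$ with
\[
|S| \le |S_{\textsc{mini}}| + (m - t) \le |S_{\textsc{mini}}| + m - (|S_{\textsc{mini}}| + k) = m - k,
\]
and by construction $S_{\textsc{mini}} \subseteq S$. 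This is clearly polynomial time, which already gives the ``moreover'' part.

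For the converse, suppose $H$ has a hitting set $S$ with $|S| \le m - k$; I must produce a $k$-mini-hitting set. Consider building $S$ one vertex at a time in an arbitrary order $v_1, v_2, \dots, v_{|S|}$, and let $S_i = \{v_1,\dots,v_i\}$. Track the quantity $\phi(i) = |{\cal F}[S_i]| - |S_i|$, which starts at $\phi(0) = 0$ and is monotone in the sense that each new vertex $v_{i+1}$ hits at least one new edge only if it is not redundant — but since vertices may be redundant (hitting no new edge), I should instead discard redundant vertices first and assume $S$ is an inclusion-minimal hitting set, so every $v_i$ hits at least one edge not hit by $S_{i-1}$, i.e. $|{\cal F}[S_i]| \ge i$, hence $\phi(i) \ge 0$ for all $i$. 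At the end, $|{\cal F}[S]| = m$ (it is a hitting set) and $|S| \le m-k$, so $\phi(|S|) \ge k$. Since $\phi(0)=0$, $\phi$ increases by at most (the number of edges $v_{i+1}$ newly hits) at each step — but that can be large, so a single step may overshoot. The key point is that I want the \emph{first} index $i$ at which $\phi(i) \ge k$; take $S_{\textsc{mini}} := S_i$ for that $i$. Then $|{\cal F}[S_{\textsc{mini}}]| - |S_{\textsc{mini}}| \ge k$, i.e. $|{\cal F}[S_{\textsc{mini}}]| \ge |S_{\textsc{mini}}| + k$, which is exactly the defining inequality. It remains to check $|S_{\textsc{mini}}| \le k$: at step $i-1$ we have $\phi(i-1) \le k-1$, so $|S_{i-1}| = |{\cal F}[S_{i-1}]| - \phi(i-1) \ge |{\cal F}[S_{i-1}]| - (k-1)$; this does not immediately bound $|S_{i-1}|$, so the argument needs to instead observe that $\phi(i) \le \phi(i-1) + (\text{edges } v_i \text{ newly hits}) $ is not the right handle. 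The clean way: since $|S_{i}| - |{\cal F}[S_{i-1}]| \le |S_i|$ and we want $|S_i|\le k$; note $\phi(i-1) \le k-1$ gives $|{\cal F}[S_{i-1}]| \le |S_{i-1}| + k - 1 = (i-1) + k - 1$, while minimality gives $|{\cal F}[S_{i-1}]| \ge i-1$; combining with $\phi(i) \ge k$, i.e. $|{\cal F}[S_i]| \ge i + k$, and $|{\cal F}[S_i]| \le |{\cal F}[S_{i-1}]| + d(v_i)$, forces $i \le k$ once one also uses that each of the first $i$ vertices hits at least one fresh edge — I expect the bound $|S_{\textsc{mini}}| = i \le k$ to follow by tracking $|{\cal F}[S_{j}]| \ge j$ together with the first-crossing choice.

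The main obstacle, and the step to carry out most carefully, is exactly this last size bound $|S_{\textsc{mini}}| \le k$ in the converse: one must argue that the first time the surplus $|{\cal F}[S_i]| - |S_i|$ reaches $k$ it does so with at most $k$ vertices, which uses inclusion-minimality of $S$ (so no vertex is wasted) together with the fact that before crossing, the surplus is at most $k-1$. I would state this as: for all $j < i$, $|{\cal F}[S_j]| \le |S_j| + k - 1$, and since also $|{\cal F}[S_j]| \ge |S_j|$ trivially while $i$ vertices hit all $|{\cal F}[S_i]|\ge |S_i|+k$ edges, the count of distinct ``new edges introduced'' over the $i$ steps is $|{\cal F}[S_i]| \ge i+k$; but each step before the last contributes surplus increase keeping it $\le k-1$, so the per-step accounting yields $i - 1 \le $ (surplus before last step) $\le k-1$, whence $i \le k$. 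Everything else — the greedy completion, the polynomial running time, and the appeal to inclusion-minimality (which can be enforced in polynomial time by deleting redundant vertices) — is routine. I would also remark that the construction in the ``moreover'' clause is precisely the greedy completion from the easy direction, so no extra work is needed there.
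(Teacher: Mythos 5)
The paper does not prove this lemma itself (it is imported from \cite{GutinJY11}), so I judge your argument on its own. The easy direction and the ``moreover'' clause are correct: greedily completing $S_{\textsc{mini}}$ by one vertex per unhit edge gives $|S|\le |S_{\textsc{mini}}|+(m-|{\cal F}[S_{\textsc{mini}}]|)\le m-k$, in polynomial time. The gap is in the converse, and it is exactly at the step you flagged as needing care: the bound $|S_{\textsc{mini}}|\le k$. Your ``first-crossing'' choice takes a \emph{prefix} $S_i=\{v_1,\dots,v_i\}$ of an arbitrary ordering of an inclusion-minimal hitting set, and the final accounting you sketch ($i-1\le \phi(i-1)\le k-1$) amounts to claiming that each of the first $i-1$ vertices hits at least two new edges. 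Minimality only guarantees one new edge per vertex, so this is false in general. Concretely: let $m=100$, $k=2$, and let $S$ be a minimal hitting set of size $98$ in which $v_1,\dots,v_{97}$ each hit exactly one new edge and $v_{98}$ hits three. Then $\phi(i)=0$ for all $i\le 97$ and the first crossing occurs at $i=98$, so your $S_{\textsc{mini}}$ has $98$ vertices rather than at most $2$ --- even though $\{v_{98}\}$ by itself is a valid $2$-mini-hitting set.

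The repair is to select a \emph{subset} rather than a prefix. With $n_j:=|{\cal F}[S_j]|-|{\cal F}[S_{j-1}]|\ge 1$, the sets ${\cal F}[S_j]\setminus{\cal F}[S_{j-1}]$ are pairwise disjoint and each lies inside ${\cal F}[v_j]$, so for any index set $I$ the set $T=\{v_j: j\in I\}$ satisfies $|{\cal F}[T]|\ge\sum_{j\in I}n_j=|I|+\sum_{j\in I}(n_j-1)$. Since $\sum_j(n_j-1)=m-|S|\ge k$, discard all $j$ with $n_j=1$ and accumulate indices with $n_j\ge 2$ until $\sum_{j\in I}(n_j-1)$ first reaches $k$; each accumulated index contributes at least $1$ to this sum, so $|I|\le k$, and $T$ is the required $k$-mini-hitting set. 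Everything else in your write-up stands.
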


\subsection{Deterministic Algorithm}

Next we give an algorithm that finds a  $k$-mini-hitting set $S_{\textsc{mini}}$ if it exists, in time $c^k(m+n)^{O(1)}$, where $c$ is a constant.
We first describe a randomized
algorithm based on color-coding~\cite{AlonYZ95} and then derandomize it using hash functions. Let $\chi~:~E(H)\rightarrow [q]$
be a function. For a subset $S\subseteq V(H)$, $\chi(S)$ denotes the maximum subset $X\subseteq [q]$ such that for all
$i \in X$ there exists an edge $e \in E(H)$ with $\chi(e)=i$ and $e\cap S\neq \emptyset$.
A subset $S\subseteq V(H)$ is called a {\em colorful hitting set} if $\chi(S)=[q].$
%for all $i \in \{1,\ldots,q\}$ there exists an edge
%$e \in E(H)$ such that $\chi(e)=i$ and $V(e)\cap S\neq \emptyset$.
We now give a procedure that  given a coloring function $\chi$ finds a minimum
colorful hitting set, if it exists.  This algorithm will be useful in obtaining  a  $k$-mini-hitting set $S_{\textsc{mini}}$.
\begin{lemma}
\label{lemma:colorfulhittingset}
 Given a hypergraph $H$ and a coloring function $\chi~:~E(H)\rightarrow [q]$, we can find a minimum colorful hitting set
if there exists one in time $O(2^q q(m+n))$.
\end{lemma}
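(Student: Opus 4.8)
\textbf{Proof plan for Lemma \ref{lemma:colorfulhittingset}.}

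The plan is to solve the problem by dynamic programming over subsets of the color set $[q]$. The key observation is that a colorful hitting set, once we fix which color each ``covering event'' belongs to, decomposes nicely: a vertex $v\in V(H)$, when added to a partial solution, covers precisely the set of colors $\chi(\{v\})$ that appear on edges incident to $v$. So I would define, for each $X\subseteq [q]$, the quantity $T[X]$ = the minimum size of a set $S\subseteq V(H)$ with $\chi(S)=X$ (or more conveniently $\chi(S)\supseteq X$, i.e.\ $S$ covers at least the colors in $X$). The answer we want is $T[[q]]$, and we report ``no colorful hitting set'' if $T[[q]]=\infty$.

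The recurrence I would use is the standard ``pick one more vertex'' step: first, for each vertex $v$ compute the color set $c_v := \{\chi(e)\, :\, e\in E(H),\ v\in e\}\subseteq[q]$, which takes $O(m+n)$ time per vertex, hence $O(n(m+n))$ overall (or $O(m+n)$ amortized by scanning all edges once and distributing colors to their vertices). Then set $T[\emptyset]=0$, $T[X]=\infty$ for $X\neq\emptyset$ initially, and process subsets $X$ in increasing order of cardinality, updating $T[X\cup c_v] \leftarrow \min(T[X\cup c_v],\, T[X]+1)$ for every vertex $v$. Since a minimum colorful hitting set of size $s$ can be built up one vertex at a time, each prefix being a valid (smaller) partial cover, this correctly computes $T[[q]]$; a clean way to state correctness is by induction on $|X|$ using the fact that if $\chi(S)\supseteq X$ with $|S|$ minimum and $S\neq\emptyset$, removing any one vertex $v\in S$ leaves a set covering $X\setminus c_v$ of size $|S|-1$. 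To recover the set itself (not just its size) one stores, along with each $T[X]$, a back-pointer to the pair $(X', v)$ that achieved the minimum.

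For the running time: there are $2^q$ subsets $X$, and for each we loop over at most $n$ vertices, and each update $X\cup c_v$ is a bitwise-OR on a $q$-bit word costing $O(q)$ (or $O(1)$ in the word-RAM model when $q=O(\log)$, but $O(q)$ is safe); together with the $O(m+n)$ preprocessing this gives $O(2^q\cdot n\cdot q + m + n)$. Folding $n$ into the ``$(m+n)$'' factor and absorbing the additive preprocessing, this is $O(2^q q (m+n))$, matching the claimed bound. I do not anticipate a genuine obstacle here; the only thing to be careful about is the order of iteration over subsets (process $X$ before $X\cup c_v$, which holds since $|X| \le |X\cup c_v|$ and equality forces $c_v\subseteq X$, in which case the update is vacuous), and the precise accounting of the $O(q)$ per-operation cost so that the stated bound is honestly met.
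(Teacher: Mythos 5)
Your proposal is correct and is essentially the paper's own argument: a dynamic program over subsets of $[q]$ storing the minimum number of vertices needed to cover a given set of colors, with the same $O(2^q q(m+n))$ time accounting. The only (immaterial) difference is that the paper states the recurrence in ``pull'' form, $\gamma[X]=\min_{v:\,\chi(\{v\})\cap X\neq\emptyset}\{1+\gamma[X\setminus \chi(\{v\})]\}$, whereas you use ``push'' updates $T[X\cup c_v]\leftarrow\min(T[X\cup c_v],T[X]+1)$ processed in order of increasing $|X|$.
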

\begin{proof}
We first check whether  for every $i\in [q]$, $\chi^{-1}(i)\neq \emptyset$. If for any $i$ we have that $\chi^{-1}(i)= \emptyset$, then we return
that there is no colorful hitting set.  So we may assume that for all $i\in [q]$, $\chi^{-1}(i)\neq \emptyset$. We will give an
algorithm using dynamic programming over subsets of $[q]$. Let $\gamma$ be an array of size $2^{q}$ indexed by the subsets of $[q]$.
For a subset $X\subseteq [q]$, let $\gamma[X]$ denote the size of a smallest
set $W\subseteq V(H)$ such that $X\subseteq \chi(W)$. We obtain a recurrence for $\gamma[X]$ as follows:
\begin{equation*}
 \gamma[X] = \left\{
\begin{array}{rl}
\min_{(v\in V(H), \chi(\{v\}) \cap X \neq \emptyset )} \{1+ \gamma[X\setminus \chi(\{v\})]\} & \text{if } |X| \geq 1,\\
0 & \text{if } X = \emptyset.\\
\end{array} \right.
\end{equation*}
The correctness of the above recurrence is clear. The algorithm computes $\gamma[[q]]$ by filling the $\gamma$
in the order of increasing set sizes. Clearly, each cell can be filled in time $O(q(n+m))$ and thus the whole array can be
filled in time $O(2^qq(n+m))$. The size of a minimum colorful hitting set is given by $\gamma[[q]]$. We can obtain a minimum colorful hitting set by the routine back-tracking.
\end{proof}

Now we describe a randomized procedure to obtain a $k$-mini-hitting set $S_{\textsc{mini}}$ in a hypergraph $H$, if there exists one.
We do the following for each possible value $p$ of $|S_{\textsc{mini}}|$ (that is, for $1 \le p \le k$).
Color $E(H)$ uniformly at random with colors from $[p+k]$; we denote this random coloring by $\chi$.
Assume that there is a $k$-mini-hitting set $S_{\textsc{mini}}$ of size $p$ and some $p+k$ edges $e_1,\ldots,e_{p+k}$ such that for all
$i\in [p+k]$, $e_i\cap S_{\textsc{mini}}\neq \emptyset$.
The probability that for all $1\leq i < j \leq p+k$ we have that $\chi(e_i)\neq \chi(e_j)$ is
$\frac{(p+k)!}{(p+k)^{p+k}}\geq e^{-(p+k)}\geq e^{-2k}$. Now, using Lemma~\ref{lemma:colorfulhittingset} we can test in time $O(2^{p+k}(p+k) (m+n))$
whether there is a colorful hitting set of size at most $p$. Thus with probability at least $e^{-2k}$ we can find a $S_{\textsc{mini}}$, if
there exits one. To boost the probability we repeat the procedure $e^{2k}$ times and thus in time
% $O((2e)^{2k}(m+n))=O(29.557^{k}(m+n))$
$O((2e)^{2k}2k(m+n)^{O(1)})$
we find a
$S_{\textsc{mini}}$, if there exists one, with probability at least
$1- (1- \frac{1}{e^{2k} })^{e^{2k}}\geq \frac{1}{2}$. If we obtained $S_{\textsc{mini}}$ then using Lemma~\ref{miniDominatingSet} we can construct a
hitting set of $H$ of size at most $m-k$.

% To derandomize the above procedure we replace the random selection of $\chi$ by selecting a function from a set of functions satisfying certain
% desirable properties. Let $\mathcal{C}$ be a family of  functions from $[n]\rightarrow [q]$. We call a function from $[n]$ to $[q]$ as
% $q$-colorings. We call $\mathcal{C}$ an {\em $(n,k)$-family of perfect hashing
% functions} if for each $X\subseteq [n]$, $|X|=q$, there is a
% coloring $c\in \mathcal{C}$ such that $X$ is colorful with respect
% to $c$, that is, for all $i,j \in X$, $i<j$, $c(i)\neq c(j)$. One can derandomize the above algorithm of
% by using any $(m,p+k)$-family of perfect hashing functions in the obvious
% way. That is, we enumerate functions of $\cal C$ and interpret each of them as a coloring from $E(H)$ to $[p+k]$ and
% then use Lemma~\ref{lemma:colorfulhittingset} to find  a minimum sized colorful hitting set.  Thus the running time of the
% algorithm is polynomially proportional to $|{\cal C}|2^k$. Towards this we use the following theorem.

To derandomize the procedure, we need to replace the first step of the procedure where we color the edges of
$E(H)$ uniformly at random from the set $[p+k]$ to a deterministic one. This is done by making use of an $(m,p+k,p+k)$-{\em perfect hash family}.
 An  $(m,p+k,p+k)$-perfect hash family, ${\cal H}$, is a set of functions from $[m]$ to $[p+k]$ such that for
 every subset $S\subseteq [m]$ of size $p+k$ there exists a function $f\in {\cal H}$ such that $f$ is injective on $S$. That is, for all $i,j\in S$, $f(i)\neq f(j)$.  There exists a construction of an $(m,p+k,p+k)$-perfect hash family of size
 $O(e^{p+k} \cdot k^{O(\log k)}\cdot \log m)$ and one can produce this family in time linear in the output size~\cite{Srinivasan95}. Using
an $(m,p+k,p+k)$-perfect hash family $\cal H$ of size at most $O(e^{2k}\cdot k^{O(\log k)}\cdot \log m)$ rather than a random coloring we get the desired deterministic algorithm. To see this, it is enough to observe that if there is a subset $S_{\textrm{mini}}\subseteq V(H)$ such that $|{\cal F}[S_{\textsc{mini}}]| \ge |S_{\textsc{mini}}| + k$ then there exists a coloring $f\in {\cal H}$ such that the $p+k$ edges $e_1,\ldots,e_{p+k}$
that intersect $S_{\textrm{mini}}$  are distinctly colored. So if we generate all colorings from ${\cal H}$ we will encounter the desired $f$.
Hence for the given $f$, when we apply Lemma~\ref{lemma:colorfulhittingset} we get the desired result.
This concludes the description. The total time of the derandomized algorithm is
% $O(k2^{2k} (m+n)e^{2k} \cdot k^{O(\log k)}\cdot \log m)=O( 29.557^{k}(m+n)\log m)$.
$O(k2^{2k} (m+n)e^{2k} \cdot k^{O(\log k)}\cdot \log m)=O((2e)^{2k + O(\log^2 k)}(m+n)^{O(1)})$.
\begin{theorem}
\label{thm:fasterm-khset}
There exists an algorithm solving  $(m-k)$-\textsc{Hitting Set} in time \\
%  $O(29.557^{k}(m+n) \log m)$.
%$(2e)^{2k + O((\log  k)^2)} (m+n)^{O(1)}$.
$O((2e)^{2k + O(\log^2 k)}(m+n)^{O(1)})$.
\end{theorem}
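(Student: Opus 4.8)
The plan is to reduce, via Lemma~\ref{miniDominatingSet}, to the task of finding a $k$-mini-hitting set, and then to find such a set by color-coding over the \emph{edges} of $H$, derandomized with a perfect hash family. First I would observe that by Lemma~\ref{miniDominatingSet} it suffices to decide whether $H$ has a $k$-mini-hitting set $S_{\textsc{mini}}$ and, if so, to produce one, since a hitting set of size at most $m-k$ containing $S_{\textsc{mini}}$ can then be built in polynomial time. I would then iterate over the target size $p=|S_{\textsc{mini}}|\in\{1,\dots,k\}$, treating each value of $p$ independently.

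For a fixed $p$, the key point is that a $k$-mini-hitting set of size $p$ is witnessed by $p+k$ edges $e_1,\dots,e_{p+k}$, each meeting $S_{\textsc{mini}}$. If we color $E(H)$ with colors from $[p+k]$ so that these witnessing edges receive pairwise distinct colors, then $S_{\textsc{mini}}$ is a colorful hitting set of size $p$, which the dynamic program of Lemma~\ref{lemma:colorfulhittingset} detects in time $O(2^{p+k}(p+k)(m+n))=O(2^{2k}\cdot 2k\cdot(m+n))$. For a uniformly random coloring the probability that $e_1,\dots,e_{p+k}$ are rainbow is $(p+k)!/(p+k)^{p+k}\ge e^{-(p+k)}\ge e^{-2k}$, so repeating $e^{2k}$ times boosts the success probability to at least $1/2$; this already yields a randomized algorithm of running time $O((2e)^{2k}\,k\,(m+n)^{O(1)})$.

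To make the algorithm deterministic I would replace the random colorings by those in an $(m,p+k,p+k)$-perfect hash family $\mathcal H$: by definition, for every $(p+k)$-subset $S$ of $[m]$ some $f\in\mathcal H$ is injective on $S$, so running Lemma~\ref{lemma:colorfulhittingset} on each $f\in\mathcal H$ is guaranteed to encounter a coloring under which the witnessing edges are rainbow. I would invoke Srinivasan's construction~\cite{Srinivasan95}, which produces such a family of size $O(e^{p+k}\,k^{O(\log k)}\,\log m)=O(e^{2k}\,k^{O(\log k)}\,\log m)$ in time linear in its size. Multiplying the $|\mathcal H|$ colorings by the $O(2^{2k}k(m+n))$ cost per coloring, and summing over the $k$ choices of $p$, gives total time $O(k\,2^{2k}\,e^{2k}\,k^{O(\log k)}\,(m+n)\log m)$; absorbing polynomial factors and writing $k^{O(\log k)}=2^{O(\log^2 k)}$ collapses this to $O((2e)^{2k+O(\log^2 k)}(m+n)^{O(1)})$, as required.

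The main obstacle, and essentially the only non-routine ingredient, is the derandomization: one needs the perfect hash family to be small enough that its size contributes only the benign $2^{O(\log^2 k)}$ factor to the exponent rather than, say, a $2^{O(k)}$ (or worse) blow-up. This is exactly what Srinivasan's construction delivers, and it is what lets us beat the weaker $2^{O(k^2)}$ bound that follows from the kernelization of~\cite{GutinJY11}. Everything else---the reduction to mini-hitting sets, the colorful dynamic program, and the rainbow-probability estimate---is bookkeeping that has been carried out in the discussion preceding the statement.
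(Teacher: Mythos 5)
Your proposal is correct and follows essentially the same route as the paper: reduce to finding a $k$-mini-hitting set via Lemma~\ref{miniDominatingSet}, iterate over $p=|S_{\textsc{mini}}|$, color the edges with $p+k$ colors, detect a colorful hitting set by the dynamic program of Lemma~\ref{lemma:colorfulhittingset}, and derandomize with Srinivasan's $(m,p+k,p+k)$-perfect hash family of size $O(e^{2k}k^{O(\log k)}\log m)$. The probability estimate, the boosting, and the final accounting of the running time all match the paper's argument.
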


By Theorem \ref{thm:fasterm-khset} and the transformation discussed in Section \ref{sec:branch} we have the following theorem.

\begin{theorem}\label{thm:fpt-specialinst}
There exists an algorithm solving a transformed special instance of $(\nu(F)+k)$-\textsc{SAT} in time
% $29.557^{k}(m+n)^{O(1)}.$
%$(2e)^{2k + O((\log  k)^2)} (m+n)^{O(1)}$.
$O((2e)^{2k+O(\log^2 k)}(m+n)^{O(1)}).$
\end{theorem}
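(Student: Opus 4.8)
The plan is to assemble the two ingredients that have already been developed in the paper. First I would recall that by Theorem~\ref{thm:fasterm-khset} we have an algorithm solving $(m-k)$-\textsc{Hitting Set} in time $O((2e)^{2k+O(\log^2 k)}(m+n)^{O(1)})$, and that by Lemma~\ref{lem:newtransformation} a transformed special instance $(F,\alpha)$ of $(\nu(F)+k)$-\textsc{SAT} is equivalent to asking whether the associated hypergraph $H^*$ has a hitting set of size at most $|E(H^*)|-k$, where $k=\alpha-\nu(F)$. So the proof is essentially a reduction-and-invoke argument: take the transformed special instance, build $H^*$, call the algorithm of Theorem~\ref{thm:fasterm-khset}, and return its answer.

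The key steps, in order, are as follows. Given a transformed special instance $(F,\alpha)$ with $n$ variables and $m$ clauses, first construct in polynomial time the matching $M_u$ saturating $V(F)$, identify the set $C^*$ of clauses not matched by $M_u$, and form the hypergraph $H^*$ with vertex set $V(F)$ and edge set $\{V(c)\mid c\in C^*\}$; note $|V(H^*)|\le n$ and $|E(H^*)|\le m$, so $H^*$ has size polynomial in $m+n$. Second, set the hitting-set parameter to $k=\alpha-\nu(F)$ (the problem's parameter) and run the algorithm of Theorem~\ref{thm:fasterm-khset} to decide whether $H^*$ admits a hitting set of size at most $|E(H^*)|-k$. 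Third, invoke Lemma~\ref{lem:newtransformation} to conclude that the answer to this hitting-set question is exactly the answer to whether $\sat(F)\ge\alpha$. The running time is $O((2e)^{2k+O(\log^2 k)}(m+n)^{O(1)})$ for the hitting-set call, plus a polynomial overhead for building $H^*$, which is absorbed into the $(m+n)^{O(1)}$ factor; hence the claimed bound follows.

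I do not expect a genuine obstacle here, since all the substantive work — the equivalence with $(m-k)$-\textsc{Hitting Set} and the color-coding algorithm for it — is already in place. The only points requiring a sentence of care are that the parameter $k$ used in Theorem~\ref{thm:fasterm-khset} coincides with the parameter of $(\nu(F)+k)$-\textsc{SAT} (it does, by the identity $k=\alpha-\nu(F)$ from Lemma~\ref{lem:newtransformation}), and that the polynomial-time construction of $H^*$ does not blow up the instance size (it does not, since $|V(H^*)|\le n$ and $|E(H^*)|\le m$). With these remarks the theorem is immediate.
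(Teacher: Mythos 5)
Your proposal is correct and matches the paper's argument exactly: the paper derives this theorem as an immediate corollary of Theorem~\ref{thm:fasterm-khset} together with the reduction of Lemma~\ref{lem:newtransformation}, which is precisely the reduction-and-invoke argument you describe. The points of care you flag (parameter identification and polynomial size of $H^*$) are the right ones and are handled correctly.
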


\subsection{Randomized Algorithm}

In this subsection we give a randomized algorithm for $(m-k)$-\textsc{Hitting Set} running in time
$O(8^{k+O(\sqrt{k})}(m+n)^{O(1)})$. However, unlike the algorithm presented in the previous subsection
we do not know how to derandomize this algorithm. Essentially, we give a randomized algorithm to find
a  $k$-mini-hitting set $S_{\textsc{mini}}$ in the hypergraph $H$, if it exists.

Towards this we introduce
notions of a star-forest and a bush. We call $K_{1,\ell}$ a {\em star of size} $\ell$; a vertex of degree $\ell$ in $K_{1,\ell}$ is a {\em central vertex}
(thus, both vertices in $K_{1,1}$ are central).
A {\em star-forest} is a forest consisting of stars. A star-forest $F$ is said to have {\em dimension} $(a_1,a_2,\ldots,a_p)$
if $F$ has $p$ stars with sizes $a_1$, $a_2$, $\ldots$, $a_p$ respectively.  Given a star-forest $F$ of
dimension $(a_1,a_2,\ldots,a_p)$, we construct a graph, which we call a {\em bush of dimension}
$(a_1,a_2,\ldots,a_p)$, by adding a triangle $(x,y,z)$ and making $y$ adjacent to a
central vertex of in every star of $F$.

For a hypergraph $H=(V,{\cal F})$, the {\em incidence
bipartite graph} $B_H$ of $H$ has partite sets $V$ and $\cal F$, and there is an edge between $v\in V$  and $e\in \cal F$ in $H$ if $v\in e$.
Given $B_H$, we construct $B_H^*$ by adding a triangle $(x,y,z)$ and making $y$
adjacent to every vertex in the $V$.  The following lemma relates $k$-mini-hitting sets to bushes.

\begin{lemma}
\label{lem:alternatechar}
A hypergraph $H=(V,{\cal F})$ has a $k$-mini-hitting set $S_{\textsc{mini}}$ if and only if there exists a tuple $(a_1,\ldots,a_p)$ such that
\begin{description}
  \item[(a)] $p\leq k$,  $a_i\geq 1$ for all $i\in [p]$, and $\sum_{i=1}^{p}a_i=p+k$; and
  \item[(b)] there exists a subgraph of $B_H^*$ isomorphic to a bush of dimension  $(a_1,\ldots,a_p)$.
\end{description}
\end{lemma}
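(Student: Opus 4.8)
The plan is to prove the two directions of the equivalence in Lemma~\ref{lem:alternatechar} separately, in each case translating the combinatorial structure of a $k$-mini-hitting set into the incidence-graph picture and back.

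\textbf{($\Rightarrow$) From a mini-hitting set to a bush.} Suppose $S_{\textsc{mini}}=\{v_1,\dots,v_p\}$ is a $k$-mini-hitting set, so $p\le k$ and $|{\cal F}[S_{\textsc{mini}}]|\ge p+k$. First I would pick exactly $p+k$ edges from ${\cal F}[S_{\textsc{mini}}]$ (discarding the rest), and then greedily assign each chosen edge to one of the $v_i$ it contains, so that every edge is assigned to exactly one vertex of $S_{\textsc{mini}}$; let $a_i\ge 0$ be the number of edges assigned to $v_i$, so $\sum a_i=p+k$. The only subtlety is that we need $a_i\ge 1$ for all $i$: if some $v_i$ got no edges, simply drop it from $S_{\textsc{mini}}$ — this only decreases $p$, and the inequality $|{\cal F}[\text{rest}]|\ge (p-1)+k$ still holds because we still have $p+k\ge (p-1)+k$ edges covered by the remaining vertices. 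Relabelling, we get a tuple $(a_1,\dots,a_p)$ with $p\le k$, each $a_i\ge1$, and $\sum a_i=p+k$; condition (a) holds. For (b): in $B_H^*$, the vertex $v_i\in V$ together with its $a_i$ assigned edges (which are vertices on the ${\cal F}$-side adjacent to $v_i$) forms a star $K_{1,a_i}$ with central vertex $v_i$; these $p$ stars are vertex-disjoint since the edge-assignment is a function and the $v_i$ are distinct, so together they form a star-forest $F$ of dimension $(a_1,\dots,a_p)$. Finally, the triangle $(x,y,z)$ of $B_H^*$ together with the edges from $y$ to each $v_i$ (present because $y$ is adjacent to all of $V$) exhibits a subgraph of $B_H^*$ isomorphic to a bush of dimension $(a_1,\dots,a_p)$.

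\textbf{($\Leftarrow$) From a bush to a mini-hitting set.} Conversely, suppose $(a_1,\dots,a_p)$ satisfies (a) and $B_H^*$ contains a subgraph isomorphic to a bush of that dimension. The key observation is that in $B_H^*$ the only triangle is $(x,y,z)$ itself — $B_H$ is bipartite, and adding $y$'s edges to $V$ creates triangles only through $y$, but $x,z$ are the sole pair of adjacent vertices among $N_{B_H^*}(y)\cup\{x,z\}$ outside of $V$ (vertices of $V$ are pairwise non-adjacent). Hence the triangle of the embedded bush must map onto $\{x,y,z\}$, forcing $y$ to play the role of the bush's special vertex $y$; then the $p$ central vertices of the stars are among $N_{B_H^*}(y)=V$, hence lie in $V$, and they are distinct. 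Call them $v_1,\dots,v_p$ and set $S_{\textsc{mini}}=\{v_1,\dots,v_p\}$, so $|S_{\textsc{mini}}|=p\le k$. The leaves of the $i$-th star lie in the other partite class; they could in principle include $x$ or $z$, but the $i$-th star's leaves are neighbours of $v_i\in V$, and $x,z\notin{\cal F}$ have no $B_H^*$-neighbours in $V$, so all $\sum a_i=p+k$ leaves are genuine elements of ${\cal F}$, each adjacent to its $v_i$, and the stars being vertex-disjoint means these $p+k$ hyperedges are distinct and each meets $S_{\textsc{mini}}$. Therefore $|{\cal F}[S_{\textsc{mini}}]|\ge p+k=|S_{\textsc{mini}}|+k$, so $S_{\textsc{mini}}$ is a $k$-mini-hitting set.

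\textbf{Main obstacle.} The routine bookkeeping (the edge-assignment function, disjointness of stars) is straightforward; the step that genuinely needs care is the rigidity argument in the ($\Leftarrow$) direction — namely that any bush-subgraph of $B_H^*$ must place its triangle exactly on $\{x,y,z\}$ and its star-centres inside $V$, so that the leaves really correspond to hyperedges of $H$ and the centres to a set of at most $k$ vertices. This is where the specific construction of $B_H^*$ (attaching the triangle at $y$ and making $y$ universal to $V$ only) is used, and I would state explicitly that $B_H^*[\,V\,]$ has no edges and that $x,z$ have degree $2$ in $B_H^*$ to pin down the embedding.
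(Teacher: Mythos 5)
Your proof is correct and follows essentially the same strategy as the paper's: build vertex-disjoint stars centred at the mini-hitting-set vertices and attach the triangle through $y$, and in the reverse direction use the fact that $\{x,y,z\}$ is the unique triangle of $B_H^*$ (together with $\deg(x)=\deg(z)=2$) to force the star centres into $V$ and the leaves into ${\cal F}$ --- your reverse direction is in fact spelled out in more detail than the paper's, which only asserts that the construction ``can be easily reversed.'' One small point in the forward direction: after you drop the vertices that received no edges (leaving $p'$ of them), the assigned edges still sum to the old $p+k$ rather than $p'+k$, so you must also discard $p-p'$ surplus edges --- which is always possible while keeping every star nonempty, since $p-p'\le\sum_i(a_i-1)=p+k-p'$ --- whereas the paper sidesteps this by ordering the vertices, taking the minimal prefix whose covered-edge count reaches $k$ plus its length, and trimming only the last star.
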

\begin{proof}
We first prove that the existence of a $k$-mini-hitting set in $H$ implies the existence of a bush in $B_H^*$ of dimension satisfying (a) and (b).
Let $S_{\textsc{mini}}=\{w_1,\ldots,w_q\}$ be a $k$-mini-hitting set
%That is, no subset of $S_{mini}$ is a $k$-mini-hitting set.
%Let $S_{mini}=\{w_1,\ldots,w_p\}$
and let $S_i=\{w_1,\ldots,w_i\}$. We know that $q\leq k$
and $|{\cal F}[S_{\textsc{mini}}]| \ge |S_{\textsc{mini}}| + k$.  We define
${\cal E}_i :={\cal F}[S_i]\setminus {\cal F}[S_{i-1}]$ for every $i\geq 2$, and
${\cal E}_1:={\cal F}[S_1].$
Let ${\cal E}_{s_1},\ldots , {\cal E}_{s_r}$ be the subsequence of the sequence ${\cal E}_1,\ldots ,{\cal E}_q$
consisting only of non-empty sets ${\cal E}_i$, and let $b_j=|{\cal E}_{s_j}|$ for each $j\in [r]$.
Let $p$ be the least integer from $[r]$ such that $\sum_{i=1}^{p}b_i\geq k+p$.

Observe that for every $j\in [p]$, the vertex $w_{s_j}$ belongs to each hyperedge of ${\cal E}_{s_j}$. Thus,
the bipartite graph $B_H$ contains a star-forest $F$ of dimension $(b_{1},\ldots ,b_{p})$, such that $p\leq k$,
$b_j\geq 1$ for all $j\in [p]$, and $c:=\sum_{j=1}^{p}b_j\ge p+k.$ Moreover, each star in $F$ has a central vertex in $V.$
By the minimality of $p$, we have $\sum_{j=1}^{p-1}b_j< p-1+k$
and so $b_p\ge c+1-(p+k).$ Thus, the integers $a_j$ defined as follows are positive: $a_j:=b_j$ for every $j\in [p-1]$ and $a_p:=b_p-c+(p+k)$.
Hence, $B_H$ contains a star-forest $F'$ of dimension $(a_{1},\ldots ,a_{p})$, such that each star in $F'$ has a central vertex in $V.$

Thus, all central vertices are in $V$,
$p\leq k$, $a_i\geq 1$ for all $i\in [p]$, and $\sum_{i=1}^{p}a_i=p+k$, which
implies that $B^*_H$ contains, as  a subgraph, a bush with dimension  $(a_1,\ldots, a_p)$ satisfying the conditions above.
%Furthermore, let ${j}$ be the least  integer among $ \{1,\ldots,p'\}$ such that $\sum_{i=1}^{j}|F_{i}'|\geq k+j$.
%Prune $F_j$ by keeping any of  $(k+j)-\sum_{i=1}^{j-1}|F_i'|$ elements in it.  Let $w_1',\ldots, w_{j}$ be the vertices corresponding to $F_1',\ldots,F_j'$.
%Thus we have that $\{w_1',\ldots, w_{j}\}$ is
%a $k$-mini-hitting set with exactly $j+p$ elements. Now let $T$ be a triangle-tree $T$ with dimension  $(|F_1'|,\ldots,|F_j'|)$.
%Now we show that there is is a subgraph isomorphic to $T$ in $B_H^*$.
%
%Take the subgraph formed by taking the unique triangle of $B_{H}^*$ and then edges adjacent to
%$\{w_1',\ldots, w_{j}\}$. Finally, for each vertex $w_i'$ take the edges from this to
%the vertices corresponding to hyperedges in $F_{i}'$. This is a subgraph  isomorphic to $T$ in $B_H^*$. This proves the first direction.

The construction above relating a $k$-mini-hitting set of $H$ with the required bush of $B_H^*$ can be easily reversed in the following sense:
the existence of a bush of dimension satisfying (a) and (b) in $B^*_H$ implies the existence of
a $k$-mini-hitting set in $H$. Here the triangle ensures that the central vertices are in $V.$  This completes the proof.
\end{proof}

Next we describe a fast randomized algorithm for deciding the existence of a $k$-mini-hitting set using the characterization obtained in
Lemma~\ref{lem:alternatechar}. Towards this we will use a fast randomized algorithm for
the {\sc Subgraph Isomorphism} problem.   In the {\sc Subgraph   Isomorphism} problem we
are given two graphs $F$ and $G$ on $k$ and $n$ vertices, respectively, as an input, and the question is
whether there exists a subgraph of $G$ isomorphic to $F$. Recall that $\tw(G)$ denotes the treewidth of
a graph $G$. We will use the following result.

\begin{theorem} [Fomin {\em et al.}\cite{FominLRSR12}]
\label{thm:subgraphiso}
Let $F$ and $G$ be two graphs on $q$ and $n$ vertices respectively and $\tw(F)\leq t$.
%Let $t$ be the treewidth of $F$.
Then, there is a randomized algorithm for the {\sc Subgraph
Isomorphism} problem that runs in expected time $O(2^q(nt)^{t+O(1)})$.
\end{theorem}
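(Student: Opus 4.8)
The plan is to prove the theorem by encoding copies of $F$ in $G$ as monomials of a polynomial and then detecting a \emph{multilinear} monomial by the algebraic technique of Koutis and Williams, evaluating the polynomial by dynamic programming over a tree decomposition of $F$. This route avoids the $e^q$ loss of plain color-coding and yields the sharper $2^q$ factor. Concretely, fix a tree decomposition of $F$ of width $t$, make it nice (so it has $O(tq)$ nodes and every bag has at most $t+1$ vertices), introduce a commuting variable $z_v$ for each $v\in V(G)$, and consider
\[
P \;=\; \sum_{\phi}\ \prod_{u\in V(F)} z_{\phi(u)},
\]
where $\phi$ ranges over all homomorphisms from $F$ to $G$ (edge-preserving maps). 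An \emph{injective} $\phi$, i.e.\ a subgraph of $G$ isomorphic to $F$, contributes a multilinear monomial of degree $q$; every non-injective $\phi$ sends two vertices of $F$ to one vertex $v$ and so contributes a monomial in which $z_v$ occurs with degree $\ge 2$. Hence $G$ contains a subgraph isomorphic to $F$ if and only if $P$ has a multilinear monomial of degree $q$.

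The second step detects such a monomial without expanding $P$. Following Koutis and Williams, I would work in the ring $R=\mathbb{F}[y_1,\dots,y_q]/(y_1^2,\dots,y_q^2)$ over a field $\mathbb{F}$ of characteristic $2$ with $|\mathbb{F}|=2^{\Theta(q)}$, and substitute each $z_v$ by a random element $r_v\in R$ of the Koutis--Williams form, whose defining property is $r_v^2=0$. Then any non-injective homomorphism, whose product contains some $r_v^2$, evaluates to $0$, whereas a genuine multilinear monomial survives; by the Koutis--Williams detection lemma the $R$-valued evaluation of $P$ is nonzero with constant probability exactly when a multilinear monomial exists, and repeating $O(1)$ times makes the one-sided error negligible. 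The crucial computational point is that a product in $R$ is precisely a \emph{subset convolution} of the two coefficient vectors over the subsets of $[q]$, which can be carried out in $2^q\,\mathrm{poly}(q)$ time by the fast subset convolution of Bj\"orklund, Husfeldt, Kaski and Koivisto. This is the source of the $2^q$ factor, and it keeps \emph{every} ring multiplication (including those at join nodes) at $2^q$ rather than the naive $4^q$.

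The third step evaluates $P$ at the random substitution by dynamic programming over the nice tree decomposition of $F$. For each node $x$ and each of the at most $n^{t+1}$ maps $\rho$ of the bag $B_x$ into $V(G)$ that respect the edges of $F$ lying inside $B_x$, store the $R$-valued partial evaluation $A_x[\rho]$ of the sub-polynomial indexed by the subtree below $x$, with bag-vertex weights excluded to avoid double counting. Introduce nodes extend $\rho$ and check the new bag edges; forget nodes multiply by $r_a$ and sum over the choices $a\in V(G)$ of the forgotten vertex; join nodes take the $R$-product $A_L[\rho]\cdot A_R[\rho]$ for each shared bag map. With $O(tq)$ nodes, $n^{t+1}$ maps per node, and $2^q\,\mathrm{poly}(q)$ per $R$-operation, the total running time is $2^q\cdot n^{t+1}\cdot\mathrm{poly}(q)=O(2^q(nt)^{t+O(1)})$, matching the claimed bound (here $\tw(F)\le t$ is what bounds the bag size, hence the exponent of $n$).

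I expect the main obstacle to be the correctness of the detection step: proving rigorously that, under the random substitution into the characteristic-$2$ ring $R$, (i) every non-injective homomorphism is annihilated by $r_v^2=0$, while (ii) the sum of the surviving injective contributions is nonzero with constant probability and does not cancel over $\mathbb{F}$. Establishing (ii) requires the interference/cancellation analysis of Koutis and Williams, combined with a Schwartz--Zippel-type argument over $\mathbb{F}$, and care is needed because distinct copies of $F$ may contribute overlapping terms in $R$. A secondary, more routine, difficulty is verifying that the tree-decomposition dynamic program genuinely computes the $R$-evaluation of $P$ and not an over- or under-counted variant, which amounts to checking the standard introduce/forget/join bookkeeping so that each variable $r_v$ is multiplied in exactly once and the edges of $F$ inside bags are handled consistently.
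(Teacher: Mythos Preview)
This theorem is not proved in the paper at all: it is quoted from Fomin \emph{et al.}~\cite{FominLRSR12} and used as a black box to derive Theorem~\ref{thm:fasterm-khset-purerandom}. There is therefore no ``paper's own proof'' to compare your proposal against. Your sketch is a reasonable high-level outline of the algebraic approach that the cited paper actually takes (homomorphism polynomial, multilinear monomial detection \`a la Koutis--Williams, dynamic programming over a nice tree decomposition of $F$), so in that sense you have correctly identified the method behind the cited result; but for the purposes of the present paper no proof is expected or supplied, and you should simply cite~\cite{FominLRSR12} rather than attempt to reprove it here.
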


Let ${\cal P}_\ell(s)$ be the set of all {\em unordered partitions} of an integer $s$  into $\ell$ parts.
Nijenhuis and Wilf \cite{NijenhuisW78} designed a polynomial delay generation algorithm for partitions of ${\cal P}_\ell(s)$.
Let $p(s)$ be the partition function, i.e., the overall number of partitions of $s$. The
asymptotic behavior of $p(s)$ was first evaluated by
Hardy and Ramanujan in the paper in which they develop the
famous ``circle method.''
\begin{theorem}[Hardy and Ramanujan \cite{HardyR18}]
\label{prop:partfunction}
We have $p(s)\sim e^{\pi\sqrt{\frac{2s}{3}}}/(4s\sqrt{3})$, as $s\to\infty$.
\end{theorem}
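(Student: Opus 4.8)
The plan is to prove the estimate by the Hardy--Ramanujan \emph{circle method}: a saddle-point analysis of the Cauchy integral for the coefficients of the partition generating function. Write
\[
F(x)=\sum_{n\ge 0}p(n)x^n=\prod_{k\ge 1}\frac{1}{1-x^k},
\]
which converges for $|x|<1$, so that Cauchy's formula gives $p(n)=\frac{1}{2\pi i}\oint_{|x|=\rho}F(x)x^{-n-1}\,dx$ for every $0<\rho<1$. The first ingredient is the behaviour of $F$ near its dominant singularity $x=1$: applying the modular transformation of the Dedekind $\eta$-function (equivalently, Poisson summation to $\log F(e^{-t})=\sum_{m\ge1}\frac{1}{m(e^{mt}-1)}$) yields, as $t\to 0^+$,
\[
F(e^{-t})=\sqrt{\frac{t}{2\pi}}\,\exp\!\left(\frac{\pi^2}{6t}\right)(1+o(1)),
\]
and in fact an asymptotic expansion with explicit error bounds valid uniformly for $t$ in a thin sector about the positive real axis.

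Next I would parametrise the circle of radius $e^{-t_0}$ by $x=e^{-t_0+i\phi}$, $-\pi\le\phi\le\pi$, turning the Cauchy integral into
\[
p(n)=\frac{1}{2\pi}\int_{-\pi}^{\pi}F\!\left(e^{-t_0+i\phi}\right)e^{n(t_0-i\phi)}\,d\phi,
\]
and choose $t_0$ to be the saddle point of $g(t):=\frac{\pi^2}{6t}+nt$, namely $t_0=\pi/\sqrt{6n}$, for which $g(t_0)=\pi\sqrt{2n/3}$ and $g''(t_0)=(6n)^{3/2}/(3\pi)$. Split the integral into a \emph{major arc} $|\phi|\le\delta$ (e.g.\ $\delta=n^{-3/4}$) and a \emph{minor arc} $\delta<|\phi|\le\pi$. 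On the major arc the $\eta$-asymptotic applies and $g(t_0-i\phi)=g(t_0)-\tfrac12 g''(t_0)\phi^2+\text{(lower order)}$, so a Laplace evaluation $\int e^{-\frac12 g''(t_0)\phi^2}\,d\phi\to\sqrt{2\pi/g''(t_0)}$ together with the slowly varying prefactor $\frac{1}{2\pi}\sqrt{t_0/2\pi}$ produces
\[
\frac{1}{2\pi}\sqrt{\frac{t_0}{g''(t_0)}}\;e^{\pi\sqrt{2n/3}}=\frac{1}{4n\sqrt3}\,e^{\pi\sqrt{2n/3}},
\]
using $t_0/g''(t_0)=\pi^2/(12n^2)$; this is exactly the claimed main term.

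The bulk of the remaining work, and the step I expect to be the main obstacle, is the minor-arc estimate: one must show $\int_{\delta<|\phi|\le\pi}\bigl|F(e^{-t_0+i\phi})\bigr|\,e^{nt_0}\,d\phi$ is of strictly smaller order than $e^{\pi\sqrt{2n/3}}/(4n\sqrt3)$. This is the heart of the circle method. It rests on the identity $\log\bigl|F(e^{-t_0+i\phi})\bigr|=\sum_{m\ge1}\frac{1}{m}\,\mathrm{Re}\,\frac{1}{e^{m(t_0-i\phi)}-1}$ and on the fact that each summand is maximised at $\phi=0$: one derives a bound of the shape $\log\bigl|F(e^{-t_0+i\phi})\bigr|\le \frac{\pi^2}{6t_0}-c\,\phi^2/t_0^3$ for small $|\phi|$, which matches the major-arc Gaussian continuously, and a cruder $\log|F|\le (1-c')\frac{\pi^2}{6t_0}$ once $|\phi|$ is bounded away from $0$ (the contributions near other rationals with small denominator being even smaller and, for this leading-order statement, absorbable). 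Combined with the uniform error control of the $\eta$-transformation on the major arc --- the other delicate point --- this makes the major-arc term the genuine main term and finishes the proof; the remaining algebra that reproduces the constants $\pi\sqrt{2/3}$ and $1/(4\sqrt3)$ is routine.
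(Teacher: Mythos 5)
The paper does not prove this statement: it is quoted verbatim from Hardy and Ramanujan \cite{HardyR18}, so there is no internal proof to compare against. Your outline is the standard saddle-point form of the circle method and is essentially the route of the cited source (Hardy and Ramanujan use a Farey dissection to obtain a full asymptotic series, but for the single leading term only the arc near $x=1$ contributes, exactly as you set things up). Your constants check out: $t_0=\pi/\sqrt{6n}$ gives $g(t_0)=\pi\sqrt{2n/3}$ and $g''(t_0)=(6n)^{3/2}/(3\pi)$, whence $\tfrac{1}{2\pi}\sqrt{t_0/g''(t_0)}=\tfrac{1}{4n\sqrt3}$; and the two points you flag as the real work --- the minor-arc bound via the fact that $\mathrm{Re}\,\bigl(e^{m(t_0-i\phi)}-1\bigr)^{-1}$ is maximized at $m\phi\equiv 0$, and uniformity of the $\eta$-transformation on a thin sector --- are indeed where the effort lies, and both are standard. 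One calibration remark: since $g''(t_0)\asymp n^{3/2}$, the Gaussian has width $n^{-3/4}$, so the cut $\delta=n^{-3/4}$ captures only a constant fraction of the Gaussian mass; you need $\delta n^{3/4}\to\infty$ (e.g.\ $\delta=n^{-2/3}$), which still keeps $\delta\ll t_0$ and the cubic correction $g'''(t_0)\delta^3=o(1)$, so nothing breaks. Finally, note that the paper uses this theorem only through Proposition \ref{prop:gener}, i.e.\ the bound $p(s)=2^{O(\sqrt{s})}$; that much follows in one line from $p(s)\le F(e^{-t})e^{st}$ and $\log F(e^{-t})\le \pi^2/(6t)$ with $t=\pi/\sqrt{6s}$, with no contour integration at all.
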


This theorem and the  algorithm of Nijenhuis and Wilf~\cite{NijenhuisW78} imply the following:

\begin{proposition}\label{prop:gener}
There is an algorithm of runtime $2^{O(\sqrt{s})}$ for generating all partitions in ${\cal P}_\ell(s)$.
\end{proposition}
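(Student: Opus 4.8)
The plan is to combine the Hardy--Ramanujan asymptotic for $p(s)$ (Theorem~\ref{prop:partfunction}) with the polynomial-delay generation algorithm of Nijenhuis and Wilf for partitions. First I would observe that $\mathcal{P}_\ell(s)$, the set of unordered partitions of $s$ into exactly $\ell$ parts, is a subset of the set of all partitions of $s$, so $|\mathcal{P}_\ell(s)|\le p(s)$. By Theorem~\ref{prop:partfunction}, $p(s)\sim e^{\pi\sqrt{2s/3}}/(4s\sqrt{3})$ as $s\to\infty$; in particular there is a constant $c$ such that $p(s)\le e^{c\sqrt{s}}$ for all sufficiently large $s$ (and hence, after adjusting the constant, for all $s\ge 1$). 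Therefore $|\mathcal{P}_\ell(s)|\le 2^{O(\sqrt{s})}$.

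Next I would invoke the generation algorithm of Nijenhuis and Wilf~\cite{NijenhuisW78}, which enumerates partitions with polynomial delay --- that is, the time between outputting one partition and the next (and before the first and after the last) is bounded by a polynomial in $s$. Running such an algorithm to list all of $\mathcal{P}_\ell(s)$ takes time at most (number of partitions generated) $\times$ (delay) $+$ (polynomial overhead), which is $2^{O(\sqrt{s})}\cdot s^{O(1)} = 2^{O(\sqrt{s})}$, since a polynomial factor is absorbed into the $2^{O(\sqrt{s})}$ bound. One point to check is that the Nijenhuis--Wilf algorithm, or a trivial modification of it, generates partitions into a prescribed number $\ell$ of parts rather than all partitions: if it only produces all partitions of $s$, one can filter out those with the wrong number of parts, paying only a polynomial-per-partition test, which again is absorbed; alternatively their scheme directly restricts the number of parts. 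Either way the runtime bound is unaffected.

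The only genuine subtlety --- and the step I would be most careful about --- is the transition from the asymptotic statement ``$p(s)\sim \dots$'' to a clean uniform upper bound ``$p(s)\le 2^{O(\sqrt{s})}$ for all $s$.'' The asymptotic equivalence gives this only for $s$ large enough; for the finitely many small values of $s$ one simply notes $p(s)$ is a fixed finite number, so enlarging the implied constant covers them. This is routine but worth stating so the claim ``for all $s$'' is justified rather than merely ``asymptotically.'' After that, the proposition follows by stringing together the cardinality bound and the polynomial delay, and I would present it in two or three sentences.
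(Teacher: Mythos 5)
Your argument is correct and is exactly the paper's (largely implicit) proof: combine the Hardy--Ramanujan bound $p(s)=2^{O(\sqrt{s})}$ with the polynomial-delay enumeration of Nijenhuis and Wilf, noting that $|{\cal P}_\ell(s)|\le p(s)$ so the total time is $2^{O(\sqrt{s})}\cdot s^{O(1)}=2^{O(\sqrt{s})}$. The paper states this in one sentence; your additional care about uniformizing the asymptotic and about restricting to exactly $\ell$ parts (the Nijenhuis--Wilf scheme does handle a prescribed number of parts directly) only makes the argument more complete.
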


Now we are ready to describe and analyze a fast randomized algorithm for deciding the existence of a $k$-mini-hitting set in a hypergraph $H$.
By Lemma~\ref{lem:alternatechar}, it suffices to design and analyze a fast randomized algorithm for deciding the existence of a bush in $B^*_H$ of dimension $(a_1,\ldots,a_p)$ satisfying conditions (a) and (b) of Lemma~\ref{lem:alternatechar}. Our algorithm starts by building $B^*_H$. Then it
considers all possible values of $p$ one by one ($p\in [k]$) and generates all partitions in ${\cal P}_p(p+k)$ using the algorithm of Proposition \ref{prop:gener}. For each such partition $(a_1,\ldots,a_p)$ that satisfies conditions (a) and (b) of Lemma~\ref{lem:alternatechar}, the algorithm of Fomin {\em et al.}\cite{FominLRSR12} mentioned in Theorem \ref{thm:subgraphiso} decides whether $B^*_H$ contains a bush of dimension $(a_1,\ldots,a_p)$. If such a bush exists, we output {\sc Yes} and we output {\sc No}, otherwise.

To evaluate the runtime of our algorithm, observe that the treewidth of any bush
is  $2$ and any bush in Lemma \ref{lem:alternatechar} has at most $3k+3$ vertices. This observation, the algorithm above, Theorem~\ref{thm:subgraphiso} and Proposition \ref{prop:gener} imply the following:

\begin{theorem}
\label{thm:fasterm-khset-purerandom}
There exists a randomized algorithm solving  $(m-k)$-\textsc{Hitting Set} in expected time  \\
$O(8^{k+O(\sqrt{k})}(m+n)^{O(1)})$.
\end{theorem}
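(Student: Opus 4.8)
The plan is to combine Lemma~\ref{lem:alternatechar} with the generation algorithm of Proposition~\ref{prop:gener} and the randomized {\sc Subgraph Isomorphism} algorithm of Theorem~\ref{thm:subgraphiso}. Recall that by Lemma~\ref{miniDominatingSet} it suffices to decide whether $H$ has a $k$-mini-hitting set, and by Lemma~\ref{lem:alternatechar} this is equivalent to deciding whether $B_H^*$ contains, as a subgraph, a bush of dimension $(a_1,\ldots,a_p)$ for some tuple satisfying conditions (a) and (b). First I would build $B_H^*$ in polynomial time. Then, for each $p\in[k]$, I would enumerate every partition of $p+k$ into $p$ parts, i.e.\ every element of ${\cal P}_p(p+k)$, using the polynomial-delay algorithm behind Proposition~\ref{prop:gener}; each such partition is a candidate dimension $(a_1,\ldots,a_p)$ (note that $a_i\ge 1$ and $\sum a_i=p+k$ hold automatically for partitions into $p$ parts, so condition (a) is met). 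For each candidate, the bush $R$ of that dimension has $q\le 3k+3$ vertices and, crucially, $\tw(R)=2$ (removing the single vertex $y$ leaves a disjoint union of a triangle-edge and stars, a forest, so $\tw(R)\le 2$; the triangle forces $\tw(R)\ge 2$). I would then invoke the algorithm of Theorem~\ref{thm:subgraphiso} with $F:=R$ and $G:=B_H^*$ to test whether $B_H^*$ contains a subgraph isomorphic to $R$; I answer {\sc Yes} if some candidate succeeds, and {\sc No} otherwise. Correctness is immediate from Lemma~\ref{lem:alternatechar} together with Lemma~\ref{miniDominatingSet}.

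For the running time, the outer loop over $p$ contributes a factor of $k$. The number of partitions generated across all $p$ is at most $\sum_{p=1}^{k} p(p+k) \le k\cdot p(2k)$, which by Theorem~\ref{prop:partfunction} is $2^{O(\sqrt{k})}$; generating them costs $2^{O(\sqrt{k})}$ as well by Proposition~\ref{prop:gener}. For each candidate, Theorem~\ref{thm:subgraphiso} applied with $q\le 3k+3$ and $t=2$ runs in expected time $O(2^{3k+3}(2n)^{2+O(1)}) = O(8^{k}\cdot n^{O(1)})$. Multiplying the number of candidates by the per-candidate cost gives expected time $2^{O(\sqrt{k})}\cdot O(8^k n^{O(1)}) = O(8^{k+O(\sqrt{k})}(m+n)^{O(1)})$, since $8^{O(\sqrt k)}=2^{O(\sqrt k)}$ absorbs into the error term. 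Finally, by Lemma~\ref{miniDominatingSet}, once a $k$-mini-hitting set is found (which can be recovered from the subgraph isomorphism witness) we construct a hitting set of size at most $m-k$ in polynomial time, so the bound for the whole of $(m-k)$-\textsc{Hitting Set} is as claimed.

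I do not expect a serious obstacle here, since the heavy lifting is done by the quoted results; the points requiring a little care are (i) verifying that every partition of $p+k$ into $p$ positive parts is exactly a valid dimension satisfying condition (a), so that no separate filtering step changes the count, (ii) confirming the treewidth of a bush is $2$ and its vertex count is $O(k)$, so that the exponent of $2$ in Theorem~\ref{thm:subgraphiso} is $3k+O(1)$ rather than something larger, and (iii) checking that the $2^{O(\sqrt k)}$ from the partition count, the $8^{O(\sqrt k)}$ slack, and the polynomial-in-$(m+n)$ factors all collapse cleanly into the stated $O(8^{k+O(\sqrt{k})}(m+n)^{O(1)})$ form. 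All three are routine bookkeeping rather than genuine difficulties.
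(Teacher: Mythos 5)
Your proposal is correct and follows essentially the same route as the paper: reduce via Lemma~\ref{miniDominatingSet} and Lemma~\ref{lem:alternatechar} to detecting a bush in $B_H^*$, enumerate candidate dimensions as partitions in ${\cal P}_p(p+k)$ via Proposition~\ref{prop:gener}, and test each with the randomized subgraph isomorphism algorithm of Theorem~\ref{thm:subgraphiso} using $\tw=2$ and at most $3k+3$ vertices. The bookkeeping you flag (partition count $2^{O(\sqrt{k})}$, per-candidate cost $O(8^k(m+n)^{O(1)})$) matches the paper's analysis exactly.
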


This theorem, in turn, implies the following:

\begin{theorem}\label{thm:purerandom-specialinst}
There exists a randomized algorithm solving a transformed special instance of $(\nu(F)+k)$-\textsc{SAT} in expected time
$O(8^{k+O(\sqrt{k})}(m+n)^{O(1)})$.
\end{theorem}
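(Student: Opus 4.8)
The plan is to derive Theorem~\ref{thm:purerandom-specialinst} as an immediate consequence of the reduction already established in Section~\ref{sec:branch} together with the randomized algorithm of Theorem~\ref{thm:fasterm-khset-purerandom}. First I would take a transformed special instance $(F,\alpha)$ and form its associated hypergraph $H^*$ with vertex set $V(F)$ and edge set $E^*=\{V(c) : c\in C^*\}$, where $C^*$ is the set of clauses of $F$ not saturated by the matching $M_u$. As noted after the definition of $H^*$, this construction takes polynomial time, and clearly $|V(H^*)|\le n$ and $|E(H^*)|\le m$, so the parameters of the hitting set instance are at most those of $F$.

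Next I would invoke Lemma~\ref{lem:newtransformation}, which states that $\sat(F)\ge\alpha$ if and only if $H^*$ has a hitting set of size at most $|E(H^*)|-k$, where $k=\alpha-\nu(F)$ is exactly the parameter of the original $(\nu(F)+k)$-\textsc{SAT} instance. Crucially the parameter is preserved verbatim under this reduction, so no loss or blow-up in $k$ occurs. I would then run the randomized algorithm of Theorem~\ref{thm:fasterm-khset-purerandom} on $H^*$ with parameter $k$; it decides the existence of the desired hitting set in expected time $O(8^{k+O(\sqrt{k})}(|V(H^*)|+|E(H^*)|)^{O(1)})=O(8^{k+O(\sqrt{k})}(m+n)^{O(1)})$. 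Adding the polynomial time spent building $H^*$ leaves the bound unchanged, giving the claimed expected running time.

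Since this statement is a direct corollary, there is no genuine obstacle in its proof: all the substance lies in Lemma~\ref{lem:newtransformation} (the correctness of the SAT-to-Hitting-Set reduction on transformed special instances) and in Theorem~\ref{thm:fasterm-khset-purerandom} (whose proof goes through Lemma~\ref{lem:alternatechar}, the bush characterization, and the randomized {\sc Subgraph Isomorphism} algorithm of Fomin {\em et al.}~\cite{FominLRSR12}), both of which may be assumed here. The only point requiring a line of care is confirming that the reduction is parameter-preserving and polynomial-time, so that the overall expected runtime is exactly $O(8^{k+O(\sqrt{k})}(m+n)^{O(1)})$; this is immediate from the bounds $|V(H^*)|\le n$, $|E(H^*)|\le m$ and the identity $k=\alpha-\nu(F)$.
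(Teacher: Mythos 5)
Your proposal is correct and matches the paper's own (implicit) argument: the paper derives Theorem~\ref{thm:purerandom-specialinst} directly from Theorem~\ref{thm:fasterm-khset-purerandom} via the reduction of Lemma~\ref{lem:newtransformation}, exactly as you do. Your added remarks on parameter preservation and the polynomial-time construction of $H^*$ are accurate and only make explicit what the paper leaves unsaid.
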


\section{Complete Algorithm, Correctness and Analysis}
\label{sec:complete}
% Branching
% on a variable $x$ means that the algorithm constructs two instances of the problem,
% one by substituting $x=1$ and simplifying the instance and the other by substituting $x=0$ and
% simplifying the instance. However when we say that branch on $x$ being true (false) then it
% means substituting $x=1$ ($x=0$) and simplifying the instance.
% The complete algorithm is described in Figure~\ref{fig:algoimproved}.
The complete algorithm for an instance $(F, \alpha)$ of $(\nu(F)+k)$-\textsc{SAT} is as follows.

Find a maximum matching $M$ on $B_F$ and let $k=\alpha - |M|$.
If $k\le 0$, return {\sc Yes}.
Otherwise, apply Reduction Rules \ref{pureliteral} to \ref{k2}, whichever is applicable, in
that order and then run the algorithm on the reduced instance and return the answer.
If none of the Reduction Rules apply, then apply Branching Rule \ref{bran1} if possible, to get two instances $(F',\alpha')$ and $(F'',\alpha'')$. Run the algorithm on both instances; if one of them returns {\sc Yes}, return {\sc Yes}, otherwise return {\sc No}. If Branching Rule \ref{bran1} does not apply then we rearrange the formula and attempt to apply Branching Rule \ref{bran2} in the same way.
Finally if $k > 0$ and none of the reduction or branching rules apply, then we have for all variables $x$, $n(x)=1$ and every clause contains at most one positive literal, i.e. $(F, \alpha)$ is a special instance. Then solve the problem by first obtaining the transformed special instance, then the corresponding instance $H^*$ of $(m-k)$-\textsc{Hitting Set} and solving $H^*$ in time
% $(2e)^{2k}(m+n)^{O(1)}$,
% $29.557^{k}(m+n)^{O(1)}$,
$O((2e)^{2k + O(\log^2 k)} (m+n)^{O(1)})$
as described in Sections \ref{sec:branch} and \ref{sec:m-khs}.

Correctness of all the preprocessing rules and the branching rules follows from Lemmata~\ref{lem:cor1}, \ref{lem:cor2}, \ref{lem:cor3}, \ref{lem:cor4} and \ref{lem:cor5}.

\paragraph{Analysis of the algorithm.}
Let $(F,\alpha)$ be the input instance. Let $\mu(F)=\mu=\alpha -\nu(F)$ be the measure.
% , where $M$ is  a maximum matching of $B_F$.
We will first show that our preprocessing rules do not increase this measure. Following this, we will prove a lower bound on the
decrease in the measure occurring as a result of the branching, thus allowing us to bound the running time of the algorithm in
terms of the measure $\mu$. For each case, we let $(F',\alpha')$ be the instance resulting by the application of the rule or
branch. Also let $M'$ be a maximum matching of $B_{F'}$.\\

% \begin{enumerate}
% \setlength{\itemsep}{-2pt}
\noindent{\bf Reduction Rule~\ref{pureliteral}: } We consider the case when $n(x)=0$; the other case when
$n(\bar{x})=0$ is analogous. We know that $\alpha'=\alpha-n(\bar{x})$ and $\nu(F')\geq \nu(F)-n(\bar{x})$ as removing $n(\bar{x})$ clauses
can only decrease the matching size by $n(\bar{x})$. This implies that $\mu(F)-\mu(F')=\alpha -\nu(F)-\alpha'+\nu(F')=(\alpha-\alpha')+(\nu(F')-\nu(F))
\geq n(\bar{x})-n(\bar{x})$. Thus,  $\mu(F')\leq \mu(F)$.

\medskip

\noindent{\bf Reduction Rule~\ref{twooccurrence}: } We know that $\alpha'=\alpha-1$. We show that $\nu(F')\geq \nu(F)-1$. In this
case we remove the clauses $c'$ and $c''$ and add $c^*=(c'-x)\cup (c''-\bar{x})$. We can obtain a matching of size $\nu(F)-1$ in $B_{F'}$ as
follows. If at most one of the $c'$ and $c''$ is the end-point of some matching edge in $M$ then removing that edge gives a matching of size
$\nu(F)-1$ for $B_{F'}$. So let us assume that some edges $(a,c')$ and $(b,c'')$ are in $M$. Clearly, either $a\neq x$ or $b\neq x$. Assume
$a\neq x$. Then $M\setminus \{(a,c'),(b,c'')\}\cup \{(a,c^*)\}$ is a matching of size $\nu(F)-1$ in $B_{F'}$. Thus, we conclude that
$\mu(F')\leq \mu(F)$.

\medskip

\noindent{\bf Reduction Rule~\ref{k1}: }
% we have that $\mu(F')\leq \mu(F)$
The proof
is the same as in the case of Reduction Rule~\ref{pureliteral}.

\medskip

\noindent{\bf Reduction Rule~\ref{k2}: } The proof that $\mu(F')\leq \mu(F)$ in the case when $F[S]$ is satisfiable
is the same as in the case of Reduction Rule~\ref{pureliteral} and in the case when $F[S]$ is not satisfiable is the same as in the case of Reduction Rule~\ref{twooccurrence}.

\medskip

\noindent{\bf Branching Rule \ref{bran1}: }
Consider the case when we set $x=\textsc{true}$. In this case, $\alpha^\prime=\alpha-n(x)$. Also, since no reduction rules are applicable we have
that $F$ is $2$-expanding.
%for all subsets $S\subseteq V(F)$, $|N_{B}(S)|\geq |S|+2$ and thus the matching $M$ saturates every vertex in $V(F)$ by Halls' theorem.
Hence, $\nu(F)=|V(F)|$.  We will show that in $(F',\alpha')$ the matching size will remain at least
$\nu(F)-n(x)+1$ ($= |V(F)|-n(x)+1 = |V(F')|- n(x) +2$.) This will imply that $\mu(F')\leq \mu(F)-1$.
By Lemma \ref{lem:dless} and the fact that $n(x)-2 \ge 0$, it suffices to show that
in $B'=B_{F'}$,  every subset $S\subseteq V(F')$,
$|N_{B'}(S)|\geq |S| -(n(x)-2)$.
%
%% The following is a well known lemma.
%
%% {\bf We could move this to preliminaries or somewhere.}
%
%% \begin{lemma} ~\cite{somegraphtheorybook}
%% Let $G=(V_1,V_2;E)$ be a bipartite graph, and suppose
%% for all subsets $X\subseteq V_1$, $|N(X)|\geq |X|-d$ for some $d\geq 0$. Then $G$ has a matching of size $|V_1|-d$
%% \end{lemma}
%
%% Thus, in our case, this lemma implies
%%implies that $B_{F'}$ has a matching of size at least $|V(F')|-(n(x)-2) = (\nu(F)-1)-(n(x)-2)=\nu(F)-n(x)+1$. Now we show that
%%for every subset $S\subseteq V(F')$, $|N_{B'}(S)|\geq |S|-(n(x)-2)$.
The only clauses that have been removed by the simplification process after
setting  $x=\textsc{true}$ are those where $x$ appears positively and the singleton clauses $(\bar{x})$. Hence, the only edges of $G[S\cup N_B[S]]$
that are missing in $N_{B'}(S)$ from $N_{B}(S)$ are those corresponding to clauses that contain $x$ as a pure literal and some variable in $S$.
Thus, $|N_{B'}(S)|\geq |S|+2 -n(x) = |S| -(n(x)-2)$ (as $F$ is $2$-expanding).

The case when we set $x=\textsc{false}$ is similar to the case when we set $x=\textsc{true}$. Here, also we can show that $\mu(F')\leq \mu(F)-1$.
%\end{enumerate}
% Thus, here we get the following recurrence on $\mu(F)\leq 2 \mu(F)-1$.
Thus, we get two instances, with each instance $(F', \alpha')$ having $\mu(F')\le \mu(F)-1$.

\medskip

\noindent{\bf Branching Rule \ref{bran2}: } The analysis here is the same as for Branching Rule \ref{bran1} and
again we get
two instances with $\mu(F')\le \mu(F)-1$.

\medskip
% the same recurrence $\mu(F)\leq 2 \mu(F)-1$.

% \item If we reach {\bf Hitting Set Step}, then we do not branch rather just solve the problem in time $(2e)^{2\mu} \mu^{O(\log \mu)} \log m$.
% \end{enumerate}

%\end{itemize}

We therefore have a depth-bounded search tree of size of depth at most $\mu = \alpha - \nu(F) = k$, in which any branching splits an instance into two instances. Thus, the search tree has at most $2^k$ instances. As each reduction and branching rule takes polynomial time, every rule decreases the number of variables, the number of clauses, or the value of $\mu$,
and an instance to which none of the rules apply can be solved in time
% $29.557^{\mu}(m+n)^{O(1)}$
$O((2e)^{2\mu} \mu^{O(\log \mu)}(m+n)^{O(1)})$ (by Theorem \ref{thm:fpt-specialinst}),
we have by induction that any instance can be solved in time
$$O(2\cdot(2e)^{2(\mu-1)} (\mu-1)^{O(\log (\mu-1))}(m+n)^{O(1)})= O((2e)^{2\mu} \mu^{O(\log \mu)}(m+n)^{O(1)}).$$
Thus
%  we have that
the total running time of the algorithm is at most
% $29.557^k (n+m)^{O(1)}$.
$O((2e)^{2k + O(\log^2 k)}(n+m)^{O(1)})$.
Applying Theorem \ref{thm:purerandom-specialinst} instead of Theorem \ref{thm:fpt-specialinst}, we conclude that
$(\nu(F)+k)$-\textsc{SAT} can be solved in expected time $O(8^{k + O(\sqrt{k})}(n+m)^{O(1)})$.
Summarizing, we have the following:

\begin{theorem} \label{main}
There are algorithms solving $(\nu(F)+k)$-\textsc{SAT} in time  \\
% $29.557^k(n+m)^{O(1)}$.
$O((2e)^{2k + O(\log^2 k)} (n+m)^{O(1)})$ or expected time $O(8^{k + O(\sqrt{k})}(n+m)^{O(1)})$.
%\nksat{} is fixed-parameter tractable.
%   It is FPT to decide if a matched instance $F$ of SAT has $\sat(F) \geq |V(F)|+k$, where $k$ is the parameter.
\end{theorem}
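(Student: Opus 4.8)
The plan is to assemble the pieces that have already been established in Sections~\ref{sec:preproc}--\ref{sec:m-khs} into a single branching algorithm and then verify, via a potential-function argument, that its running time has the claimed form. The potential I would use is the measure $\mu(F) = \alpha - \nu(F)$, which equals the parameter $k$ on the original instance. The key claim driving the analysis is that every reduction rule leaves $\mu$ unchanged or decreases it, while each of the two branching rules splits the current instance into two subinstances each with $\mu$ strictly smaller, so that the search tree has depth at most $k$ and hence at most $2^k$ leaves.

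First I would set up the algorithm precisely: given $(F,\alpha)$, compute a maximum matching $M$ of $B_F$ (polynomial time), set $k = \alpha - |M|$, and return {\sc Yes} immediately if $k \le 0$ (since $\sat(F) \ge \nu(F) = |M| \ge \alpha$). Otherwise apply Reduction Rules~\ref{pureliteral}--\ref{k2} exhaustively in order; each is valid by Lemmata~\ref{lem:cor1}--\ref{lem:cor4}. When no reduction rule applies, apply Branching Rule~\ref{bran1} if some variable occurs at least twice positively and at least twice negatively; otherwise perform the literal-swap rearrangement and apply Branching Rule~\ref{bran2} if some clause has two positive literals; correctness of both branchings is Lemma~\ref{lem:cor5}. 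If none of these applies, the instance is special; pass to the transformed special instance, build the associated hypergraph $H^*$, and invoke the $(m-k)$-{\sc Hitting Set} algorithm, using Lemma~\ref{lem:newtransformation} to translate the answer back. Using Theorem~\ref{thm:fpt-specialinst} here gives the deterministic bound, and Theorem~\ref{thm:purerandom-specialinst} gives the randomized one.

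Next I would carry out the measure analysis rule by rule. For Reduction Rule~\ref{pureliteral}, removing $n(\bar x)$ clauses drops $\nu$ by at most $n(\bar x)$, and $\alpha$ drops by exactly $n(\bar x)$, so $\mu$ cannot increase; Rule~\ref{k1} is identical. For Rule~\ref{twooccurrence}, $\alpha$ decreases by $1$ and $\nu$ by at most $1$ (exhibiting the explicit modified matching $M \setminus \{(a,c'),(b,c'')\} \cup \{(a,c^*)\}$ when both $c',c''$ were matched), so again $\mu$ does not increase; Rule~\ref{k2} reduces to one of these two cases depending on whether $F[S]$ is satisfiable. For Branching Rule~\ref{bran1} with $x$ set, since no reduction rule applies $F$ is $2$-expanding, hence variable-matched, so $\nu(F) = |V(F)|$; the only edges of $B_F$ lost in the branch are those from clauses containing $x$ purely, so for every $S \subseteq V(F')$ one has $|N_{B'}(S)| \ge |S| + 2 - n(x)$, and Lemma~\ref{lem:dless} yields $\nu(F') \ge |V(F')| - (n(x) - 2) = |V(F)| - n(x) + 1$; combined with $\alpha' = \alpha - n(x)$ this gives $\mu(F') \le \mu(F) - 1$, and the $x=\textsc{false}$ branch is symmetric. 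Branching Rule~\ref{bran2} is analogous. Finally, a routine induction on $\mu$ — the search tree has $\le 2^k$ leaves, each internal step costs $(m+n)^{O(1)}$, and each leaf costs $O((2e)^{2\mu}\mu^{O(\log\mu)}(m+n)^{O(1)})$ by Theorem~\ref{thm:fpt-specialinst} — gives total running time $O((2e)^{2k+O(\log^2 k)}(n+m)^{O(1)})$, and substituting Theorem~\ref{thm:purerandom-specialinst} at the leaves gives the expected-time bound $O(8^{k+O(\sqrt k)}(n+m)^{O(1)})$.

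The main obstacle is the measure bookkeeping for Branching Rule~\ref{bran1}: one must argue carefully that exhaustive application of the reduction rules (in particular Rule~\ref{k2}) really does guarantee $2$-expansion, so that $\nu(F) = |V(F)|$ can be used, and then account exactly for which matching edges survive the simplification in each branch in order to apply Lemma~\ref{lem:dless} with the right deficiency $d = n(x) - 2$. The other delicate point is termination: since the reduction rules only guarantee $\mu$ is non-increasing (not strictly decreasing), one needs the secondary observation that every rule strictly decreases the number of variables, the number of clauses, or $\mu$, so the algorithm cannot loop; this must be stated explicitly to make the induction go through. Everything else — correctness of the rules, the hitting-set reduction, and the two $(m-k)$-{\sc Hitting Set} algorithms — is already available from the cited lemmata and theorems, so the proof is essentially an assembly plus this potential argument.
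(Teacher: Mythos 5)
Your proposal is correct and follows essentially the same route as the paper: the same measure $\mu(F)=\alpha-\nu(F)$, the same rule-by-rule verification that reductions do not increase $\mu$ while each branch decreases it by at least one (including the explicit modified matching for Rule~\ref{twooccurrence} and the $2$-expansion plus Lemma~\ref{lem:dless} argument with deficiency $n(x)-2$ for Branching Rule~\ref{bran1}), and the same $2^k$-leaf search tree capped by Theorems~\ref{thm:fpt-specialinst} and~\ref{thm:purerandom-specialinst}. Your explicit remark on termination (every rule strictly decreases the number of variables, clauses, or $\mu$) matches the observation the paper makes in its running-time analysis.
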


\section{Hardness of Kernelization}
\label{sec:nokern}

In this section, we show that \nksat{} does not have a polynomial-size kernel, unless coNP $\subseteq$ NP/poly.
To do this, we use the concept of a \emph{polynomial parameter transformation} \cite{BTY09,DLS09}:
 Let $L$ and $Q$ be parameterized problems. We say a polynomial time computable function $f: \Sigma^* \times \mathbb{N} \rightarrow \Sigma^* \times \mathbb{N}$ is a \emph{polynomial parameter transformation} from $L$ to $Q$ if there exists a polynomial $p: \mathbb{N} \rightarrow \mathbb{N}$ such that for any $(x,k)\in \Sigma^* \times \mathbb{N}, (x,k) \in L$ if and only if $f(x,k)=(x',k') \in Q$, and $k' \le p(k)$.

\begin{lemma}{\rm \cite[Theorem 3]{BTY09}}\label{ppt}
Let $L$ and $Q$ be parameterized problems, and suppose that $L^c$ and
$Q^c$ are the derived classical problems\footnote{The parameters of $L$ and $Q$ are no longer parameters in $L^c$ and
$Q^c$; they are part of input.}. Suppose that $L^c$ is NP-complete, and $Q^c \in$ NP.
 Suppose that $f$ is a polynomial parameter transformation from $L$
to $Q$. Then, if $Q$ has a polynomial-size kernel, then $L$ has a polynomial-size kernel.
\end{lemma}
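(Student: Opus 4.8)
The plan is to chain together the polynomial parameter transformation $f$, a hypothetical polynomial-size kernelization for $Q$, and the NP-hardness of $L^c$ in order to manufacture a polynomial-size kernelization for $L$. So I would begin by assuming that $Q$ admits a polynomial-size kernel: there is a polynomial-time algorithm $\mathcal{A}_Q$ and a polynomial $g$ such that $\mathcal{A}_Q$ maps any instance $(y,\ell)$ of $Q$ to an instance $(y',\ell')$ of $Q$ with $(y,\ell)\in Q$ if and only if $(y',\ell')\in Q$, and with $|y'|\le g(\ell)$ and $\ell'\le g(\ell)$.

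First I would, given an instance $(x,k)$ of $L$, apply $f$ to get $(\tilde x,\tilde k)=f(x,k)$ with $\tilde k\le p(k)$ and $(x,k)\in L$ iff $(\tilde x,\tilde k)\in Q$, and then apply $\mathcal{A}_Q$ to get $(y',\ell')$ with $|y'|\le g(\tilde k)\le g(p(k))$, $\ell'\le g(p(k))$, and $(\tilde x,\tilde k)\in Q$ iff $(y',\ell')\in Q$. Composing these two polynomial-time steps produces, in polynomial time, an instance $(y',\ell')$ of $Q$ whose total encoding size is bounded by a polynomial in $k$ and which is a {\sc Yes}-instance of $Q$ exactly when $(x,k)\in L$. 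Equivalently, the string $w:=y'\#1^{\ell'}$ satisfies $w\in Q^c$ iff $(x,k)\in L$, and $|w|$ is polynomial in $k$.

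Next I would transfer this back into $L$. Since $Q^c\in$ NP and $L^c$ is NP-complete, there is a polynomial-time many-one reduction $r$ with $u\in Q^c$ iff $r(u)\in L^c$ for all strings $u$. Set $z:=r(w)$; then $|z|$ is polynomial in $|w|$, hence polynomial in $k$, and $z\in L^c$ iff $(x,k)\in L$. Finally I would decode $z$ into an instance of $L$: recalling $L^c=\{\,v\#1^j : (v,j)\in L\,\}$, if $z$ has the syntactic form $v\#1^j$ I output the instance $(v,j)$ of $L$ (note $j\le|z|$ and $|v|\le|z|$, so both the instance size and the parameter are polynomial in $k$), and otherwise — which can only occur when $z\notin L^c$, i.e. when $(x,k)\notin L$ — I output a fixed trivial {\sc No}-instance of $L$. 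By construction the output is a {\sc Yes}-instance of $L$ if and only if $(x,k)\in L$, and its size is polynomial in $k$, so $L$ has a polynomial-size kernel.

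The step I expect to be the most delicate is the last one: the reduction $r$ need not land inside the ``image shape'' $v\#1^j$, so one must route syntactically malformed outputs to a canonical {\sc No}-instance, which uses the standard convention that $L$ is decidable (or at least possesses a trivial {\sc No}-instance); and one must confirm that the parameter of the decoded instance is genuinely bounded by a polynomial in $k$ and not merely by its bit-length — both points following from $|z|$ being polynomial in $k$. The remaining bookkeeping, that every step runs in polynomial time and preserves {\sc Yes}/{\sc No} answers along the chain $L \to Q \to Q^c \to L^c \to L$, is routine.
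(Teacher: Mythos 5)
Your argument is correct: it is the standard proof of this lemma, which the paper itself does not prove but imports directly from \cite[Theorem~3]{BTY09}, and your chain $L \to Q \to Q^c \to L^c \to L$ together with the care taken over malformed outputs of the reduction $r$ (routing them to a trivial {\sc No}-instance) and over bounding the decoded parameter by $|z|$ matches the argument in that source. No gaps.
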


The proof of the next theorem is similar to the proof of Lemma~\ref{lem:newtransformation}.
\begin{theorem}
\nksat{} has no polynomial-size kernel, unless coNP $\subseteq$ NP/poly.
\end{theorem}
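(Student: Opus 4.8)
The plan is to give a polynomial parameter transformation from $(m-k)$-\textsc{Hitting Set} to \nksat{} and then invoke Lemma~\ref{ppt}. This works because $(m-k)$-\textsc{Hitting Set} has no polynomial-size kernel unless NP $\subseteq$ coNP/poly (shown in \cite{GutinJY11}), its classical version is NP-complete, and \nksat{} is clearly in NP (guess an assignment). The transformation itself essentially reverses the construction in the proof of Lemma~\ref{lem:newtransformation}: starting from a hypergraph $H$ with $m$ edges and parameter $k$, I build a CNF formula $F$ whose satisfiable-clauses count encodes the minimum hitting set size of $H$.

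Concretely, first I would take the instance $(H,k)$ of $(m-k)$-\textsc{Hitting Set} with $V(H)=\{v_1,\ldots,v_n\}$ and $E(H)=\{e_1,\ldots,e_m\}$. For each vertex $v_i$ introduce a variable $x_i$, and for each hyperedge $e_j$ introduce a clause $c_j=\bigvee_{v_i\in e_j}\bar{x_i}$ consisting only of negated literals. Additionally, add $n$ unit clauses $(x_1),\ldots,(x_n)$ to force a matching saturating all variables; call the resulting formula $F$. Then $\nu(F)=n$ (the unit clauses give a saturating matching, and a matching cannot exceed $|V(F)|=n$), so the parameter of the produced \nksat{} instance will be $\alpha-n$. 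Setting all variables to \textsc{true} satisfies all $n$ unit clauses but no $c_j$; more generally, as in Lemma~\ref{lem:newtransformation}, an optimal assignment may be assumed to satisfy all of $c_1,\ldots,c_m$ by flipping any $x_i$ that is \textsc{true} in an unsatisfied $c_j$ — this only loses the unit clause $(x_i)$ while gaining $c_j$, so $\sat(F)=n+m-\tau(H)$ where $\tau(H)$ is the minimum hitting set size (the set of \textsc{false} variables is exactly a hitting set). Hence I set $\alpha:=n+m-(m-k)=n+k$, so that $\sat(F)\ge\alpha$ iff $\tau(H)\le m-k$, and the new parameter is $\alpha-\nu(F)=k$. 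This is a polynomial-time computation with $k'=k\le p(k)$, so $f$ is a polynomial parameter transformation, and Lemma~\ref{ppt} then yields that a polynomial-size kernel for \nksat{} would give one for $(m-k)$-\textsc{Hitting Set}, contradicting \cite{GutinJY11} unless coNP $\subseteq$ NP/poly.

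The main thing to get right, rather than a genuine obstacle, is the bookkeeping showing $\sat(F)=n+m-\tau(H)$ in both directions — in particular that there is always an optimal truth assignment satisfying every clause of $E(H)$, which is exactly the ``autarky-style'' swapping argument from the first paragraph of the proof of Lemma~\ref{lem:newtransformation}, and that the set of variables assigned \textsc{false} by such an assignment is precisely a hitting set of $H$. One should also check the hypotheses of Lemma~\ref{ppt}: that the classical problem underlying $(m-k)$-\textsc{Hitting Set} is NP-complete (it is, since ordinary \textsc{Hitting Set}/\textsc{Vertex Cover} reduces to it) and that the classical problem underlying \nksat{} lies in NP. Everything else is routine.
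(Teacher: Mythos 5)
Your proposal is correct and matches the paper's proof essentially verbatim: the same reduction from $(m-k)$-\textsc{Hitting Set} (variables for vertices, unit clauses $(x_i)$, all-negative clauses for hyperedges, $\alpha=n+k$), justified by the same swapping argument as in Lemma~\ref{lem:newtransformation}, followed by the same appeal to Lemma~\ref{ppt}. You in fact supply slightly more detail than the paper on verifying $\sat(F)=n+m-\tau(H)$ and the hypotheses of Lemma~\ref{ppt}.
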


\begin{proof}
By \cite[Theorem 3]{GutinJY11}, there is no polynomial-size kernel for the problem of deciding whether a hypergraph $H$ has a hitting set of size $|E(H)|-k$, where $k$ is the parameter unless coNP $\subseteq$ NP/poly.
We prove the theorem by a polynomial parameter reduction from this problem. Then the theorem follows from Lemma \ref{ppt}, as  ~\nksat ~ is NP-complete.

Given a hypergraph $H$ on $n$ vertices, construct a CNF formula $F$ as follows. Let the variables of $F$ be the vertices of $H$. For each variable $x$, let the unit clause $(x)$ be a clause in $F$. For every edge $e$ in $H$, let $c_e$ be the clause containing the literal $\bar{x}$ for every $x \in E$.
Observe that $F$ is matched, and that $H$ has a hitting set of size $|E(H)|-k$ if and only if $\sat(F) \ge n+k$.
\end{proof}
\section{Conclusion}
\label{sec: concl}
We have shown that for any CNF formula $F$, it is fixed-parameter tractable to decide if $F$ has a satisfiable subformula containing $\alpha$ clauses, where $\alpha - \nu(F)$ is the parameter. Our result implies fixed-parameter tractability for the problem of deciding satisfiability of $F$ when $F$ is variable-matched and $\delta(F)\le k$, where $k$ is the parameter.
%; this complements the results of Kullmann and Szeider for $\delta^*(F)\le k$.
In addition, we show that the problem does not have a polynomial-size kernel unless coNP $\subseteq$ NP/poly.

Clearly, parameterizations of {\sc MaxSat} above $m/2$ and $\nu(F)$ are ``stronger'' than the standard parameterization (i.e., when the parameter is the size of the solution). Whilst the two non-standard parameterizations have smaller parameter than the standard one, they are incomparable to each other as for some formulas $F$, $m/2<\nu(F)$ (e.g., for variable-matched formulas with $m<2n$) and for some formulas $F$, $m/2>\nu(F)$ (e.g., when $m>2n$). Recall that Mahajan and Raman \cite{MR99} proved that {\sc MaxSat} parameterized above $m/2$ is fixed-parameter tractable. This result and our main result imply that {\sc MaxSat} parameterized above $\max\{m/2,\nu(F)\}$ is fixed-parameter tractable: if $m/2>\nu(F)$ then apply the algorithm of \cite{MR99}, otherwise apply our algorithm.

% Note that by using autarkies, any cnf formula can be reduced in polynomial time to one in which $|M|=n$.
% So assuming a formula $F$ is reduced in this way, it is fixed-parameter tractable to decide whether $F$ is satisfiable, when the parameter is the deficiency of $F$.
% By also deciding the satisfiability of $F \backslash c$ for every clause $c \in F$, it is fixed-parameter tractable to decide whether $F$ is minimally unsatisfiable.
If every clause of a formula with $m$ clauses contains exactly two literals then it is well known that we can satisfy at least $3m/4$ clauses. From this, and by applying Reduction Rules \ref{pureliteral} and \ref{twooccurrence}, we can get a linear kernel for this version of the \nksat ~ problem.
It would be nice to see whether a linear or a polynomial-size kernel exists for the \nksat ~ problem if every clause has exactly $r$ literals.

\medskip

\paragraph{Acknowledgment}
This research was partially supported by an International Joint grant of the Royal Society.


\begin{thebibliography}{10}

\bibitem{AhaLin86} R. Aharoni and N. Linial, Minimal non-two-colorable hypergraphs and minimal
unsatisfiable formulas. {\em J. Combin. Th. Ser. A}, 43: 196--204, 1986.

\bibitem{AloGutKimSzeYeo11}
N.~Alon, G.~Gutin, E.~J. Kim, S.~Szeider, and A.~Yeo,
Solving {MAX}-$r$-{SAT} above a tight lower bound.
Algorithmica 61 (2011), 638--655.

\bibitem{AlonYZ95}
N.~Alon, R.~Yuster, and U.~Zwick.
\newblock Color-coding.
\newblock {\em J. ACM}, 42(4):844--856, 1995.


\bibitem{BDFH09} H.L.~Bodlaender, R.G.~Downey, M.R~Fellows, and D.~Hermelin, On problems without polynomial kernels,
{\em J. Comput. System Sci.}, 75(8): 423--434, 2009.

\bibitem{BTY09} H.L.~Bodlaender, S.~Thomasse, and A.~Yeo, Kernel bounds
for disjoint cycles and disjoint paths, Theor. Comput. Sci. 412(35): 4570--4578, 2011.



%\bibitem{CroGutJonRamSau12} R. Crowston, G. Gutin, M. Jones, V. Raman, and S. Saurabh,
%Parameterized complexity of {MaxSat} above average. {\em Proc. LATIN 2012},
%Lect. Notes Comput. Sci. 7256 (2012), 184--194.

\bibitem{CroGutJonYeo12} R. Crowston, G. Gutin, M. Jones, and A. Yeo, A new lower bound on the maximum number of
satisfied clauses in {Max-SAT} and its algorithmic applications. {\em Algorithmica} 64 (2012), 56--68.

\bibitem{DLS09} M.~Dom, D.~Lokshtanov and S.~Saurabh, Incompressibility though Colors and IDs,
{\em Proc. 36th ICALP, Part I}, Lect. Notes Comput. Sci. 5555: 378--389, 2009.

\bibitem{DowneyFellows99}
R.~G. Downey and M.~R. Fellows. Parameterized Complexity, Springer, 1999.


\bibitem{FominLRSR12}
F.~V.~Fomin, D.~Lokshtanov, V.~Raman, S.~Saurabh and B.~V.~R.~Rao.
\newblock Faster algorithms for finding and counting subgraphs.
\newblock {\em J. Comput. Syst. Sci.}, 78(3):698--706, 2012.



\bibitem{FleischnerKS02}
H.~Fleischner, O.~Kullmann, and S.~Szeider.
\newblock Polynomial-time recognition of minimal unsatisfiable formulas with
  fixed clause-variable difference.
\newblock {\em Theor. Comput. Sci.}, 289(1):503--516, 2002.

\bibitem{ECCC-TR00-049}
H.~Fleischner and S.~Szeider.
\newblock Polynomial-time recognition of minimal unsatisfiable formulas with
  fixed clause-variable difference.
\newblock {\em Electronic Colloquium on Computational Complexity (ECCC)},
  7(49), 2000.

  \bibitem{FlumGrohe06}
J.~Flum and M.~Grohe, Parameterized Complexity Theory, Springer Verlag, 2006.


\bibitem{GutinJY11}
G.~Gutin, M.~Jones, and A.~Yeo.
\newblock Kernels for below-upper-bound parameterizations of the hitting set
  and directed dominating set problems.
\newblock {\em Theor. Comput. Sci.}, 412(41):5744--5751, 2011.

\bibitem{GutJonYeo11} G. Gutin, M. Jones and A. Yeo, A New Bound for $3$-Satisfiable MaxSat and its Algorithmic Application.
{\em Proc. FCT 2011}, Lect. Notes Comput. Sci. 6914 (2011), 138--147.

%\bibitem{Halin76} R. Halin, $S$-functions for graphs, {\em J. Geometry} 8 (1976), 171--186.

\bibitem{HardyR18}
G.~H. Hardy and S.~Ramanujan.
\newblock Asymptotic formulae in combinatory analysis.
\newblock {\em Proc. London Math. Soc.}, 17:75--115, 1918.


\bibitem{Buning00}
H.~Kleine B{\"u}ning.
\newblock On subclasses of minimal unsatisfiable formulas.
\newblock {\em Discr. Appl. Math.}, 107(1-3):83--98, 2000.

\bibitem{BunKul09} H. Kleine B{\"{u}}ning and O. Kullmann,     Minimal Unsatisfiability and Autarkies,
    Handbook of Satisfiability, chapter 11, 339--401.

\bibitem{Kullmann00}
O.~Kullmann.
\newblock An application of matroid theory to the sat problem.
\newblock In {\em IEEE Conference on Computational Complexity}, pages 116--124,
  2000.

  \bibitem{Kul03} O. Kullmann,
    Lean clause-sets: Generalizations of minimally unsatisfiable clause-sets,
    {\em Discr. Appl. Math.},130:209-249,  2003.

\bibitem{LovaszP86}
L.~Lov{\'a}sz and M.~D. Plummer.
\newblock {\em Matching theory}.
\newblock AMS Chelsea Publ., 2009.

\bibitem{MR99}
M.~Mahajan and V.~Raman.
Parameterizing  above guaranteed values: MaxSat and MaxCut.
J. Algorithms 31(2) (1999), 335--354.

\bibitem{MahajanRamanSikdar09}
M.~Mahajan, V.~Raman, and S.~Sikdar.
Parameterizing above or below guaranteed values.
 J. Comput. Sys. Sci. 75(2):137--153, 2009.
Preliminary version in the 2nd IWPEC,
{\em Lect. Notes Comput.
Sci.} 4169 (2006), 38--49.

\bibitem{MS1985} B. Monien and E. Speckenmeyer. Solving satisfiability in less than $2^n$ steps.
{\em Discr. Appl. Math.} 10:287--295, 1985.

\bibitem{Niedermeier06}
R.~Niedermeier. Invitation to Fixed-Parameter Algorithms,
Oxford University Press, 2006.

\bibitem{NijenhuisW78}
A.~Nijenhuis and H.~S. Wilf. Combinatorial Algorithms. {\em Academic
  Press, Inc.} 1978.

\bibitem{PapadimitriouW88}
C.~H. Papadimitriou and D.~Wolfe.
\newblock The complexity of facets resolved.
\newblock {\em J. Comput. Syst. Sci.}, 37(1):2--13, 1988.

%\bibitem{RobSey84} N. Robertson and P.D. Seymour, Graph minors III: Planar tree-width, {\em J. Combin. Th., Ser. B} 36 (1): 49--64, 1984.

\bibitem{Srinivasan95}
A.~Srinivasan.
\newblock Improved approximations of packing and covering problems.
\newblock In {\em STOC'95}, pages 268--276, 1995.

\bibitem{Szeider04}
S.~Szeider.
\newblock Minimal unsatisfiable formulas with bounded clause-variable
  difference are fixed-parameter tractable.
\newblock {\em J. Comput. Syst. Sci.}, 69(4):656--674, 2004.

\end{thebibliography}
\end{document}